\documentclass[12pt]{article}
\usepackage{geometry, booktabs, subcaption}
\usepackage{titlesec}
\usepackage[T1]{fontenc}
\usepackage{authblk}
\usepackage{amssymb,amsmath, amsthm}
\usepackage{graphicx}
\geometry{
	a4paper,
	total={170mm,257mm},
	left=30mm,
	top=30mm,
	right=30mm,
	bottom=30mm,
}

\usepackage{graphicx}
\usepackage{amsmath}
\usepackage{placeins}
\usepackage{amsfonts}
\usepackage{amssymb}
\usepackage{natbib}
\usepackage{cleveref}
\usepackage{mathrsfs}
\usepackage{bbm}
\usepackage{dsfont}
\usepackage{color}
\usepackage{graphicx}
\allowdisplaybreaks
\usepackage{bm}
\usepackage{algorithm}
\usepackage{algpseudocode}
\usepackage{upgreek}
\usepackage{booktabs}
\usepackage{hhline}
\usepackage{array}
\usepackage{url}

\def\E{\mathds{E}}

\def\R{\mathds{R}}

\def\indicator{I}

\def\simiid{\stackrel{\mbox{\scriptsize{iid}}}{\sim}}
\newcommand{\ddr}{\mathrm{d}}
\newcommand{\prob}{\mathrm{Pr}}
\allowdisplaybreaks

\newtheorem{theorem}{Theorem}

\newtheorem{proposition}[theorem]{Proposition}
\newtheorem{lemma}{Lemma}

\def\eppf{\mathrm{eppf}}

\newcommand{\blind}{1}


\usepackage{xcolor}
\colorlet{red}{black}

\allowdisplaybreaks[3]

\title{A smoothed-Bayesian approach to frequency recovery from sketched data}

\author[1]{Mario Beraha}
\affil[1]{Department of Economics, Management and Statistics, University of Milano-Bicocca}

\author[2]{Stefano Favaro}
\affil[2]{Department of Economics and Statistics, University of Torino and Collegio Carlo Alberto}

\author[3]{Matteo Sesia}
\affil[3]{Department of Data Sciences and Operations, University of Southern California}

\begin{document}

\maketitle

\begin{abstract}
We provide a novel statistical perspective on a classical problem at the intersection of computer science and information theory: recovering the empirical frequency of a symbol in a large discrete dataset using only a compressed representation, or sketch, obtained via random hashing. Departing from traditional algorithmic approaches, recent works have proposed Bayesian nonparametric (BNP) methods that can provide more informative frequency estimates by leveraging modeling assumptions about the distribution of the sketched data. In this paper, we propose an alternative {\em smoothed-Bayesian} approach, inspired by existing BNP methods but designed to overcome their computational limitations when dealing with large-scale data from realistic distributions, including those with power-law tail behaviors.  For sketches obtained with a single hash function, our approach is supported by
precise theoretical guarantees, including unbiasedness and optimality under a squared error loss function within an intuitive class of linear estimators. For sketches with multiple hash functions, we introduce an approach based on \emph{multi-view} learning to construct computationally efficient frequency estimators. We validate our method on synthetic and real data, comparing its performance to that of existing alternatives.
\end{abstract}
\textbf{Keywords}:
cardinality recovery; frequency recovery; multi-view learning; nonparametric estimation; normalized random measures; random hashing; smoothed estimation; worst-case analysis.




\section{Introduction} \label{sec:intro}

\subsection{Background and motivation} \label{sec:background}

An interesting statistical problem born at the intersection of computer science and information theory is to recover the empirical frequency of an object in a large discrete dataset using a lossy compressed representation, or ``sketch". Sketches are central to many data science applications involving memory or privacy constraints, including real-time analysis, fast query processing, and scalable machine learning \citep{Cor(20)}. The count-min sketch (CMS) by \citet{Cor(05)}, reviewed in Appendix~\ref{app-cms}, is a popular algorithm that uses random hash functions to create a data sketch, providing a deterministic upper bound for any object's frequency. Additionally, it allows computing confidence intervals for an object's frequency by leveraging concentration inequalities that utilize the randomness in the hash functions while treating the data as fixed.

 A limitation of the CMS is that, as an algorithmic approach that treats the data as fixed, it may not lead to the most informative estimates when the data are random samples from a population \citep{ting2018count}. This issue has motivated the development of {\em learning-augmented} versions of the CMS, which apply optimization and machine learning techniques to  improve sketching algorithms, as well as frequentist and Bayesian statistical approaches to extract more informative estimates from sketches, leveraging modeling assumptions about the data distribution \citep{Cai(18), ting2018count,Hsu(19),Aam(19),bertsimas2021frequency, Ses(22), aamand2024improved, cao2024learning}.

Within the Bayesian framework, \citet{Cai(18)} pioneered a Bayesian nonparametric (BNP) approach to frequency recovery, assuming a Dirichlet process prior \citep{Fer(73)} for the data distribution and computing the posterior distribution of a new object's empirical frequency conditional on the sketch. Subsequently, \citet{Dol(21), Dol(23)} and \citet{Ber(23)}  extended this approach to more general prior distributions. While the BNP approach is effective under the Dirichlet process prior, it has two limitations. Firstly, it becomes very computationally expensive with more general prior distributions needed to describe important patterns often found in real data, such as power-law tails \citep{ferrer2001two,zipf2016human}. Secondly, it is challenging to apply to sketches obtained from multiple hash functions \citep{Cai(18), Dol(23)}.
See Appendix~\ref{relatedwork-bnp} for a more detailed review of the BNP approach and its computational challenges.

The current limitations of the BNP approach motivate us to develop a novel statistical method for frequency recovery from sketched data, by a single or multiple hash functions. Inspired by the BNP approach but integrating frequentist ideas, our method provides greater flexibility in modeling realistic data while enabling efficient large-scale applications.
Additionally, it enjoys desirable 
theoretical guarantees of unbiasedness and optimality.

\subsection{Problem statement}

To describe the frequency recovery problem, consider a data set $(x_{1},\ldots,x_{n})$, for $n \geq 1$, with the $x_{i}$'s taking values in a (possibly infinite) alphabet $\mathbb{S}$ of symbols.
Only a sketch of these data can be observed, obtained through {\em random hashing} \citep[Chapter 15]{Mit(17)}. For simplicity, we begin by focusing on a single hash function.

A hash function $h:\mathbb{S}\rightarrow [J] := \{1,\ldots,J\}$ with width $J \geq 1$ maps each symbol into one of $J$ {\em buckets}.
This function is assumed to be random and distributed as a  pairwise independent hash family $\mathscr{H}_{J}$. That is, $h: \mathbb{S} \to [J]$ such that $\text{Pr}[h(x_{1})=j_{1},\,h(x_{2})=j_{2}]=J^{-2}$ for any $j_1,j_2 \in [J]$ and fixed $x_1,x_2 \in \mathbb{S}$ such that $x_1 \neq x_2$. The pairwise independence of $\mathscr{H}_{J}$,  known as strong universality, implies uniformity, meaning that $\Pr[h(x)=j]=J^{-1}$ for all $j \in [J]$. Strong universality is a common and mathematically convenient assumption, although different settings may be also considered \citep{Chu(13)}. 

Hashing the data through $h$ produces a random vector $\mathbf{C}_{J}\in\mathbb{N}_{0}^{J}$, referred to as the sketch, whose $j$-th element $C_{j}$ is the number of $x_{i}$'s mapped into the $j$-th bucket:
\begin{displaymath}
C_{j} = \sum_{i=1}^{n} I[h(x_i) = j], \qquad \text{for all } j \in [J],
\end{displaymath}
where $I[\cdot]$ is the indicator function. Therefore, $\sum_{1\leq j\leq J} C_{j} = n$. 
By setting $J$ to be (much) smaller than the anticipated number of distinct symbols in $(x_{1},\ldots,x_{n})$, the sketch $\mathbf{C}_{J}$ is intended to have a (much) smaller memory footprint compared to the original sample.

The frequency recovery problem consists of estimating the number of occurrences of a symbol $x_{n+1}$ (the {\em query}) in the sample, i.e.,
\begin{displaymath}
f_{x_{n+1}} = \sum_{i=1}^{n}I[x_{i}=x_{n+1}],
\end{displaymath}
using only the information in $\mathbf{C}_{J}$, without looking at $(x_{1},\ldots,x_{n})$.
This task is challenging because distinct symbols may be mapped into the same bucket, an event known as a {\em hash collision}. 
When the sample size $n$ and the cardinality of $\mathbb{S}$ are both larger than $J$, as is typical in practice, hash collisions become numerous, making exact recovery of $f_{x_{n+1}}$ impossible. Fortunately, it is possible to obtain useful estimates of $f_{x_{n+1}}$ from $\mathbf{C}_{J}$, especially if we assume the data are a random sample from some discrete distribution $P$ on $\mathbb{S}$.

\subsection{Related works} \label{sec:related}

BNP approaches to the frequency recovery problem were pioneered by \citet{Cai(18)} and extended by \citet{Dol(21), Dol(23)} and \citet{Ber(23)}. These works noted that BNP approaches become computationally expensive beyond the Dirichlet process prior \citep{Dol(23), Ber(23)}, which is often too rigid to describe real data. Moreover, BNP approaches are challenging to extend to sketches obtained from multiple hash functions \citep{Cai(18), Dol(23)}, further limiting their applicability.

This paper draws inspiration from the aforementioned works while aiming to overcome their limitations by developing a practical and scalable {\em smoothed-Bayesian} method. Our approach addresses the computational challenges of existing BNP methods and provides greater flexibility to model realistic data. Additionally, this paper complements the recent works of \citet{Ses(22)} and \citet{Ses(23)}, who proposed a {\em distribution-free} frequentist approach based on conformal inference \citep{vovk2005algorithmic}. Their focus is on providing confidence intervals with finite-sample coverage guarantees for frequency estimates computed by any {\em black-box} model or algorithm. Combining their approach with ours can endow our smoothed-Bayesian method with frequentist uncertainty estimates.

\subsection{Main contributions}

Our first contribution is to introduce a class of smoothed-Bayesian estimators of the empirical frequency $f_{X_{n+1}}$. The terminology ``smoothed" stems from the seminal work of \citet{Goo(53)}, which studied the problem of estimating population frequencies by first producing an {\em oracle} estimator based on knowledge of the data-generating distribution, namely $P$, and then specifying a parametric family for that unknown (discrete) distribution.

In a similar spirit, we first consider the frequency recovery problem conditional on the distribution $P$, deriving an oracle estimator for $f_{X_{n+1}}$ denoted by $\varepsilon_f(P)$.
To relax the assumption that $P$ is known, we smooth this estimator by computing the expected value of $\varepsilon_f(P)$ with respect to a distribution $P \sim \mathscr P$, which can be interpreted as a prior distribution. We consider a broad class of normalized random measures \citep{Jam(02),Pru(02),Reg(03)}, for which the resulting smoothed estimator can be expressed as the expected value of a mixture of Binomial distributions. Concretely, we focus on $\mathscr{P}$ corresponding to the law of the Dirichlet process and the normalized generalized Gamma process  \citep{Jam(02),Pit(03),Lij(07)}---a flexible solution that allows the tail behaviour of $P$ to range from exponential tails to heavy power-law tails.

Our smoothed-Bayesian approach yields computationally efficient estimators that depend linearly on $C_{h(X_{n+1})}$. This is a key advantage over the BNP estimators reviewed in Section~\ref{sec:difficult_bayes}, which are typically intractable.
Our approach also enjoys desirable theoretical properties, including {\em unbiasedness} and {\em optimality} among linear estimators. Specifically, it minimizes mean squared error relative to the Bayes estimator, which, while optimal under a Bayesian framework, is generally intractable.
Furthermore, the smoothed-Bayesian and BNP estimators coincide when $\mathscr{P}$ is the law of the Dirichlet process, ensuring consistency with prior work in that computationally feasible special case. Crucially, however, our estimator does not require posterior evaluation, making it scalable beyond this setting.

Our second contribution extends the smoothed-Bayesian approach to deal with data sketched by multiple hash functions, in analogy with the CMS. This is a challenging problem, for which we propose a solution based on \emph{multi-view learning} \citep{Xu(13), Sha(18),Li(18b)}, leveraging by the fact that the smoothed-Bayesian framework leads not only to a principled {\em point estimate} but also to a {\em probability distribution} for $f_{X_{n+1}}$.
As we shall see, our multi-view learning method can recover the BNP approach of \citet{Cai(18)} in the special case of $P$ being a Dirichlet process, but it is more generally applicable.

\subsection{Organization of the paper}

Section~\ref{sec:setup-assumptions} introduces our framework and derives a closed-form expression for $\varepsilon_{f}(P)$ as a function of $P$. Section~\ref{sec:np-estim} presents a minimax analysis for the estimation of $f_{X_{n+1}}$, underscoring the need for prior (or smoothing) assumptions. 
Section~\ref{sec:difficult_bayes} reviews the difficulties associated with the BNP approach.
Section~\ref{sec:smoothed} presents our smoothed-Bayesian approach for a single hash function, and Section~\ref{sec:multi-view} extends it to multiple hash functions. Section~\ref{sec:numerics} validates our method empirically, comparing it to existing algorithmic and BNP solutions.
Section~\ref{sec:discussion} concludes with a discussion. The Appendices in the Supplementary Material contain proofs, additional methodological details, further comparisons, and numerical results.


\section{Preliminary results} \label{sec:setup}

\subsection{Statistical framework and problem setup} \label{sec:setup-assumptions}

We rely on the following assumptions. Firstly, $(x_{1},\ldots,x_{n})$ are modeled as a random sample $\mathbf{X}_{n}=(X_{1},\ldots,X_{n})$ from an unknown discrete distribution $P=\sum_{s\in\mathbb{S}}p_{s}\delta_{s}$ on $\mathbb{S}$, where $p_{s}$ is the (unknown) probability of $s\in\mathbb{S}$. Secondly, the strong universal hash family $\mathscr{H}_{J}$, from which $h$ is drawn, is independent of $\mathbf{X}_{n}$.
Formally, for any $n\geq1$, we can thus write:
\begin{align} \label{eq:exchangeable_model_hash}
\begin{split}
   X_1,\ldots,X_{n} &\,\simiid\, P, \\
  h&\,\overset{\text{ind}}{\sim}\,\mathscr{H}_{J}, \\
  C_{j} &\,=\, \sum_{i=1}^{n} I( h(X_i) = j), \qquad \forall j \in [J].
\end{split}
\end{align}
The next theorem provides the conditional distribution of $f_{X_{n+1}}$, given the sketch $\mathbf{C}_{J}$, the bucket $h(X_{n+1})$ in which $X_{n+1}$ is hashed, and $h$. This is a key component of our approach.

\begin{theorem}\label{teo_cond_prob}
For $n\geq1$, suppose $(x_{1},\ldots,x_{n})$ is a random sample $\mathbf{X}_{n}$ from \eqref{eq:exchangeable_model_hash} with corresponding sketch $\mathbf{C}_{J}=\mathbf{c}$, obtained using a fixed hash function $h$. If  $\mathbb{S}_{j}:=\{s\in\mathbb{S}\text{ : }h(s)=j\}$ and $q_{j}:=\prob[h(X_{i})=j \mid h]$ for any $j \in [J]$ and $i \in [n] := \{1,\ldots,n\}$, then the $X_{i}$'s hashed into the $j$-th bucket, i.e., $\{X_{i}\text{ : }h(X_{i})=j\}$, are i.i.d.~as $P_{j}=\sum_{s\in\mathbb{S}_{j}}\frac{p_{s}}{q_{j}}\delta_{s}$. Further, for $j \in [J]$ and $r \in \{0,1,\ldots,c_{j}\}$,
\begin{equation} \label{cond_prob}
\pi_{j}(r;P, h):= \prob[f_{X_{n+1}}=r  \mid \mathbf{C}_{J}=\mathbf{c},h(X_{n+1})=j, h]= {c_{j}\choose r}\sum_{s\in\mathbb{S}_{j}}\left(\frac{p_{s}}{q_{j}}\right)^{r+1}\left(1-\frac{p_{s}}{q_{j}}\right)^{c_{j}-r}.
\end{equation}
\end{theorem}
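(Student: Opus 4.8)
The plan is to derive the conditional distribution in two stages, exploiting the fact that conditioning on the hash function $h$ decouples the problem bucket-by-bucket. First I would observe that, conditionally on $h$, the event $\{h(X_i)=j\}$ has probability $q_j = \sum_{s \in \mathbb{S}_j} p_s$, and that given this event the law of $X_i$ restricted to bucket $j$ is exactly the renormalized distribution $P_j = \sum_{s\in\mathbb{S}_j}(p_s/q_j)\delta_s$. Since the $X_i$'s are i.i.d.\ from $P$ and $h$ is independent of $\mathbf{X}_n$, a standard thinning/colouring argument for multinomial-type sampling shows that, conditionally on the full allocation vector $\mathbf{C}_J = \mathbf{c}$ (equivalently, on which $X_i$'s fall into which bucket), the $c_j$ data points landing in bucket $j$ are themselves an i.i.d.\ sample of size $c_j$ from $P_j$, and the samples across distinct buckets are mutually independent. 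This is the first displayed claim of the theorem.

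Next I would compute $\pi_j(r;P)$. Condition additionally on $h(X_{n+1}) = j$; then $X_{n+1} \sim P_j$, independently of the $c_j$ points already in bucket $j$ (which, by the previous step, are i.i.d.\ $P_j$). Write $X_{n+1} = s$ with probability $p_s/q_j$ for $s \in \mathbb{S}_j$. Given $X_{n+1}=s$, the frequency $f_{X_{n+1}}$ counts how many of the $c_j$ points in bucket $j$ equal $s$; since each of those points independently equals $s$ with probability $p_s/q_j$, this count is $\mathrm{Binomial}(c_j, p_s/q_j)$. Hence
\begin{displaymath}
\prob[f_{X_{n+1}}=r \mid \mathbf{C}_J = \mathbf{c}, h(X_{n+1})=j]
= \sum_{s\in\mathbb{S}_j}\frac{p_s}{q_j}\binom{c_j}{r}\left(\frac{p_s}{q_j}\right)^{r}\left(1-\frac{p_s}{q_j}\right)^{c_j-r},
\end{displaymath}
which, pulling $\binom{c_j}{r}$ out of the sum and combining the two powers of $p_s/q_j$, is exactly \eqref{cond_prob}.

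The one point requiring genuine care — and what I expect to be the main obstacle — is justifying rigorously that conditioning on $\mathbf{C}_J = \mathbf{c}$ (rather than on the finer information of the exact multiset in each bucket) does not alter the within-bucket sampling law, and in particular that the query $X_{n+1}$, once we also condition on $h(X_{n+1})=j$, remains independent of the bucket-$j$ sample with law $P_j$. The cleanest way to handle this is to work conditionally on $h$ throughout: given $h$, the pair $(\text{bucket labels}, \text{values})$ of $X_1,\dots,X_{n+1}$ is an i.i.d.\ sequence, the bucket label of $X_i$ is a deterministic function of its value, and a symmetry/exchangeability argument (or a direct computation with the multinomial likelihood, cancelling the combinatorial factor $\binom{n}{c_1,\dots,c_J}$) shows that the conditional law of the values given the label counts factorizes over buckets into i.i.d.\ samples from the $P_j$'s. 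Since the resulting expression depends on $h$ only through $\mathbb{S}_j$ and $q_j$, and the final formula \eqref{cond_prob} is already written in terms of these, one does not even need to further integrate out $h$. I would also note the edge cases: if $q_j = 0$ the bucket is empty almost surely and the statement is vacuous, and $r$ ranges over $\{0,\dots,c_j\}$ because at most $c_j$ of the in-bucket points can coincide with $X_{n+1}$.
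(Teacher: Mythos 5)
Your proof is correct, and it takes a genuinely different route from the paper's. The paper proceeds by brute force through Bayes' rule: it introduces the event $B(n,r)=\{X_1=\cdots=X_r=X_{n+1},\,\{X_{r+1},\ldots,X_n\}\cap\{X_{n+1}\}=\emptyset\}$, computes the joint probability $\prob[f_{X_{n+1}}=r,\mathbf{C}_J=\mathbf{c},h(X_{n+1})=j]$ explicitly as a sum over $s\in\mathbb{S}_j$ of $p_s^{r+1}(1-p_s)^{n-r}$ times a multinomial probability for the remaining $n-r$ points, divides by $\prob[\mathbf{C}_J=\mathbf{c},h(X_{n+1})=j]$, and cancels the multinomial coefficients to arrive at \eqref{cond_prob}. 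You instead argue structurally: conditionally on $h$, the bucket label is a deterministic function of the value, so given the counts the within-bucket values factorize into independent i.i.d.\ samples from the $P_j$'s, the query is an independent draw from $P_j$, and the answer is a $P_j$-mixture of $\mathrm{Binomial}(c_j,p_s/q_j)$ laws; you also correctly flag (implicitly) that only in-bucket points can match $X_{n+1}$, and you correctly identify and resolve the one delicate step, namely passing from conditioning on the full allocation to conditioning on the coarser count vector. Your argument is shorter and more conceptual, and it makes the sufficiency of $C_{h(X_{n+1})}$ transparent; what the paper's heavier computation buys is reusability, since the same numerator-over-denominator strategy with explicit combinatorial coefficients is what extends to the multiple-hash setting of Theorem \ref{teo_cond_prob_multiple}, where the buckets induced by different hash functions are coupled and your clean per-bucket independence factorization no longer applies directly.
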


See Appendix~\ref{proof_teo_cond_prob} for the proof of Theorem~\ref{teo_cond_prob}. 
Note that $\mathbb{S}_{j}$, $q_j$, and $P_j$ all implicitly depend on the hash function $h$, which is treated as fixed here.
This result gives us the optimal ``oracle'' estimator $\varepsilon_{f}(P, h)$ of $f_{X_{n+1}}$, under squared error loss,  for given $P$ and $h$:
\begin{equation}\label{cond_prob_mean}
  \varepsilon_{f}(P, h)
  := \E\left[f_{X_{n+1}} \mid \mathbf{C}_{J}=\mathbf{c},h(X_{n+1})=j , h\right]
  =\sum_{r=0}^{c_{j}}r\pi_{j}(r;P,h)=c_{j}\sum_{s\in\mathbb{S}_{j}}\left(\frac{p_{s}}{q_{j}}\right)^{2}.
\end{equation}
This depends on the unknown distribution $P_{j}$, induced by $P$ through hashing  via $h$. 
From \eqref{cond_prob_mean}, it follows that the size $C_{j}$ of the bucket in which $X_{n+1}$ is hashed is a sufficient statistic to estimate $f_{X_{n+1}}$. 
Note that $\varepsilon_{f}(P, h)$ may be interpreted as measuring the diversity in the composition of the $j$-th bucket, because $\sum_{s\in\mathbb{S}_{j}}(p_{s}/q_{j})^{2}$ represents the probability that two randomly chosen elements from the $j$-th bucket correspond to the same symbol.

\subsection{The necessity of prior assumptions} \label{sec:np-estim}

To demonstrate the necessity of introducing modeling assumptions about the data distribution $P$ for obtaining an informative estimate of $f_{X_{n+1}}$, we begin by studying the frequency recovery problem from a frequentist minimax perspective.
This study is inspired by \cite{Pai(22c),Pai(22a),Pai(22b)}, which conducted similar worst-case minimax analyses in the different context of missing mass estimation.

Consider a class of {\em linear} estimators $\hat f_\beta = c_j \beta_{j}$, for $\beta_{j} \geq 0$. This includes the oracle estimator $\varepsilon_f(P, h)$ in \eqref{cond_prob_mean} for $\beta_{j} = \sum_{s \in \mathbb S_j}( p_s / q_j)^2$. The corresponding quadratic risk is:
\begin{equation}\label{eq:risk_freq}
    R(\hat f_{\beta}; P, h) = \E_{P} [(\beta_{h(X_{n+1})}C_{h(X_{n+1})} - f_{X_{n+1}} )^2 \mid h],
\end{equation}
which depends on the unknown data distribution $P$. 

In this analysis, we aim to minimize an upper bound $\tilde{R}(\hat f_{\beta}, h) \geq R(\hat f_{\beta}; P, h)$ for all $P$ within a suitable family $\mathcal{P}$. This leads to a \emph{worst-case optimal} estimator $\beta$ that solves $\min_\beta \tilde{R}(\hat f_{\beta}, h)$. 
We focus on a broad class of distributions with a bounded number of support points, defined as $\mathcal{P}_L := \{P : P \text{ has at most $L$ support points} \}$. While the technical details of this analysis are involved and deferred to Appendix~\ref{app:frequency-recovery-worstcase}, the result is easy to understand.

\begin{theorem}[Informal statement] \label{theorem:worst-case-informal}
For $n\geq1$, suppose $(x_{1},\ldots,x_{n})$ is a random sample $\mathbf{X}_{n}$ from \eqref{eq:exchangeable_model_hash}, with corresponding sketch $\mathbf{C}_{J}=\mathbf{c}$.
For a fixed hash-function $h$, if $L$ is large enough, the worst-case optimal estimator of $f_{X_{n+1}}$, over the class $\mathcal P_L$ for $\tilde{R}$,
is $ \hat f_\beta \equiv C_{h(X_{n+1})}$. Moreover, the upper bound $\tilde{R}(\hat f_{\beta}, h)$ is tight and it is achieved by $P \equiv P^{\ast}$, where $P^{\ast}$ is a degenerate distribution that places all probability mass on a single symbol $s^* \in \mathbb S$.
\end{theorem}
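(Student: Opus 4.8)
The plan is to work with the quadratic risk \eqref{eq:risk_freq} in raw‑moment form and to exploit the deterministic inequality $0 \le f_{X_{n+1}} \le C_{h(X_{n+1})} \le n$, which holds because every copy of $X_{n+1}$ is necessarily hashed into the bucket $h(X_{n+1})$ (this is precisely the property behind the CMS upper bound). Writing $C := C_{h(X_{n+1})}$ and $f := f_{X_{n+1}}$, expanding the square gives $R(\hat f_\beta; P) = \beta^2\,\E_P[C^2] - 2\beta\,\E_P[Cf] + \E_P[f^2]$, a convex parabola in $\beta$ in which $P$ enters only through the moment triple $(\E_P[C^2], \E_P[Cf], \E_P[f^2])$; by Theorem~\ref{teo_cond_prob} and \eqref{cond_prob_mean} these moments are explicit functionals of $P$ and of the law of $\mathscr{H}_J$, with $\E_P[f \mid \mathbf C_J, h(X_{n+1})] = \varepsilon_f(P) = C\gamma$ for some $\gamma = \gamma(P)\in(0,1]$. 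The goal is to bound $\sup_{P\in\mathcal P_L} R(\hat f_\beta; P)$ by a tractable envelope $\tilde R(\hat f_\beta)$ and then minimize it over $\beta\ge 0$.

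For the envelope I would reduce the supremum over $\mathcal P_L$ to an optimization over the achievable region of the moment triple. The constraints $0\le f\le C\le n$ give $0\le \E_P[f^2]\le \E_P[Cf]\le \E_P[C^2]\le n^2$ together with $\E_P[Cf]^2\le \E_P[C^2]\,\E_P[f^2]$, while the multinomial law \eqref{dist_sketch} of $\mathbf C_J$ and strong universality of $\mathscr{H}_J$ control the moments of $C$. The key structural point is that the two dangerous contributions to the risk are maximized by opposite types of laws: using the decomposition $\beta C - f = \beta(C-f) - (1-\beta)f$, the under‑estimation term $(1-\beta)^2\,\E_P[f^2]$ is largest for concentrated $P$ --- indeed $\E_P[f^2]=n^2$ is attained only by a point mass, for which $C=f=n$ surely --- whereas the collision term $\beta^2\,\E_P[(C-f)^2]$ is largest for laws spread over many symbols. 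Bounding each piece by its worst case over $\mathcal P_L$, and invoking the hypothesis that $L$ is large enough to make the relevant worst‑case configurations available inside $\mathcal P_L$, yields an explicit upper bound $\tilde R(\hat f_\beta)$, convex in $\beta$.

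Minimizing the convex envelope over $\beta\ge 0$ then identifies $\beta=1$, i.e. $\hat f_\beta \equiv C_{h(X_{n+1})}$, as the worst‑case optimal choice: for any $\beta<1$ the under‑estimation term already forces $\tilde R$ above its value at $\beta=1$, while $\beta>1$ only inflates every term. Tightness is then immediate: evaluating the risk at the degenerate law $P^\ast = \delta_{s^\ast}$ gives $C = f = n$ almost surely, hence $R(\hat f_\beta; P^\ast) = n^2(1-\beta)^2$, which matches the envelope and certifies that it cannot be improved --- so the supremum in the worst‑case analysis is attained at $P^\ast$.

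The main obstacle I anticipate is the moment‑region reduction: one must show rigorously that no law in $\mathcal P_L$ makes the under‑ and over‑estimation contributions simultaneously large, i.e. that driving $P$ toward a point mass to inflate $\E_P[f^2]$ up to $n^2$ forces the collision variance $\E_P[(C-f)^2]$ down quickly enough, and that, conversely, spreading $P$ out cannot push $\tilde R$ above its value at $\beta=1$. This is essentially a least‑favorable‑prior argument requiring a careful continuity/compactness analysis over the (non‑convex) geometry of $\mathcal P_L$ and a quantitative handle on the moments of $C$ under strong universality --- which is why the precise statement, with its conditions on $\mathbb S$ and on $L$, is deferred to Appendix~\ref{app:frequency-recovery-worstcase}.
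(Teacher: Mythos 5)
Your setup (expanding the quadratic risk, noting $0\le f_{X_{n+1}}\le C_{h(X_{n+1})}$, and checking that the degenerate law gives risk $n^2(1-\beta)^2$, minimized at $\beta=1$) is fine, but the central step of your argument has a genuine gap. You build the envelope by splitting $\beta C - f = \beta(C-f) - (1-\beta)f$ and bounding the ``collision'' piece and the ``under-estimation'' piece \emph{separately} by their individual worst cases over $\mathcal P_L$. That yields $\tilde R(\hat f_\beta)\le \beta^2 M_1 + (1-\beta)^2 n^2$ with $M_1:=\sup_{P\in\mathcal P_L}\E_P[(C-f)^2]>0$, and this envelope does not do what you claim: its minimizer over $\beta$ is $n^2/(n^2+M_1)<1$, not $\beta=1$, and it is \emph{not} matched by $P^\ast=\delta_{s^\ast}$, since at $P^\ast$ the collision piece vanishes and $R(\hat f_\beta;P^\ast)=n^2(1-\beta)^2<\tilde R(\hat f_\beta)$ for every $\beta>0$. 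The difficulty you flag at the end (``no law makes both pieces simultaneously large'') would not repair this: even granting it, a bound obtained by taking the two suprema separately is intrinsically unattainable by a single $P$, so the tightness claim and the identification of $\beta=1$ both fail at this level of the argument.

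The paper's proof (Lemma \ref{teo:worst_case_f}, Appendix \ref{app:frequency-recovery-worstcase}) avoids this by never discarding the cross term. It expands the risk exactly into the power sums $\sum_j q_j^2$, $\sum_j q_j^3$, $\sum_s p_s^2$, $\sum_s p_s^3$, and then uses the bounded-support hypothesis to \emph{lower-bound the cross term}: $\sum_s p_s^2\ge 1/(KJ)$ and $\sum_j q_j\sum_{s\in\mathbb S_j}p_s^2\ge \sum_j q_j^3/K$ when each $P_j$ has at most $K$ support points. This produces the bound \eqref{eq:risk_freq_ub}, in which the bucket term carries the coefficient $\beta^2-2\beta/K$; the role of ``$L$ (or $K$) large enough'' is precisely to make this coefficient positive, so that the maximum of the bound over the feasible region $\Omega$ is attained at $A_J=B_J=C=D=1$, i.e.\ at the degenerate $P^\ast$, and only then is $\beta$ optimized (at $P^\ast$ the risk is $n^2(\beta-1)^2$, giving $\beta=1$). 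In your proposal the largeness of $L$ is invoked only to ensure the worst-case configurations lie in $\mathcal P_L$, which is not where the assumption is actually needed. To fix your argument you would have to keep $-2\beta\,\E_P[Cf]$ and exploit the support constraint to relate it to the same bucket quantities appearing in $\E_P[C^2]$, which is exactly the paper's route.
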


A rigorous statement of Theorem~\ref{theorem:worst-case-informal} is presented in Appendix~\ref{app:frequency-recovery-worstcase}, along with its proof.
At first sight, this result may seem disappointing because the worst-case optimal estimator is identical to the original CMS upper bound (see Appendix~\ref{app-cms}), and that is often inaccurate because it does not explicitly account for possible hash collisions. 
However, Theorem~\ref{theorem:worst-case-informal} is interesting because the tightness of the analysis highlights the inherent complexity of our frequency recovery problem. In fact, this result tells us that it is impossible to obtain a worst-case estimator that is more informative than the classical CMS upper bound. 

Similar conclusions can also be obtained from an alternative (more classical) minimax analysis that consists of solving $\inf_{\beta} \sup_{P \in \mathcal{P}} R(\hat f_{\beta}; P, h)$,  for a given $h$, where $\mathcal{P}$ is an appropriate family of discrete distributions that will be specified below.
In this case, analytical computations are not feasible but Appendix~\ref{app:minimax} presents a numerical investigation that leads to an equally unsatisfactory estimator.
This minimax estimator is generally uninformative because it always tends to 0 as the latent support size $L$ of the data distribution grows, irrespective of the information contained in the data sketch.

In conclusion, the minimax analyses demonstrate that to obtain informative estimates of $f_{X_{n+1}}$, some assumptions about the data distribution $P$ are needed.
This motivates the BNP approaches developed by prior works, which, however, have their own limitations.

\subsection{The limitations of BNP approaches}\label{sec:difficult_bayes}

In the BNP framework of \citet{Ber(23)}, the model in \eqref{eq:exchangeable_model_hash} is complemented with a prior for $P$, i.e., $P \sim \mathscr P$, an then inference is carried out through the {\em full-sketch} posterior distribution of $f_{X_{n+1}}$ given $\bm C_J$ and $h(X_{n+1})$, i.e.,
\begin{equation}\label{eq:full_post}
 \pi^{F}_j(r) = \frac{\E_{P\sim\mathscr{P}, h\sim\mathscr{H}_J} \left[ \prob[f_{X_{n+1}} = r, \mathbf C_j = \mathbf c, h(X_{n+1}) = j \mid P, h] \right] }{\E_{P\sim\mathscr{P}, h\sim\mathscr{H}_J} \left[ \prob[\mathbf C_j = \mathbf c, h(X_{n+1}) = j \mid P, h] \right]}.
\end{equation}
Theorem 2.2 in \citet{Ber(23)} provides an explicit expression for \eqref{eq:full_post} for a broad class of priors $\mathscr{P}$. However, beyond the Dirichlet process prior, the computational cost of evaluating $\pi^{F}_j(r)$ scales exponentially with $J$ and $n$, highlighting a fundamental challenge in BNP estimation. 
For further discussion on why these BNP estimators are unfeasible in their exact form and difficult to approximate using Monte Carlo methods, see Appendix~\ref{relatedwork-bnp}.

The earlier BNP approaches of \citet{Cai(18)} and \citet{Dol(21), Dol(23)} are somewhat simpler, as they rely on a {\em single-bucket} posterior distribution, $\pi_j^S(r)$, which conditions only on $C_{h(X_{n+1})}$ rather than the full sketch $\mathbf{C}_{J}$. This aligns more closely with the original CMS, which estimates $f_{X_{n+1}}$ using only the information in $C_{h(X_{n+1})}$. 
As shown by \citet{Ber(23)}, $\pi^F$ and $\pi^S$ coincide if and only if $\mathscr{P}$ is the Dirichlet process. While $\pi^S$ is computationally simpler than $\pi^F$, it remains impractical for large-scale applications. For instance, under a Pitman-Yor process prior, the cost of computing $\pi^S$ scales quadratically with $n$ \citep{Dol(23)}. 
Appendix~\ref{app:bnp_cost} provides a more detailed discussion of computational costs. Notably, computing $\pi^S$ is several orders of magnitude slower than our smoothed estimators, while computing $\pi^F$ is infeasible in any realistic setting. This fundamental lack of scalability is the primary limitation of BNP approaches.

The second limitation of BNP approaches is that they struggle to deal with multiple hash functions.
Consider that each datum is passed through $M$ independent hash functions $h_1, \ldots, h_M$, resulting in a sketch $\mathbf{C}_{M, J} \in \mathbb{N}_0^{M \times J}$, a matrix with entries $C_{m, j} = \sum_{i=1}^n \mathbf{1}[h_m(x_i) = j]$. In this case, \citet{Cai(18)} assume that the posterior distribution of $f_{X_{n+1}}$ given $(C_{1, h_1(X_{n+1})}, \ldots, C_{M, h_M(X_{n+1})})$ is proportional to the product of the single-bucket posteriors computed using each row of the sketch separately, motivated by the independence of the hash functions. However, this assumption relies not just on the independence of the hash functions but on the independence of the different rows of $\mathbf{C}_{M, J}$. 
Unfortunately, this independence property does not hold in general, as explained in Section~\ref{sec:ex_multihash}.


These limitations of BNP approaches, combined with the necessity of using prior assumptions on the data distribution $P$ highlighted in Section~\ref{sec:np-estim}, motivate our proposal of the novel {\em smoothed-Bayesian} approach presented in the next section.


\section{Smoothed frequency estimation} \label{sec:smoothed}

\subsection{Outline of our approach}

The smoothed-Bayesian estimators presented in this section are designed to incorporate prior information naturally, similar to Bayesian approaches, but without the same computational issues.
The ideas underlying our approach date back to \cite{Goo(53)}, which first introduced the idea of {\em smoothing an oracle estimator}, such as that in~\eqref{cond_prob_mean}. 
In the context of frequency recovery from sketched data, smoothed estimation can lead to informative approximations of $f_{X_{n+1}}$ by leveraging assumptions about the distribution $P$ on $\mathbb{S}$. 
The main difficulty, however, is that the oracle estimator $\varepsilon_{f}(P,  h)$ in \eqref{cond_prob_mean} depends on an intrinsically unobservable quantity, namely the distribution $P_{j}$ on $\mathbb{S}_{j}$ induced by the hashing procedure.

Therefore, our approach requires specifying $P$ carefully to ensure that the bucket-restricted distributions $P_{j}$ induced from $P$ are well-defined and mathematically tractable. This problem is challenging to address with parametric assumptions on $P$ directly, but it can be solved using suitable nonparametric assumptions, as discussed next.

Our solution, inspired by existing BNP approaches, is to treat $P$ as a {\em random} element from the space of discrete probability distributions on $\mathbb{S}$, modeling its distribution $\mathscr{P}$ through a low-dimensional smoothing parameter. Specifically, we consider the broad class of normalized random measures (NRMs) \citep{Jam(02),Pru(02),Pit(03),Reg(03)}. This choice is inspired by prior work on BNP estimators for $f_{X_{n+1}}$ \citep{Cai(18),Dol(21),Dol(23)}, and it leads to an explicit expression for the expected value of $\varepsilon_{f}(P)$ with respect to the smoothing distribution $\mathscr{P}$, which gives us a practical estimator of $f_{X_{n+1}}$. Further, NRMs allow for the empirical estimation of the smoothing parameters from the sketch $\mathbf{C}_{J}$, which, similar to the prior's parameters in BNP approaches, can significantly impact posterior inferences \citep{giordano2023evaluating}; see Appendix~\ref{sec:estimation} for details on the estimation of the smoothing parameters.
Until then, we will assume that the smoothing parameters are known.

\subsection{Background on normalized random measures}\label{sec:background_nrm}

Consider a purely atomic completely random measure (CRM) $\tilde{\mu}$ on $\mathbb{S}$, $\tilde{\mu}(\cdot) = \sum_{k \geq 1} J_k \delta_{S_k}(\cdot)$ with L\'evy intensity measure $\nu(\ddr x, \ddr s)$ on $\R_+ \times \mathbb S$ \citep{Kin(67),Kin(93)}.
We consider L\'evy intensities of the form $\nu(\ddr x,\,\ddr s) = \theta G_0(\ddr s) \rho(x) \ddr x$ where $\theta > 0$ is a parameter, $G_0$ is a non-atomic probability measure on $\mathbb{S}$, and $\rho(x) \ddr x$ is a measure on $\R_+$ such that $\int_{\R_+} \rho(x) \ddr x = +\infty$ and $\psi(u) = \int_{\R_+}(1 - e^{-ux}) \rho(x) \ddr x < +\infty$ for all $u>0$, ensuring the total mass $T=\tilde{\mu}(\mathbb{S})$ of $\tilde{\mu}$ is positive and almost-surely finite \citep{Pit(03),Reg(03)}.
Then, we say that a distribution $P$ on $\mathbb{S}$ is an NRM with parameter $(\theta, G_0,\rho)$, in short $P\sim\text{NRM}(\theta,G_{0},\rho)$, if 
$P(\cdot)= \tilde{\mu}(\cdot)/T=\sum_{k\geq1} (J_{k}/T) \delta_{S_{k}}(\cdot)$,
where the distribution of the (random) probabilities $(J_{k}/T)_{k\geq1}$ is directed by $\rho$, and the locations $(S_{k})_{k\geq1}$ are i.i.d.~from $G_{0}$, independent of $(J_{k}/T)_{k\geq1}$.

\subsection{Smoothed estimation with normalized random measures} \label{sec:smoothed-nrm}

We leverage smoothing assumptions for the distribution $P$ as follows. 
We consider $P\sim\text{NRM}(\theta,G_{0},\rho)$, and estimate $f_{X_{n+1}}$ by taking the expected value of $\varepsilon_{f}(P,  h)$ with respect to the law of $P$  and the distribution of $h$. It follows from \eqref{cond_prob_mean} that, to compute the expectation of $\varepsilon_{f}(P, h)$, it suffices to evaluate
\begin{equation}\label{eq:smoothed_proba}
    \pi_{j}(r) := \E_{P\sim\text{NRM}(\theta,G_{0},\rho), h \sim \mathscr H_J}[\pi_{j}(r;P, h)],
\end{equation}
for all $j \in [J]$. Each $\pi_{j}(r)$ can be computed by exploiting the Poisson process representation of NRMs and the Poisson coloring theorem \citep[Chapter 5]{Kin(93)}. This is achieved by the following {\em restriction property} of NRMs.

Suppose $P$ is a NRM and, for any Borel set $A\in\mathcal{S}$, let $P_{A}$ denote the random probability measure on $A$ induced by $P\sim\text{NRM}(\theta,G_{0},\rho)$; i.e., the renormalized restriction of $P$ to the set $A$.
Then, $P_{A}\sim\text{NRM}(\theta G_{0}(A),G_{0,A}/G_{0}(A),\rho)$, where $G_{0,A}$ is the restriction of the probability measure $G_{0}$ to $A$. 
This property of NRMs is critical to compute $\pi_{j}(r)$.

\begin{theorem} \label{teo_smooth}
For $n\geq1$, let $(x_{1},\ldots,x_{n})$ be a random sample $\mathbf{X}_{n}$ from \eqref{eq:exchangeable_model_hash}, and let $\mathbf{C}_{J}=\mathbf{c}$ be the corresponding sketch. If $P\sim\text{NRM}(\theta,G_{0},\rho)$, then for any $j \in [J]$ and $r \in \{0,1,\ldots,c_{j}\}$, 
\begin{equation}\label{mass_nrm}
\pi_{j}(r)=\int_{0}^{1}\text{Binomial}(r;c_{j},v)f_{V_{j}}(v)\ddr v,
\end{equation}
where
\begin{equation}\label{struct}
f_{V_{j}}(v)=\frac{\theta}{J}v\int_{0}^{+\infty}t\rho(tv)f_{T_{j}}(t(1-v))\ddr t.
\end{equation}
Above, $f_{T_{j}}$ denotes the density function of the total mass of a CRM with L\'evy intensity $\theta G_{0,\mathbb{S}_{j}}(\ddr s) \rho(x) \ddr x$.
\end{theorem}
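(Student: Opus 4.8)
The plan is to establish~\eqref{mass_nrm}--\eqref{struct} by combining Theorem~\ref{teo_cond_prob}, the restriction property~\eqref{restr_pj} of NRMs, and Palm calculus for Poisson random measures, closing with an elementary change of variables. First I would note that $\pi_{j}(r;P)$ in~\eqref{cond_prob} depends on $P$ only through the bucket-restricted measure $P_{j}$: writing $P_{j}=\sum_{k\geq1}(J_{k}/T_{j})\delta_{S_{k}}$, where $(J_{k},S_{k})_{k\geq1}$ enumerates the atoms of the CRM $\tilde{\mu}_{j}$ underlying $P_{j}$ and $T_{j}=\tilde{\mu}_{j}(\mathbb S_{j})=\sum_{k\geq1}J_{k}$ is its total mass, the ratio $p_{s}/q_{j}$ at $s=S_{k}$ equals $J_{k}/T_{j}$, whence $\pi_{j}(r;P)=\binom{c_{j}}{r}\sum_{k\geq1}(J_{k}/T_{j})^{r+1}(1-J_{k}/T_{j})^{c_{j}-r}$. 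By~\eqref{restr_pj}, when $P\sim\text{NRM}(\theta,G_{0},\rho)$ the measure $\tilde{\mu}_{j}$ is a CRM with L\'evy intensity $\theta\,G_{0,\mathbb S_{j}}(\ddr s)\rho(x)\ddr x$, so that $T_{j}$ has density $f_{T_{j}}$ and $G_{0}(\mathbb S_{j})=1/J$; consequently $\pi_{j}(r)=\E[\pi_{j}(r;P)]$ becomes an expectation over the Poisson random measure $\tilde N_{j}=\sum_{k\geq1}\delta_{(J_{k},S_{k})}$ with intensity $\theta\,G_{0,\mathbb S_{j}}(\ddr s)\rho(x)\ddr x$.

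The core step is to evaluate the expected sum over the atoms of $\tilde N_{j}$. Since every summand is non-negative, I would apply Tonelli's theorem together with the Mecke (Campbell--Palm) formula for Poisson random measures: setting $g(u)=\binom{c_{j}}{r}u^{r+1}(1-u)^{c_{j}-r}$,
\[
\E\!\left[\sum_{k\geq1}g\!\left(\frac{J_{k}}{T_{j}}\right)\right]=\int_{\mathbb S_{j}}\!\int_{\R_+}\E\!\left[g\!\left(\frac{x}{x+T_{j}}\right)\right]\theta\,G_{0,\mathbb S_{j}}(\ddr s)\,\rho(x)\,\ddr x=\frac{\theta}{J}\int_{\R_+}\E\!\left[g\!\left(\frac{x}{x+T_{j}}\right)\right]\rho(x)\,\ddr x,
\]
where inside the right-hand expectation $T_{j}\sim f_{T_{j}}$ is independent of $x$: adding a Palm atom of size $x$ raises the total mass by $x$, while the total mass of the rest of the (Poisson) configuration keeps its unconditional law. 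Writing this expectation as an integral against $f_{T_{j}}$ gives
\[
\pi_{j}(r)=\binom{c_{j}}{r}\frac{\theta}{J}\int_{\R_+}\!\int_{\R_+}\rho(x)\left(\frac{x}{x+t}\right)^{r+1}\left(\frac{t}{x+t}\right)^{c_{j}-r}f_{T_{j}}(t)\,\ddr t\,\ddr x.
\]

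Finally, I would change variables $(x,t)\mapsto(v,u)$ with $v=x/(x+t)\in(0,1)$ and $u=x+t\in\R_+$, i.e.\ $x=uv$, $t=u(1-v)$, $\ddr x\,\ddr t=u\,\ddr u\,\ddr v$, and regroup the integrand to obtain
\[
\pi_{j}(r)=\int_{0}^{1}\binom{c_{j}}{r}v^{r}(1-v)^{c_{j}-r}\left[\frac{\theta}{J}\,v\int_{\R_+}u\,\rho(uv)\,f_{T_{j}}\!\big(u(1-v)\big)\,\ddr u\right]\ddr v,
\]
which is exactly~\eqref{mass_nrm} with $f_{V_{j}}$ given by~\eqref{struct} (after renaming $u$ as $t$), since $\binom{c_{j}}{r}v^{r}(1-v)^{c_{j}-r}=\text{Binomial}(r;c_{j},v)$. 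As a consistency check, summing over $r\in\{0,\ldots,c_{j}\}$ and using $\sum_{r=0}^{c_{j}}\pi_{j}(r;P)=1$ confirms that $f_{V_{j}}$ integrates to $1$.

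The main obstacle is the Palm-calculus step: rigorously justifying that the expectation of a sum of terms in which a single jump $J_{k}$ is normalized by the sum $T_{j}=\sum_{k'}J_{k'}$ of all (infinitely many) jumps disintegrates as above, with the ``background'' total mass retaining its unconditional law $f_{T_{j}}$. This rests on the Mecke equation for Poisson random measures---equivalently, on the fact that the reduced Palm distribution of a Poisson process coincides with its law (Slivnyak's theorem)---together with a check, immediate here from non-negativity since the total equals a probability, that summation, expectation and integration may be interchanged. The remaining ingredients, namely the reduction via~\eqref{restr_pj} and the change of variables, are routine bookkeeping.
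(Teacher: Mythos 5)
Your proof is correct, and its skeleton matches the paper's: reduce to the bucket-restricted measure $P_{j}$ via the restriction property \eqref{restr_pj}, then recognize $\pi_{j}(r)$ as a Binomial mixture against the structural (size-biased) distribution of $P_{j}$. The difference is in how the key identity is obtained. The paper's proof is essentially a two-line citation: it invokes Equation (11) of Pitman (2003) for the identity $\E\bigl[\sum_{k}g(\tilde p_{k})\bigr]=\E\bigl[g(V_{j})/V_{j}\bigr]$ and Lemma 1 (Equation (25)) of the same paper for the density \eqref{struct} of the size-biased pick under a Poisson--Kingman/NRM model. You instead re-derive both facts from scratch: the Mecke--Slivnyak formula for the Poisson random measure with intensity $\theta G_{0,\mathbb{S}_{j}}(\ddr s)\rho(x)\ddr x$ turns the expected atom-sum into $\tfrac{\theta}{J}\int_{\R_{+}}\E[g(x/(x+T_{j}))]\rho(x)\ddr x$ with $T_{j}$ keeping its unconditional law $f_{T_{j}}$, and the change of variables $x=uv$, $t=u(1-v)$ with Jacobian $u$ produces exactly \eqref{mass_nrm}--\eqref{struct}; your bookkeeping (including the splitting $v^{r+1}=v\cdot v^{r}$ that places the extra factor $v$ inside $f_{V_{j}}$, and the Tonelli justification for interchanging sum and expectation) is accurate. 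What each route buys: the paper's is shorter and leans on established NRM machinery, while yours is self-contained, makes the Palm-calculus content of Pitman's lemma explicit, and in effect supplies the proof of the cited results rather than assuming them — which is exactly the step you correctly flag as the only nontrivial obstacle.
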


See Appendix~\ref{proof_teo_smooth} for the proof of Theorem~\ref{teo_smooth}. By the tower property of the expectation, we get that, if $h(X_{n+1}) = j$, the smoothed version of \eqref{cond_prob_mean} is
\begin{equation}\label{eq:smooth_estim}
    \hat f^{\text{NRM}}_{X_{n+1}} = \E_{P \sim \text{NRM}, h \sim \mathscr H_J}[\varepsilon_f(P, h)] = c_j \E[V_j],
\end{equation}
where $V_j$ is as in \eqref{struct}. 
Intuitively, $\E[V_j]$ is equal to the probability that two symbols sampled independently at random from 
$P_{\mathbb S_j}$
are equal \citep[see Equation (2.25) in][]{Pit(06)}. 
Therefore, the lower $\E[V_j]$, the higher the number of distinct symbols in the sample $\mathbf{X}_n$, leading to more hash collisions. This, in turn, inflates $c_j$ relative to $f_{X_{n+1}}$. Specific examples will be presented in Section~\ref{sec:smoothed-examples}, where it will also become clear that smoothed estimators are more practical and have much lower computational costs compared to their BNP counterparts.

The following result establishes an optimality property of $\hat{f}^{\text{NRM}}_{X_{n+1}}$.
\begin{theorem}\label{teo:optimality}
    Let $\hat f_{\beta} = c_j \beta$ for $\beta \geq 0$ and consider the quadratic risk associated with $\beta$ conditional on the bucket into which observation $n+1$ is hashed, namely $\mathrm{cMSE}(\beta) = \E\left[(\beta C_j - f_{X_{n+1}})^2 \mid h(X_{n+1}) = j \right]$.
    If $X_{1}, \ldots, X_{n+1} \mid P \simiid P$ and $P \sim \text{NRM}(\theta, G_0, \rho)$, then $\beta = \E[V_j]$ achieves the minimum risk. Moreover, this estimator is unbiased.
\end{theorem}
See Appendix~\ref{proof_teo_optimality} for the proof.
Theorem~\ref{teo:optimality} has a clear Bayesian interpretation. Under the hierarchical model $X_{1}, \ldots, X_{n+1} \mid P \simiid P$ and $P \sim \text{NRM}(\theta, G_0, \rho)$, we know that the optimal estimator is the Bayes estimator:
$\E[f_{X_{n+1}} \mid \mathbf{C}_j, h(X_{n+1}) = j]$,
which can be expressed as $\sum_{r=0}^{c_j} r \pi_j^F(r)$,
where $\pi_j^F$ denotes the full-sketch posterior distribution given in \eqref{eq:full_post}. However, as discussed in Section~\ref{sec:difficult_bayes}, this posterior is generally intractable. Theorem~\ref{teo:optimality} thus says that our smoothed estimator is the closest, among all linear estimators, to this optimal but generally intractable Bayes estimator.

Moreover, Theorem~\ref{teo:optimality} can guide us in estimating the smoothing parameter $\theta > 0$ and any hyper-parameters in the expression of the L\'evy intensity $\rho$. 
In fact, the main assumption of Theorem~\ref{teo:optimality} is that the data are randomly sampled from the random probability measure $P \sim \text{NRM}$. Therefore, a sensitive approach for choosing the smoothing parameters is to (approximately) maximize the marginal likelihood of the sketch. See Appendix~\ref{sec:estimation} for further details on how to estimate the smoothing parameters efficiently, by leveraging a subsampling shortcut that allows this component of our method to maintain a computational cost that is constant with respect to the size of the sketched data.

\subsection{Examples of smoothed estimators} \label{sec:smoothed-examples}

As interesting examples of NRM smoothing distributions, we consider the Dirichlet \citep[DP,][]{Fer(73)} and normalized generalized Gamma \citep[NGGP,][]{Jam(02),Pru(02),Pit(03),Lij(07)} processes. These are NRMs with L\'evy intensities $\rho(x) = \text{e}^{-x} x^{-1}$ and $\rho(x) = {\Gamma(1-\alpha)}^{-1} x^{-1-\alpha}\text{e}^{-\tau x}$, respectively. 
They lead to simple smoothed estimators that can be directly related to existing BNP approaches.
The NGGP, in particular, allows modeling $P$ with a flexible tail behaviour, ranging from geometric to heavy power-law tails.
 The details of the derivations summarized below are in Appendix~\ref{app:smoothing-examples}.

\paragraph{Smoothing with the Dirichlet process.}
Specializing \eqref{struct} to the L\'evy intensity of a DP with parameter $\theta > 0$, it is possible to see that $V_j \sim \mathrm{Beta}(1, \theta/J)$. Then, for any $j \in [J]$ and $r \in \{0,1,\ldots,c_{j}\}$, $\pi_{j}(r)$ in~\eqref{mass_nrm} becomes
\begin{equation}\label{mass_dp}
\pi_{j}(r) = {c_{j}\choose r}\frac{\theta}{J}\frac{\Gamma(r+1)\Gamma(\theta/J+c_{j}-r)}{\Gamma(\theta/J+c_{j}+1)},
\end{equation}
which is the Beta-Binomial distribution with parameter $(c_{j},1,\theta/J)$. Then, \eqref{eq:smooth_estim} reduces to:
\begin{align} \label{eq:prop-dp}
    \hat{f}_{X_{n+1}}^{\text{DP}}:=
    \E_{P \sim \text{DP}, h \sim \mathscr H_J}[\varepsilon_f(P, h)] =c_{j}\frac{J}{\theta+J},
\end{align}

Interestingly, this estimator coincides with the BNP estimator under the DP prior \citep{Cai(18)}. 
However, the DP is the sole NRM for which the smoothed-Bayesian and BNP approaches lead to the same estimator for $f_{X_{n+1}}$.
\begin{theorem}\label{dp_charact}
Let $\hat{f}_{X_{n+1}}^{\text{NRM}}$ be the smoothed estimator under the model \eqref{eq:exchangeable_model_hash} obtained with $P\sim\text{NRM}(\theta,G_{0},\rho)$. 
This estimator coincides with the BNP estimator obtained by considering the expectation of $\pi^F$ (cf.\ Section \ref{sec:difficult_bayes}) if and only if the NRM is the DP.
\end{theorem}
See Appendix~\ref{app_dp_charact} for a proof of Theorem~\ref{dp_charact}. We also refer to Appendix~\ref{relatedwork-bnp} for a more careful discussion of the relation between smoothed estimation and the BNP approach.

The estimator in~\eqref{eq:prop-dp} applies linear shrinkage to $c_j$ and is very fast to compute for any value of $\theta$. Furthermore, the smoothing parameter $\theta$ can be efficiently estimated by maximizing the marginal likelihood of the data sketch, as detailed in Appendix~\ref{sec:estimation}.

It is also interesting to note that the estimator in~\eqref{eq:prop-dp} converges to the CMS estimator as $\theta \rightarrow 0$, consistent with the worst-case analysis from Section~\ref{sec:np-estim}. 
Indeed, as $\theta \to 0$, $P \sim DP(\theta, G_0)$ approaches a degenerate distribution with only one support point; see Theorem~\ref{theorem:worst-case-informal}. In general, \eqref{eq:prop-dp} shrinks $c_j$ by a weight inversely proportional to the parameter $\theta$, which is intuitive since larger values of $\theta$ lead to a sample with more distinct symbols.

\paragraph{Smoothing with the normalized generalized Gamma process.}

We now consider smoothing with an NGGP with parameters $(\theta,\alpha,\tau)$, which does not have a practical counterpart in the BNP framework due to the intractability of the full posterior distribution for a general NGGP prior.
In this case,~\eqref{eq:smooth_estim} can be evaluated by noting that
\begin{equation}\label{struct_ngg}
V_{j}=V_{\theta,\alpha,\tau} := B_{1-\alpha,\alpha}\left(1-\left(\frac{\frac{\theta\tau^{\alpha}}{J\alpha}}{\frac{\theta\tau^{\alpha}}{J\alpha}+E}\right)^{1/\alpha}\right),
\end{equation}
where $B_{1-\alpha,\alpha}$ is a Beta random variable with parameter $(1-\alpha,\alpha)$, and $E$ is an independent negative Exponential random variable with parameter $1$.
Therefore, for any $j \in [J]$ and $r \in \{0,1,\ldots,c_{j}\}$, $\pi_{j}(r)$ in~\eqref{mass_nrm} becomes
\begin{equation}\label{mass_ngg}
\pi_{j}(r)=\int_{0}^{1}{c_{j}\choose r}v^{r}(1-v)^{c_{j}-r}f_{V_{\theta,\alpha,\tau}}(v)\ddr v,
\end{equation}
which is a generalization of the Beta-Binomial distribution introduced in \eqref{mass_dp}. Then, the estimator in \eqref{eq:smooth_estim} reduces to
\begin{align}  \label{eq:prop-nggp}
\hat{f}_{X_{n+1}}^{\text{\tiny{(NGGP)}}} = c_{j}(1-\alpha)\left(1-\frac{\theta\tau^{\alpha}}{J\alpha}\text{e}^{\frac{\theta\tau^{\alpha}}{J\alpha}}E_{1/\alpha}\left(\frac{\theta\tau^{\alpha}}{J\alpha}\right)\right),
\end{align}
where $E$ denotes the exponential integral function: $E_{a}(z) := \int_{1}^{+\infty}x^{-a}\exp\{-zx\}\ddr x$.
See Appendix~\ref{proof_prop_nggp} for the proof of \eqref{eq:prop-nggp}. 

Similar to the estimator under DP smoothing, \eqref{eq:prop-nggp} also applies linear shrinkage to $c_j$ and is straightforward to evaluate for any choice of $(\theta, \alpha, \tau)$, requiring only the computation of the exponential integral function, which is easily evaluated numerically. The smoothing parameters $(\theta, \alpha, \tau)$ can be estimated by (approximately) maximizing 
either the marginal likelihood of the sketch, or the marginal likelihood of a (small) subset of the data.
Although this process is somewhat more involved than estimating the parameter $\theta$ in the DP case, due to the absence of a closed-form marginal likelihood expression for the general NGGP, it remains practical. 
See Appendix~\ref{sec:estimation} for further details.

Note that it is intuitive why the smoothed estimator in \eqref{eq:prop-nggp} is decreasing in all three parameters $\theta$, $\tau$, and $\alpha$: the expected number of distinct symbols in $\mathbf{X}_{n}$ is an increasing function of these parameters \citep{Lij(07)}.

In the special case of $\alpha=0$ and $\tau=1$,  \eqref{eq:prop-nggp} reduces to \eqref{eq:prop-dp}, as $\text{NGGP}(\theta,0,1)$ reduces to $\text{DP}(\theta)$.
This equivalence also can be seen by looking at the random variable $V_{\theta,\alpha,\tau}$ in \eqref{struct_ngg}, since
$
\lim_{\alpha\rightarrow0}V_{\theta,\alpha,1}=1-\text{e}^{-\frac{J}{\theta}E}\stackrel{\text{d}}{=}B_{1,\theta/J}.
$
A more interesting case is that of $\alpha=1/2$, which is special NGGP known as the normalized inverse Gaussian process (NIGP) \citep{Lij(05),Fav(12)}. The smoothing parameter $\alpha\in[0,1)$ controls the tail behaviour of the distribution $P$: the larger $\alpha$, the heavier the tail \citep{Lij(07)}. Therefore, the NGGP can capture flexible tail behaviours for the data distribution, ranging from the geometric tails of the DP to the heavy power-law tails, which are often observed in applications.

\paragraph{Beyond the normalized generalized Gamma process.} \label{sec:smoothing-beyond-nggp}

Given the generality of Theorem~\ref{teo_smooth}, it is natural to wonder whether it might be worth considering alternative smoothing distributions \citep{Lij(10)}. To address this question, we point out that the use of other NRMs may lead to less explicit distributions for $V_{j}$ in Theorem~\ref{teo_smooth}, which is likely to complicate the estimation of $f_{X_{n+1}}$. Fortunately, though, the NGGP already offers a reasonable trade-off between the flexibility of the smoothing assumption, in terms of enabling a flexible tail behaviour, and its tractability, in terms of leading to a distribution for $V_{j}$ that can be evaluated easily. To highlight the advantages of the NGGP, we plot in Figure~\ref{fig:smooth_comparison} the probability $\pi_j(r)$ computed under different choices of $\theta$ and $\alpha$, assuming $\tau=1$. The power of the NGGP can be evinced by comparing the relatively flexible behaviour of $\pi_j(r)$, for different choices of $\theta$ and $\alpha$, to the much more constrained behaviour corresponding to the DP, which limits $\pi_j(r)$ to be either monotonically increasing or decreasing.

\begin{figure}[t]
    \centering
    \includegraphics[width=\linewidth]{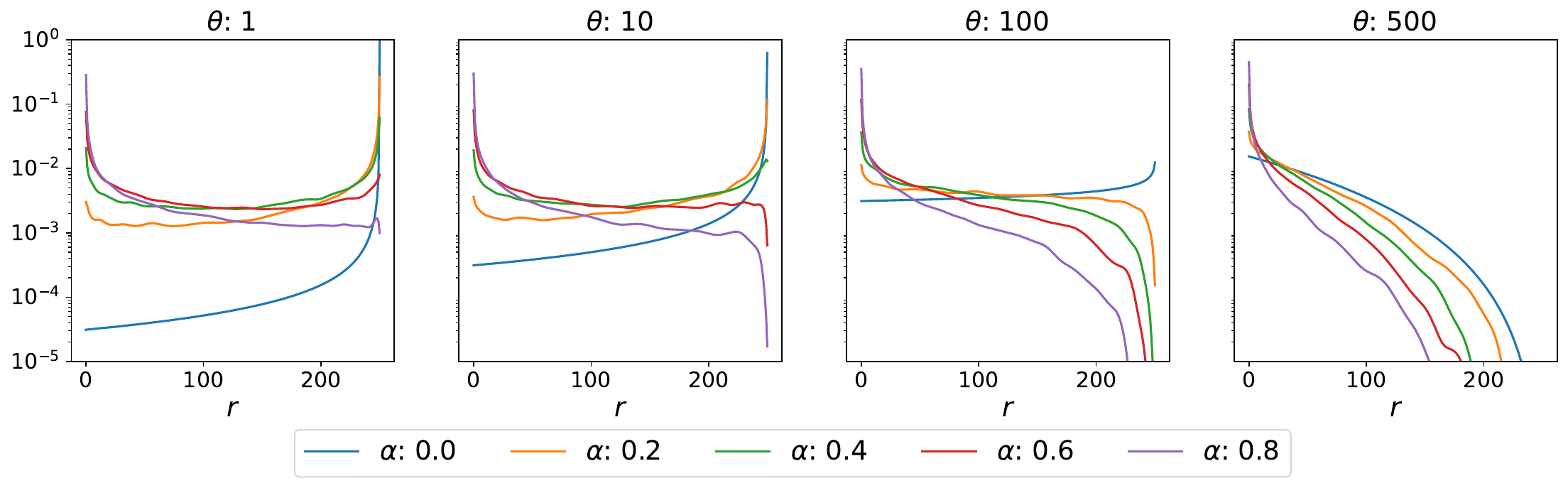}
    \caption{Visualization of the modelling flexibility of the NGGP. The probabilities $\pi_j(r)$ are plotted as a function of $r$ for $c_j = 250$, for different values of the NGGP smoothing parameters. Different panels focus on different values of $\theta$, while the curves drawn in different colors correspond to alternative values of $\alpha$. In all cases, $\tau = 1$.}
    \label{fig:smooth_comparison}
\end{figure}

We conclude with some remarks about the connection between the smoothed frequency estimators obtained in under the DP and NGGP smoothing distributions, and the estimators obtained with the BNP approach \citep{Ber(23)}. As discussed in Section~\ref{sec:difficult_bayes} the use of prior distributions other than the DP, including for example other NRMs, leads to computationally challenging BNP estimators for the empirical frequency $f_{X_{n+1}}$.
By contrast, the advantage of the approach of smoothed estimation is that it allows one to leverage much more flexible models from the broader class of NRMs and still obtain tractable estimators that can be written in terms of expectations of mixtures of binomial distributions.


\section{Sketches with multiple hash functions} \label{sec:multi-view}

\subsection{Problem statement and setup}

We discuss how to extend our smoothed-Bayesian approach to combine information from sketches produced by multiple independent hash functions, as in the general setting of the CMS of \citet{Cor(05)}. This strengthens the connection between our approach and the algorithmic approach of the CMS, reviewed in Appendix~\ref{app-cms}.
In fact, the latter is typically applied using multiple independent hash functions, to make the data compression more efficient \citep[Chapter 2]{Cor(20)}.

To introduce the more general sketch, for $M\geq1$, let $h_{1},\ldots,h_{M}$ be $J$-wide hash functions, with $h_{l}:\mathbb{S}\rightarrow [J]$ for $l \in [M]$, that are i.i.d.~from a pairwise independent hash family $\mathscr{H}_{J}$. Hashing $(X_{1},\ldots,X_{n})$ through $h_{1},\ldots,h_{M}$ produces a random matrix $\mathbf{C}_{M,J}\in\mathbb{N}_{0}^{M\times J}$, referred to as the sketch. Therefore, the natural extension of the setup of Section~\ref{sec:setup-assumptions} is:
\begin{align}\label{eq:exchangeable_model_multhash}
  \begin{split}
    X_1,\ldots,X_{n} &\,\simiid\, P,\\
    h_{1},\ldots,h_{M}&\,\simiid\,\mathscr{H}_{J},\\
    C_{l,j} &\,=\, \sum_{i=1}^{n} I( h_{l}(X_i) = j), \qquad \forall l \in [M], \; \forall j \in [J].
  \end{split}
\end{align}
Following the same strategy as in Section~\ref{sec:setup-assumptions}, one may think of investigating an oracle estimator for $f_{X_{n+1}}$, by assuming full knowledge of the distribution $P$. However, even this preliminary step becomes extremely challenging in the context of multiple hash functions. In particular, the resulting estimator involves intractable combinatorial coefficients that reduce to the multinomial coefficient only if $M=1$. See Appendix~\ref{app:multihash} for further details.

To circumvent the computational challenges associated with direct estimation in the case of multiple hash functions, we propose an approach inspired by {\em multi-view} (or multimodal) learning \citep{Xu(13), Sha(18),Li(18b)}. In general, multi-view learning is concerned with aggregating distinct inferences obtained from different views or representations of the same data set. Here, we focus on aggregating the separate inferences corresponding to the distinct sketches produced by each hash function.

\subsection{Multi-view learning}

We consider the sketch $\mathbf{C}_{M,J}$ as a collection of $M$ distinct sketches, say $\mathbf{C}_{1,J},\ldots,\mathbf{C}_{M,J}$, where each $\mathbf{C}_{\ell,J}$ for $\ell \in [M]$ is obtained by applying the corresponding hash function $h_{\ell}$ to the sample $(x_{1},\ldots,x_{n})$. Following the general multi-view learning approach, we can interpret the sketches $\mathbf{C}_{1,J},\ldots,\mathbf{C}_{M,J}$ as $M$ different views of the data set, and then apply Theorem~\ref{teo_cond_prob} to each of them separately.
This leads to a distinct conditional distribution for $f_{X_{n+1}}$, given the sketch $\mathbf{C}_{\ell,J}$ and the bucket $h_{\ell}(X_{n+1})$ in which $X_{n+1}$ is hashed, for each $\ell \in [M]$. That is, for each $r \in \{0,1,\ldots, c_{\ell,j}\}$, we compute the probability
\begin{align}\label{eq:single_view_post}
\begin{split}
  \pi_{j_{\ell}}(r;P)
   : & = \prob[f_{X_{n+1}}=r \mid \mathbf{C}_{l,J}=\mathbf{c}_{\ell},h_{\ell}(X_{n+1})=j_{l}] \\
  & = {c_{\ell,j}\choose r}\sum_{s\in\mathbb{S}_{j_{\ell}}}\left(\frac{p_{s}}{q_{j_{\ell}}}\right)^{r+1}\left(1-\frac{p_{s}}{q_{j_{\ell}}}\right)^{c_{\ell,j}-r}.
\end{split}
\end{align}
Then, we consider two practical and intuitive multi-view rules to aggregate the probabilities $\pi_{j_{\ell}}(r;P)$ in \eqref{eq:single_view_post}, as explained below, although other options are also possible.

The first multi-view rule that we consider is known as the {\em product of experts} \citep{Hin(02)}. 
For each $r \in \{0,1,\ldots,n\}$, it defines the probability mass function as follows:
\begin{equation}\label{eq:poe_post}
    \tilde{\pi}_{\mathbf j}(r;P):= \frac{1}{Z} \prod_{l=1}^M \pi_{j_\ell}(r, P),
\end{equation}
where $Z$ is the normalizing constant. 
 Intuitively, the distribution function \eqref{eq:poe_post} assigns high mass to values of $r$ for which none of the individual ``single-view" distributions $\pi_{j_\ell}(r)$ is too small and are well supported by most of these distributions. 
This can be interpreted as seeking a consensus among ``experts" (i.e., distributions) that agree on a region of values. 
For further intuition on the product of experts, we refer to \cite{Hin(02)}.

The second multi-view rule is called {\em minimum of experts}, and it is inspired by the CMS. 
It considers the distribution of the minimum of $M$ independent random variables distributed according to \eqref{eq:single_view_post}. That is, if $\Pi_{j_{\ell}}(\cdot;P)$ denotes the cumulative distribution function corresponding to \eqref{eq:single_view_post}, for each $r \in \{0,1,\ldots,n\}$ we define a probability mass function $\mathring \pi_{\mathbf j}(r; P)$ associated with a
cumulative distribution function given by
 \begin{equation}\label{eq:min_post}
    \mathring \Pi_{\mathbf j}(r;P) := 1 - \prod_{\ell=1}^M (1 - \Pi_{j_{\ell}}(r, P)).
\end{equation}

Based on the approximations in \eqref{eq:poe_post} and \eqref{eq:min_post} of the ``true'' (intractable) conditional distribution of $f_{X_{n+1}}$, we propose to utilize the corresponding expected values as practical estimators of the empirical frequency $f_{X_{n+1}}$ from the sketch $\mathbf{C}_{M,J}$, following the same arguments developed in Sections~\ref{sec:setup} and~\ref{sec:smoothed}. 

Under the (fully) nonparametric approach of Section~\ref{sec:np-estim}, it can be seen that the application of the worst-case analysis to each single view leads to recovering the classical CMS algorithm from \eqref{eq:min_post}.
Indeed, from Theorem~\ref{theorem:worst-case-informal} it is straightforward to see that $\pi_{j_\ell}(r) = \delta_{c_{\ell, j_\ell}}(r)$ and \eqref{eq:min_post} reduces to $\min \{c_{1, j_1}, \ldots, c_{M, j_M}\}$.
Under the approach of smoothed estimation of Section~\ref{sec:smoothed}, the DP smoothing assumption on \eqref{eq:poe_post} leads to an expression for $\tilde{\pi}_{\mathbf j}$ which coincides  with the posterior distribution of  $f_{X_{n+1}}$, given $\mathbf{C}_{M,J}$ and $\mathbf{h}(X_{n+1})$, under a DP prior, obtained in \citet[Theorem 2]{Cai(18)}. See also Equation (8) in \cite{Dol(23)}. Therefore, the smoothed estimator $\tilde{\hat{f}}_{X_{n+1}}^{\text{DP}}$ coincides with the BNP estimator proposed by \citet{Cai(18)} as the posterior mean. See also \citet{Dol(23)} for further details.  

Along the same lines, one may also consider further generalizations by applying the NGGP smoothing assumption to \eqref{eq:poe_post}. These smoothing assumptions could also be applied to \eqref{eq:min_post}, leading to estimators that have not been previously considered in the literature.


\section{Numerical illustrations} \label{sec:numerics}

\subsection{Setup and performance metrics} \label{sec:numerics-setup}

We present several simulation studies to assess the performance of our methods. We consider two generative models for $X_1, \ldots, X_n$: i) the process (generalized P\'olya urn) induced by the Pitman-Yor process, with parameters $\gamma$ (strength) and $\sigma$ (discount) to be specified case-by-case, in short $\text{PYP}(\gamma,\sigma)$; ii) a Zipfian distribution with tail parameter $c$. Both display a power law behaviour. 
We evaluate the results based on mean absolute error between true and estimated frequencies, stratified by true frequency  \citep{Dol(23)}. That is,
\begin{displaymath}
    \text{MAE}_{j} = \frac{1}{\sum_{s \in \mathbb S} I(f_s \in (l_j, u_j])} \sum_{s \in \mathbb S} |f_s - \hat f_s | \, I\left(f_s \in (l_j, u_j]\right),
\end{displaymath}
where $f_s = \sum_{i=1}^n I(X_i = s)$ is empirical frequency, $\hat f_s$ its  estimate, and $(l_j, u_j]_{j \geq 1}$ are non-overlapping frequency bins.

\subsection{Synthetic data: single hash function}\label{sec:ex_singlehash}

We consider the two smoothing distributions discussed in Section~\ref{sec:smoothed}: the Dirichlet process and the NGGP.
To estimate the parameters in the DP, we maximize the integrated likelihood of the observed sketch as discussed in Appendix~\ref{sec:estimation}.
While considering the NGGP, we store in memory the first $m$ observations $X_1, \ldots, X_m$, with $m \approx n / 20$, and choose the parameters that maximize the integrated likelihood of $X_1, \ldots, X_m$.
See Appendix~\ref{sec:estimation} for further details on the associated algorithms.
As far as the computational cost is concerned, optimizing under the DP smoothing is almost instantaneous ($<0.1$ seconds) while for the  NGGP the optimization takes 1--5 seconds in all experiments.
Note that the optimization is done only once per experiment and does not need to be repeated for each query.

We generate $n=$ 100,000 observations from $\mathrm{PYP}(\gamma, \sigma)$ for $\gamma \in \{1, 10, 100, 1000\}$ and $\sigma \in \{0, 0.25, 0.5, 0.75\}$ and from a Zipfian distribution with parameter $c = 1.3, 1.6, 1.9, 2.2$.
The data are hashed by a random hash function of with $J=128$.
Figure~\ref{fig:freq_singlehash_py} reports the MAEs of the frequency estimators stratified across 3 frequency bins when data are generated from $\text{PYP}(\gamma, 0.75)$ and from the Zipf distribution, while Table~\ref{tab:freq_singlehash_py} in the Appendix reports all the remaining numerical values, while in Appendix~\ref{sec:app_role_J}, we discuss the impact of the parameter $J$.
Results are averaged across 50 independent repetitions.
It is clear that the NGGP smoothing outperforms the DP smoothing for both data generating processes, and across all values of their parameters.
Henceforth, we will present results only for the NGGP smoothing. The results for the DP smoothing are available in the appendix, where it is clear that the NGGP outperforms the DP also in the other experiments.

\begin{figure}[htb]
    \centering
    \includegraphics[width=\linewidth]{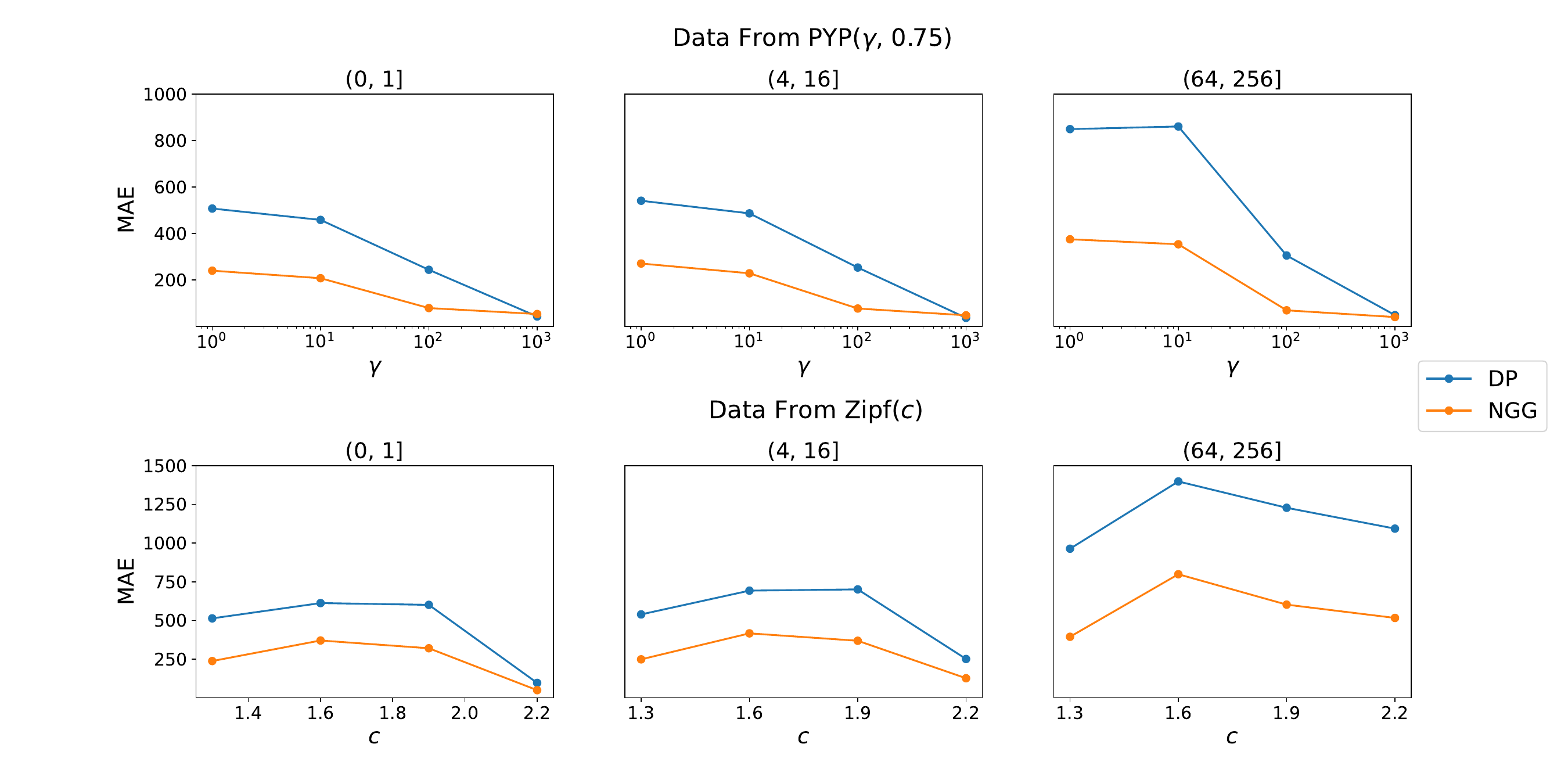}
    \caption{MAEs for the frequency estimators with DP and NGG smoothing, in experiments on synthetic data from a Pitman-Yor process with parameters $\gamma$ (varies across the $x$-axis) and $\sigma=0.75$ (see Section~\ref{sec:ex_singlehash}).
    Different plots correspond to different frequency bins.}
    \label{fig:freq_singlehash_py}
\end{figure}


\subsection{Synthetic data: multiple hash functions}\label{sec:ex_multihash}

We now move to the case of multiple hash functions. In the following, we fix a total memory budget for the sketch $\mathbf C_{M,J}$ so that $M \times J =$ 1,000.
We generate data from a $\mathrm{PYP}(\gamma, \sigma)$, letting $\gamma \in \{10, 100, 1000\}$ and $\sigma \in \{0.25, 0.75\}$, and simulate $n=$ 500,000 observations for each possible combination of the parameters.
In analogy with the multiview literature, we analyze each view separately (in parallel) and estimate one set of parameters for each view.

We consider frequency estimators based on the product of expert (PoE) aggregation \eqref{eq:poe_post} and the minimum of expert (min) aggregation \eqref{eq:min_post} rules, under NGG smoothing. 
We also compare their performance with the original count-min sketch algorithm and the ``debiased'' count-min from \cite{ting2018count}, which we call D-CMS.
Figure~\ref{fig:multiview} reports the MAEs for two choices of the parameters of the data generating process, and Tables~\ref{tab:freq_multihash_J50} -~\ref{tab:freq_multihash_J100} in the Appendix report all the remaining values including the case of the DP smoothing.

Several insights emerge from these experiments. First, although it may not be immediately evident in Figure~\ref{fig:multiview}, the min aggregation rule generally outperforms the PoE rule, as shown by Tables~\ref{tab:freq_multihash_J50}–\ref{tab:freq_multihash_J100} in the Appendix.
The CMS and D-CMS demonstrate competitive performance with the smoothed estimator when the data exhibit a moderate power-law behavior (i.e., when the discount parameter of the PYP generating the data is $\sigma = 0.25$). However, with heavy power-law tails, the CMS and D-CMS incur larger errors, particularly at low frequencies.
Across all datasets, the smoothed estimators perform well with a moderate (e.g., 10) number of hash functions. 
The CMS also benefits from using a moderate number of hash functions with moderate power laws, but with heavier power laws, the CMS achieves better performance with only one or two hash functions. 
In contrast, the D-CMS performs poorly when using only a single hash function.

\begin{figure}[htb]
    \centering
    \includegraphics[width=\linewidth]{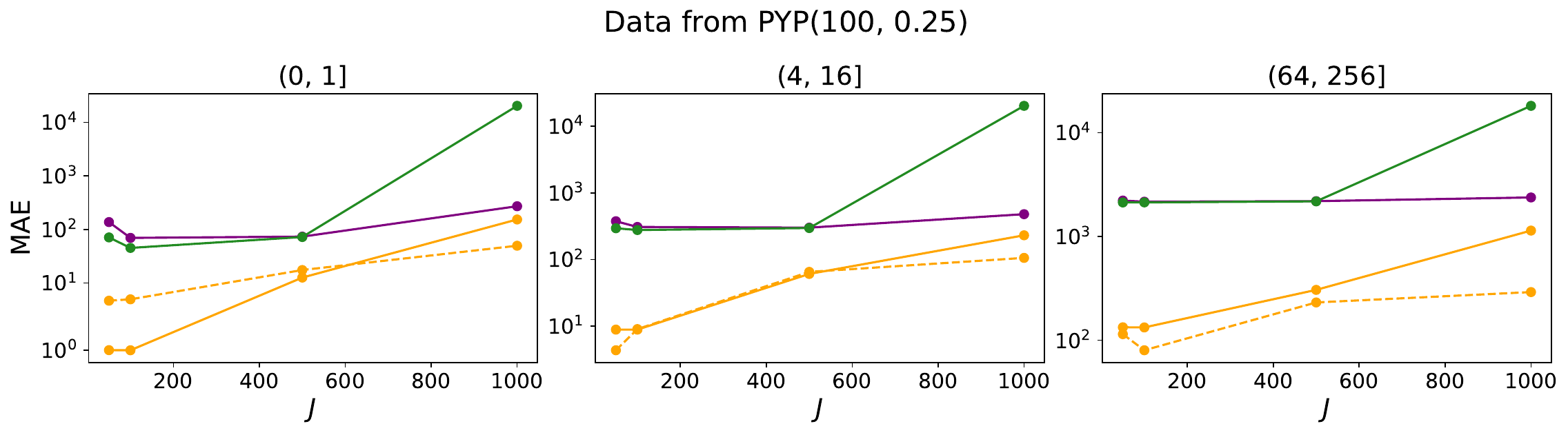}
     \includegraphics[width=\linewidth]{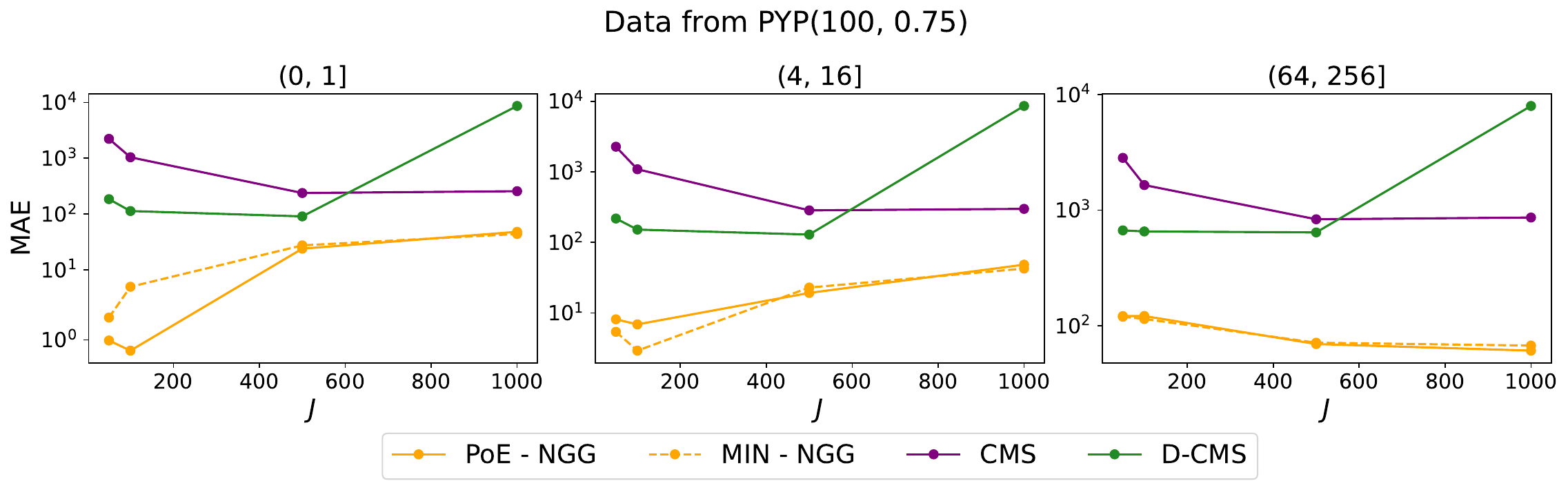}
    \caption{MAEs for the frequency estimators in Section~\ref{sec:ex_multihash}, stratified by true frequency bins. Top row, data generated from a Pitman-Yor process with parameters $(100, 0.25)$. Bottom row, data from a Pitman-Yor process with parameters $(100, 0.75)$.
    In the bottom row, the PoE-NGG and MIN-NGG lines are overlapping.}
    \label{fig:multiview}
\end{figure}

\subsection{Calibration of multi-hash aggregation rules}

Our methodology can produce prediction {\em intervals} for $f_{X_{n+1}}$. For instance, a 95\% prediction interval can be constructed using the 2.5\% and 97.5\% quantile of \eqref{eq:poe_post} or \eqref{eq:min_post}.
It is then crucial to evaluate both the calibration of these intervals and their width. 
\cite{Ses(22)} empirically demonstrated that the BNP posterior distribution for $f_{X_{n+1}}$ under a DP prior—originally obtained in \cite{Cai(18)} and coinciding with our PoE rule with DP smoothing—is often miscalibrated and results in excessively wide intervals.

The following empirical results will show that miscalibration is not unique to the Bayesian setting, but it is rather a common problem, one that affects  both of the aggregation rules that we proposed. 
We will call the intervals obtained using \eqref{eq:poe_post} and \eqref{eq:min_post} the ``smoothed-PoE'' and ``smoothed-min'' intervals, respectively.
We propose to overcome this pitfall using the conformal inference approach proposed by \cite{Ses(22)}.
In particular, we show that using the point estimators considered previously (i.e., obtained as the expected value of \eqref{eq:poe_post} or \eqref{eq:min_post}) to produce intervals as in \cite{Ses(22)} leads to shorter prediction intervals with valid coverage.
We will call such intervals the ``conformal-PoE'' and ``conformal-min'' intervals, respectively.
Observe that both the smoothed and conformal intervals depend on the choice of smoothing distribution.

We generate data from a PYP with parameters $(10, 0.25)$ and $(100, 0.25)$, simulating $n = 250,000$ observations. For the ``conformalized'' approach, we follow \citet{Ses(22)}, using their default tuning values. Specifically, we reserve the first $m = 25,000$ observations for calibration and assess coverage and interval length on an additional 2,500 data points. Figure~\ref{fig:multihash-calibration} presents results averaged over 50 independent replications.
All intervals exceed the nominal 0.9 coverage level due to the discreteness of the data, which includes many repeated observations of the same symbol. Achieving exact 90\% marginal coverage would require randomly perturbing the intervals corresponding to repeated queries, but we do not introduce such randomness here for simplicity. 

The average length of the intervals varies significantly across settings. The top plots in Figure~\ref{fig:multihash-calibration}, for $\mathrm{PYP}(10, 0.25)$, show that smoothed-PoE and smoothed-min intervals are relatively large, while conformal-PoE and conformal-min intervals are much shorter and of similar length. In contrast, the bottom plots, for $\mathrm{PYP}(100, 0.25)$, indicate that smoothed and conformal intervals have comparable average length, though the conformal calibration still provides a notable improvement.

\begin{figure}[htb]
    \centering
    \includegraphics[width=\linewidth]{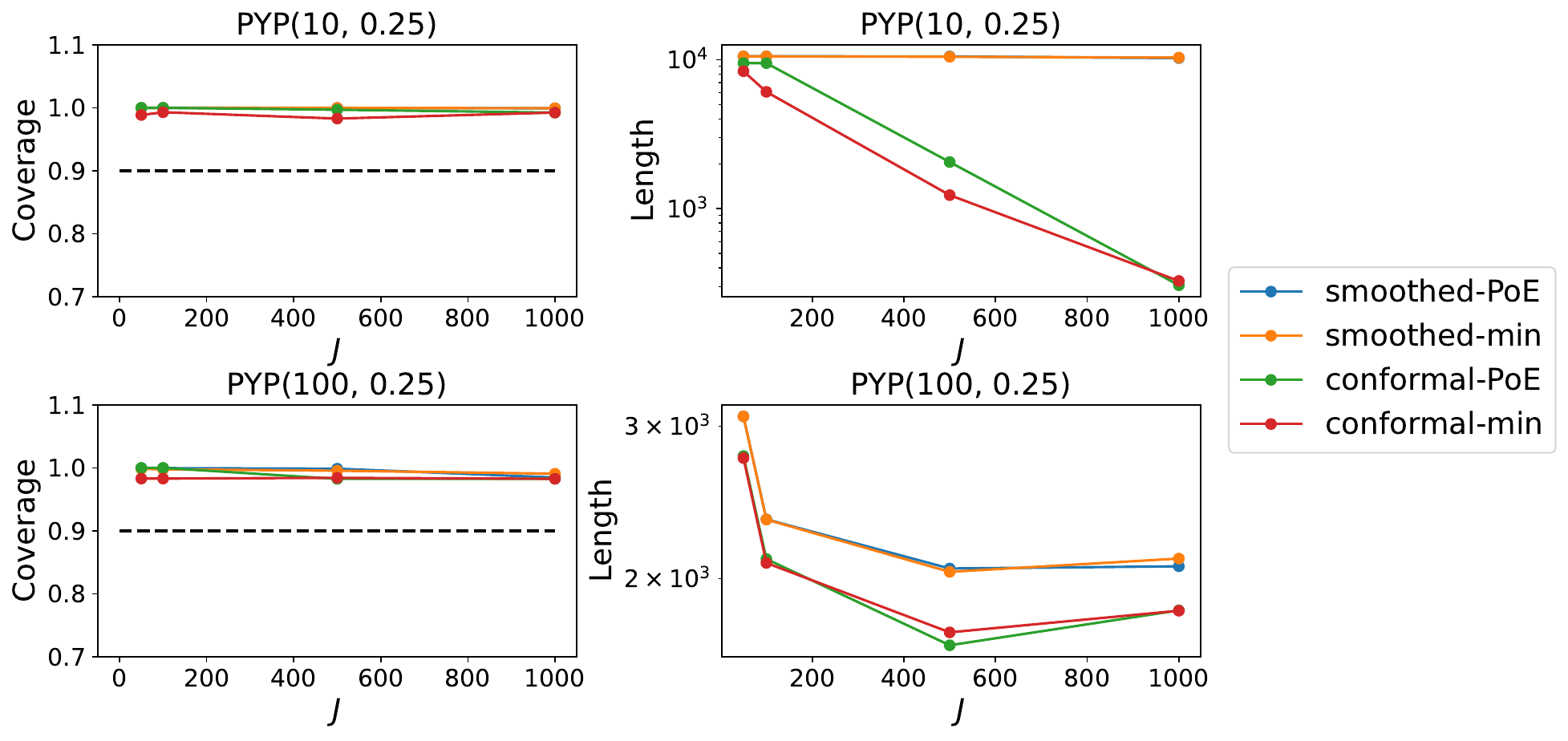}
    \caption{Calibration and average lengths of the intervals derived from the product of expert and the min aggregation rule using an NGGP smoothing for two different data generating processes, in experiments involving multiple independent hash functions.The results are shown as a function of the hash width, while the total memory budget is fixed.}
    \label{fig:multihash-calibration}
\end{figure}

\subsection{Real data sets}

We consider two real data sets, displaying remarkably different behaviours in terms of their frequency distribution, that were also analyzed in \cite{Ses(22)}.
The first data set consists of the texts of 18 open-domain classic pieces of English literature from the Gutenberg Corpus.
The frequency distribution here has a clear power-law behaviour. We subsample 600,000 bigrams from the corpus, displaying approximately 120,000 distinct combinations.
The second data set contains nucleotide sequences from SARS-CoV-2 viruses, made publicly available by the National Center for Biotechnology Information \citep{Hat(17)}.
These data include 43,196 sequences, each consisting of approximately 30,000 nucleotides. The goal is to estimate the empirical frequency of each possible 16-mer, a distinct sequence of 16 DNA bases found in contiguous nucleotides. Since each nucleotide has one of 4 bases, there are  $4^{16} \approx$ 4.3 billion possible 16-mers.
The frequency distribution of the DNA sequences is rather unusual: there are no ``common'' sequences, and most sequences appear approximately 1,000 times in the data set. 
There are also a few common sequences that appear 2000 times and some rare sequences that appear only a few times.
We subsample 2,000,000 data points, displaying approximately 20,000 unique sequences.


\begin{figure}[htb]
    \centering
    \includegraphics[width=\linewidth]{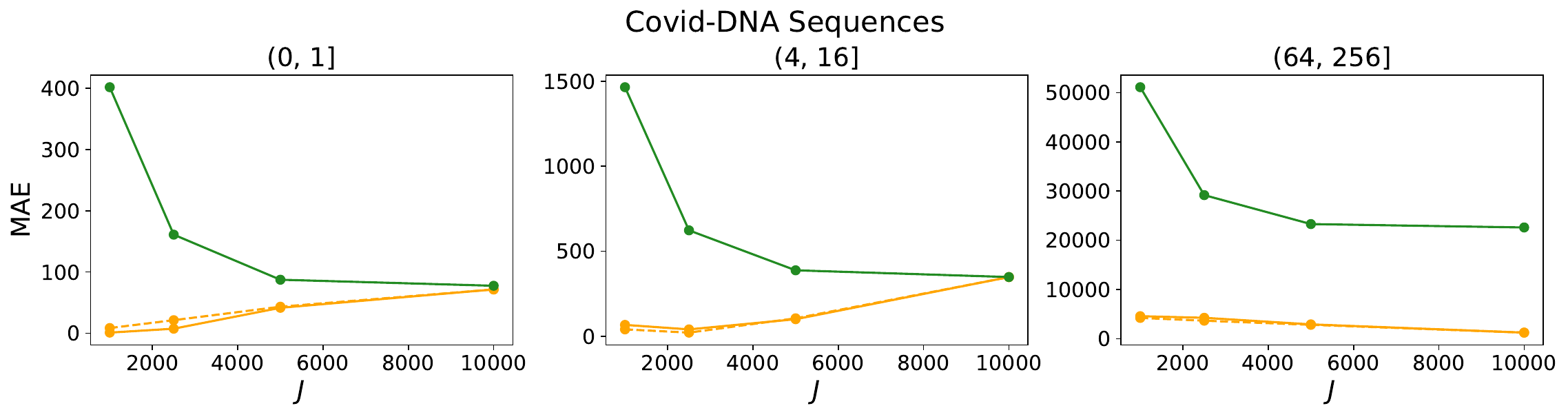}
    \includegraphics[width=\linewidth]{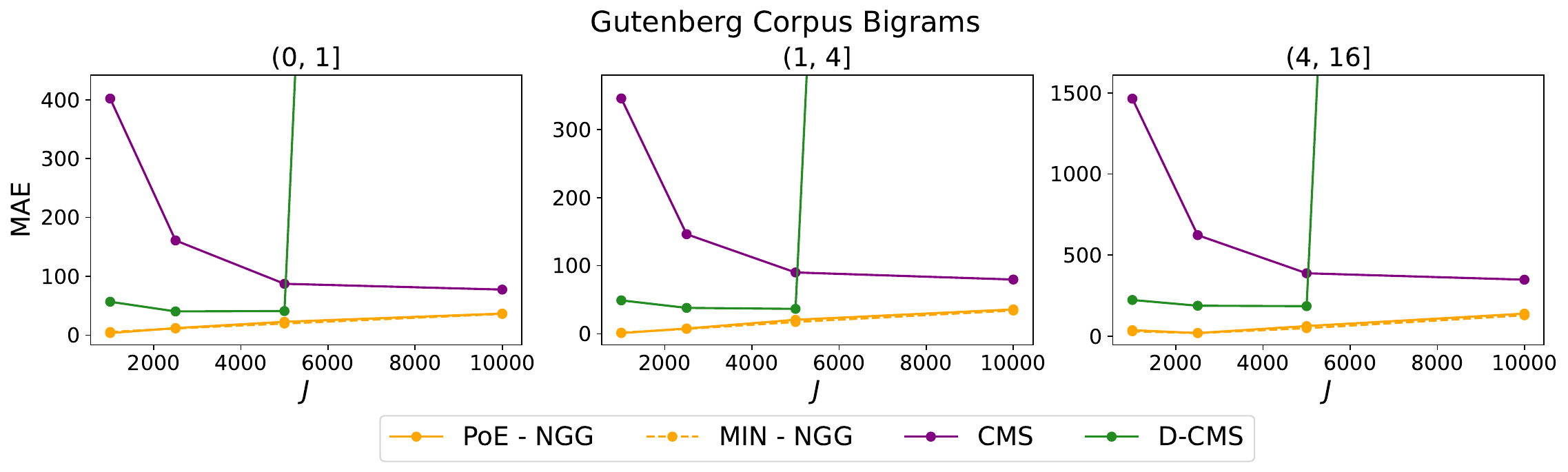}
    \caption{MAEs for the frequency estimators on the Covid-DNA sequences (top row) and Gutenberg corpus bigrams (bottom row), stratified by true frequency bins. MAEs of the CMS estimator are not reported for the Covid-DNA sequences as they are much higher.}
    \label{fig:real_maes}
\end{figure}

We follow the same setup of Section~\ref{sec:ex_multihash} and fix a memory budget of $M \times J =$ 10,000, letting $M = 10, 4, 2, 1$. We sketch the data and evaluate the  estimators based on the NGGP smoothing combined via the product of experts and the ``min'' rule. Figure~\ref{fig:real_maes} shows the MAEs stratified by frequency.
For the both the English bigrams and the Covid-DNA data, the behaviour is as expected: the NGGP achieves the best performance.
For comparison, in both datasets, the errors of the CMS algorithm are at least an order of magnitude larger than those of our smoothed estimators, which also outperform the D-CMS.



\section{Discussion}\label{sec:discussion}


Although it was not discussed in the paper, our smoothed-Bayesian approach also lends itself to address different recovery problems, such as the estimation of the total number of distinct symbols in the sketched data. This is the \emph{cardinality recovery} problem, for which there exist algorithmic solutions \citep{Fla(85),Fla(07)} relying on different sketching algorithms, as well as solutions that rely on modeling assumptions for the data \citep{Cha(06),Che(11),ting2019approximate,Pet(21)}. Appendix~\ref{sec:cardinality} extends our results to address cardinality recovery using the same sketch obtained via random hashing. These additional results include both a worst-case theoretical analysis and a novel class of smoothed-Bayesian estimators.
We believe these extensions may be of interest especially in the context of privacy-preserving analyses.

This work opens several opportunities for future research. For example, one may conduct an in-depth analysis for the problem of endowing our method with uncertainty estimates, going beyond current conformal inference approaches \citep{Ses(23)}.
Further, one may consider the more general setting in which data belong to more than one symbol, referred to as traits, and exhibit nonnegative integer levels of associations with each trait; e.g., single cell data containing multiple genes with their expression levels, or documents containing different topics with their words.
In the trait allocation setting, a BNP approach to frequency recovery has been developed in \citet{Cai(18)} and \citet{Ber(23)}, showing some computational issues in the evaluation of posterior distributions. Extending smoothed estimation to the trait allocation setting may lead to more computationally efficient estimators compared to the BNP approach, under flexible modeling assumptions.

Another research direction could involve combining smoothed estimation with ``learning-based'' hashing algorithms \citep{aamand2024improved}, which leverage additional data features (not considered in this paper) to identify the most common symbols and hash them separately from the rest of the dataset.
Further, one could explore smoothed estimation in large-scale streaming and distributed settings.
This may require smoothing distributions that can describe non-exchangeable data \citep{airoldi2014generalized}, adapting our method to allow for a sequential estimation of smoothing parameters, and extending our multi-view formulation to account for the cost of aggregating inference across different servers under communication constraints. Moreover, our framework may be extended to problems involving more complex data such as graphs \citep[][Chapter 7]{Cor(20)}, for which several BNP models have been proposed \citep{Car(17), Ric(22)}.


\if1\blind
\section*{Software Availability}
The code implementing the methods described in this paper, as well as scripts to reproduce the numerical experiments and data analysis, is publicly available at: \url{https://github.com/mberaha/SmoothedSketching}.

\section*{Acknowledgements}
M.B.~and S.F.~were supported by the European Research Council (Horizon 2020, grant 817257).
M.B.~was also supported by MUR, grant Dipartimento di Eccellenza 2023-2027, and S.F.~by the Italian Ministry of Education, University and Research (MIUR), ``Dipartimenti di Eccellenza" grant 2018-2022. S.F. is also affiliated with IMATI-CNR ``Enrico  Magenes" (Milan, Italy).
M.S.~was supported by NSF grant DMS 2210637 and by an Amazon Research Award.
The authors are grateful to the editors and anonymous referees for their valuable suggestions, which helped improve an earlier version of this manuscript.
\else
\section*{Acknowledgements}
The authors are grateful to the editors and anonymous referees for their valuable suggestions, which helped improve an earlier version of this manuscript.
\fi
\textbf{Generative AI Use:}
 GPT-4o was used for language improvement.

\section*{Disclosure Statement}
The authors report there are no competing interests to declare.



\bibliographystyle{chicago}
\bibliography{references.bib}

\appendix
\section{Further details about related prior work} \label{relatedwork}

\subsection{Review of the CMS algorithm} \label{app-cms}

Unlike the smoothed estimation framework discussed in this paper or the BNP framework reviewed in Appendix~\ref{relatedwork-bnp}, the classical CMS algorithm \citep{Cor(05)} treats the data $(x_{1},\ldots,x_{n})$ as arbitrarily fixed. Therefore, one may say that the CMS is not, strictly speaking, a statistical approach.
Instead, the CMS computes frequency estimates by leveraging only the randomness in the hash functions, as reviewed below.

For $M\geq1$ let $h_{1},\ldots,h_{M}$ be $J$-wide random hash functions, with $h_{l}:\mathbb{S}\rightarrow\{1,\ldots,J\}$ for $l \in [M]$, that are i.i.d. from the pairwise independent hash family $\mathscr{H}_{J}$ \citep[Chapter 5 and Chapter 15]{Mit(17)}. 
Sketching $(x_{1},\ldots,x_{n})$ through $h_{1},\ldots,h_{M}$ produces a random matrix $\mathbf{C}_{M,J}\in\mathbb{N}_{0}^{M\times J}$ whose $(l,j)$-th bucket is
\begin{displaymath}
C_{l,j}=\sum_{i=1}^{n}I(h_{l}(x_{i})=j),
\end{displaymath}
for each $l \in [M]$ and $j \in [J]$, such that $\sum_{1\leq j\leq J}C_{l,j}=n$. More precisely, each $h_{l}$ maps each $x_{i}$ into one of the $J$ buckets, defining the sketch $\mathbf{C}_{M,J}$ whose $(l,j)$-th element $C_{l,j}$  is the number of $x_{i}$'s hashed by $h_{l}$ in the j-th bucket. 

Based on $\mathbf{C}_{M,J}$, the CMS estimates $f_{x_{n+1}}$ by taking the smallest count among the $M$ buckets into which $x_{n+1}$ is hashed, 
\begin{equation}\label{eq_cms}
\hat{f}^{\text{\tiny{(CMS)}}}_{x_{n+1}}=\min\{C_{1,h_{1}(x_{n+1})},\ldots,C_{M,h_{M}(x_{n+1})}\}.
\end{equation}
As $\hat{f}_{v}^{\text{\tiny{(CMS)}}}$ is the count $C_{l,h_{k}(x_{n+1})}$ associated with the hash function with the fewest collisions, it provides a deterministic upper bound on the true $f_{x_{n+1}}$: $$ \hat{f}_{x_{n+1}}^{\text{\tiny{(CMS)}}}\geq f_{x_{n+1}}.$$
Further, if $J=\left \lceil{e/\varepsilon}\right \rceil $ and $M=\left \lceil{\log 1/\delta}\right \rceil$, for any $\varepsilon>0$ and $\delta>0$, then $\hat{f}_{x_{n+1}}^{\text{\tiny{(CMS)}}}$ also satisfies
$$\hat{f}_{x_{n+1}}^{\text{\tiny{(CMS)}}}\leq f_{x_{n+1}}+\varepsilon m,$$
 with probability at least $1-\delta$ over the randomness in the hash functions.
These results provide a theoretical guarantee for $\hat{f}^{\text{\tiny{(CMS)}}}_{x_{n+1}}$ in terms of a confidence interval, but they often tend to be too conservative to be useful in practice if the data are randomly sampled from some distribution instead of fixed arbitrarily \citep{ting2018count,Ses(23)}.
We refer to \citet[Chapter 3]{Cor(20)} for a more comprehensive account on the CMS, and generalizations thereof.

The interplay between our approach and the CMS algorithm emerges through our nonparametric analysis.
In particular, we find that the worst-case estimator of $f_{X_{n+1}}$ derived in Section~\ref{sec:np-estim} coincides exactly with the original CMS upper bound.

\subsection{Review of the BNP approach} \label{relatedwork-bnp}

This section reviews the BNP approach of \citet{Cai(18)}, which proposed first to address our frequency estimation problem by assuming a Dirichlet process prior for the data distribution.
This BNP framework was later extended by \citet{Dol(21)}, \citet{Dol(23)}, and \citet{Ber(23)} to accommodate a broader class of prior distributions belonging to the NRM class.
Here, we present a unified summary of the BNP approach by following the notation of \citet{Ber(23)}, and then we discuss the key computational challenges arising when the BNP approach is applied with prior distributions other than the Dirichlet process prior.

\subsubsection{Main assumptions and approach}

The BNP framework rests on the two key assumptions.
Firstly, the data $(x_{1},\ldots,x_{n})$ are modeled as a random sample $\mathbf{X}_{n}=(X_{1},\ldots,X_{n})$ from an unknown distribution $P=\sum_{s\in\mathbb{S}}p_{s}\delta_{s}$ on $\mathbb{S}$, endowed by a nonparametric prior $\mathscr{P}$.
Secondly, the hash family $\mathscr{H}_{J}$ is independent of $\mathbf{X}_{n}$. 
That is, for any $n\geq1$,
\begin{align}\label{eq:exchangeable_model_hash2} 
  \begin{split}
    P&\,\sim\,\mathscr{P}, \\
    X_1,\ldots,X_{n} \mid P &\,\simiid\, P, \\
    h&\,\sim\,\mathscr{H}_{J},\\
    C_{j} &\,=\, \sum_{i=1}^{n} I( h(X_i) = j), \quad \forall j \in \{1,\ldots,J\}.
  \end{split}
\end{align}
  
\citet{Ber(23)} studied the estimation of the empirical frequency $f_{X_{n+1}}$ under the model in \eqref{eq:exchangeable_model_hash2} with $P\sim\text{NRM}(\theta,G_{0},\rho)$, extending the prior work of \citet{Cai(18)} which focused on the sub-family of DP priors.
In particular, Theorem 2.2 in \citet{Ber(23)} provides the posterior distribution of $f_{X_{n+1}}$, given the sketch $\mathbf{C}_{J}$ and the bucket $h(X_{n+1})$ in which $X_{n+1}$ is hashed, namely
\[
 \pi_j(r) = \frac{\E_{P \sim \mathscr P, h \sim \mathscr H} \prob[f_{X_{n+1}} = r, \mathbf C_j = \mathbf c, h(X_{n+1}) = j \mid P, h] }{\E_{P \sim \mathscr P, h \sim \mathscr H} \prob[\mathbf C_j = \mathbf c, h(X_{n+1}) = j \mid P, h] },
\]
for $r=0, \ldots, c_j$, whose expected value,
\begin{equation}\label{bnp_est}
\varepsilon_{f}(\mathscr P)= \sum_{r=0}^{c_j} r \pi_j(r)
\end{equation}
is a BNP estimator $f_{X_{n+1}}$ under a squared loss function. 

In particular, if $\phi^{(n)}(u) = (-1)^{n} \frac{\mathrm{d}^n}{\mathrm{d} u^n} e^{-\theta/J \psi(u)}$, where $\psi(u) = \int_{\R_+}(1 - e^{-us}) \rho(s) \mathrm{d} s$, and  $\kappa(u, n) = \int_{\R_+} e^{-us} s^n \rho(s) \mathrm{d} s$, it holds that
\[
 \pi_j(r) = \frac{\theta}{J} \binom{c_j}{r} \frac{\int_{\R_+} u^{n}  \phi^{(c_j - l)}(u)
 \prod_{k\neq j} \phi^{(c_k)}(u)  \, \kappa(u, r+1) \mathrm{d} u
 }{
  \int_{\R_+ } u^{n}   \phi^{(c_j + 1)}(u)
  \prod_{k\neq j} \phi^{(c_k)}(u)  \mathrm{d} u}.
\]
An application of the Faa di Bruno formula shows that, in general, the terms $\phi^{(c_k)}(u)$ involve complex combinatorial objects, and, more importantly, computing products of the kind $\prod_{k} \phi^{(c_k)}(u)$ require a summation over a discrete index set whose size grows exponentially in $J$ and $c_j$, hence, also in the sample size. 
An exception is the case of $P \sim \text{DP}(\theta, G_0)$, in which case the BNP estimator in \eqref{bnp_est} coincides with the learning-augmented CMS of \citet{Cai(18)}, namely $\varepsilon_{f}(\text{DP}(\theta, G_0)) = c_j J / (\theta + J)$, which depends on the observed sketch $\mathbf{C}_{J}$ only through $C_{h(X_{n+1})}$, namely the size of the bucket in which $X_{n+1}$ is hashed.
As shown in \citet{Ber(23)} this property uniquely characterizes the Dirichlet process.



\subsubsection{The computational complexity of the BNP approach}\label{app:bnp_cost}

Consider the Bayesian model in \eqref{eq:exchangeable_model_hash2} where $P \sim \text{PYP}(\theta, \alpha)$ is distributed as a Pitman-Yor process with parameters $\theta$ and $\alpha$. 

The \emph{full-sketch} posterior found in \cite{Ber(23)} is as follows
\begin{align}\label{post_pyp}
\pi_j^F(r) = \frac{\theta}{J} \binom{c_j}{r} (1 - \alpha)_{(r)} \frac{\sum_{\bm{i} \in \mathcal S(\bm{c}, j, -r)} \frac{\Gamma\left(\frac{\theta + \alpha}{ \alpha} + |\bm{i}|\right)}{J^{|\bm{i}|}} \prod_{k=1}^J \mathscr{C}(c_k - r \delta_{k,j}, i_k; \alpha)}{\sum_{\bm{i} \in \mathcal S(\bm{c}, j, 1)}\frac{ \Gamma\left(\frac{\theta}{\alpha} + |\bm{i}|\right)}{J^{|\bm{i}|}}  \prod_{k=1}^J \mathscr{C}(c_k + \delta_{k,j}, i_k; \alpha)},
\end{align}
where ($i$) $\mathcal S(\bm{c}, j, q)$ is the Cartesian product $\times_{1\leq k\leq J}\{0, \ldots, c_k + \delta_{k, j} q\}$, ($ii$) $\delta_{k,j}$ is the Kronecker delta, $(iii)$ $|\bm{i}| = \sum_{1\leq k\leq J} i_k$, and $(iv)$ $\mathscr{C}(n,k;\alpha)$ denotes the generalized factorial of $n$ of order $k$. That is, for $n\geq0$ and $0\leq k\leq n$, 
\begin{equation}\label{eq:gen_fac}
	\mathscr{C}(n,k;\alpha)=\frac{1}{k!}\sum_{i=0}^{k}(-1)^{i}{n\choose i}(-i\alpha)_{(n)}
\end{equation}
with $\mathscr{C}(0,0;\alpha):=0$, and  $\mathscr{C}(n,0;\alpha):=1$ \citep[Chapter 2]{Cha(05)}.

Observe that the index set $\mathcal S(\bm{c}, j, q)$ has a size that grows as $\mathcal O(n^J)$, making the evaluation of \eqref{post_pyp} unfeasible even for small datasets. As an example, computing the expectation of \eqref{post_pyp} when $n=50$ and $J=5$ takes around 20 minutes on a standard laptop. 

Consider now the \emph{single-bucket} posterior given in \cite{Dol(23)}, namely
\begin{equation}\label{eq:py_single}
    \pi^S_j(r) = \frac{\theta}{J}  \binom{c_j}{r} (1 - \alpha)_{(r)} 
\frac{\sum_{i=0}^{c_j - r} \sum_{j=0}^{n - c_j} \left( \frac{\theta + \alpha}{\alpha} \right)_{(i + j)} \left( \frac{1}{J} \right)^i \left(1 - \frac{1}{J} \right)^j \mathscr{C}(c_j - r, i; \alpha) \mathscr{C}(n - c_j, j; \alpha)}
{\sum_{i=0}^{c_j + 1} \sum_{j=0}^{n - c_j} \left( \frac{\theta}{\alpha} \right)_{(i + j)} \left( \frac{1}{J} \right)^i \left( 1 - \frac{1}{J} \right)^j \mathscr{C}(c_n + 1, i; \alpha) \mathscr{C}(n - c_j, j; \alpha)}.
\end{equation}
Evaluating \eqref{eq:py_single} for a single $r$ requires $\mathcal O\left( c_j n\right) = \mathcal O\left(n^2\right)$ operations, which is considerably faster than $\mathcal O(n^J)$.
However, the double summations in \eqref{eq:py_single} involve the generalized factorial coefficients. Computing them directly using \eqref{eq:gen_fac} is unfeasible in all practical setting. 
Hence, the strategy is to pre-compute them using the recursive relation $\mathscr{C}(n+1, k+1) = \alpha \mathscr{C}(n, k) + (n - k\alpha) \mathscr{C}(n+1, k)$. 
Storing all the generalized factorial coefficients requires $\mathcal O(n^2)$ memory. For instance, if $n=100000$, this takes around 80GBs of memory, much more than storing all the dataset.
Even assuming that storing the generalized factorial coefficients is not an issue, evaluating the expectation of \eqref{eq:py_single} is extremely slow. 
On a standard laptop, one single evaluation of this expectation takes several minutes.
By contrast, in the experiments in Section~\ref{sec:ex_singlehash}, evaluating the $\hat f_{X_{n+1}}$ under the NGGP smoothing via \eqref{eq:prop-nggp} takes 4 milliseconds, while computing the full distribution $(\pi_j(r))_{r \geq 0}$ in \eqref{mass_ngg} takes 18 milliseconds. 

\subsubsection{Approximate inference for BNP frequency recovery} \label{app:approximate-BNP}

The discussion in Appendix~\ref{app:bnp_cost} rules out the use of the exact Bayesian posterior for frequency recovery due to its impractically high computational cost.
 However, one may ask whether approximate inference strategies based on Markov chain Monte Carlo (MCMC) could still be a viable option. Here, we discuss the theoretical challenges associated with MCMC algorithms in the context of frequency recovery from sketched data.
For the purposes of this discussion, it is sufficient to consider the hash function fixed and given.

A first first approach could be to consider the model 
\begin{equation}\label{eq:model1}
    \begin{aligned}
        C_1, \ldots, C_J \mid P(\mathbb S_1), \ldots, P(\mathbb S_J) & \sim \text{Multinomial}(n; P(\mathbb S_1), \ldots, P(\mathbb S_J)), \\
        P(\mathbb S_1), \ldots, P(\mathbb S_J) & \sim \Pi(\mathbb S_1, \ldots, \mathbb S_J),
    \end{aligned}
\end{equation}
where $\Pi(\mathbb S_1, \ldots, \mathbb S_J)$ denotes the finite-dimensional probability distribution of the random probability measure $P \sim \mathscr P$.
It would seem that posterior inference can be carried easily for $(P(\mathbb S_1), \ldots, P(\mathbb S_J))$. 
However, the probability distribution $\Pi$ is generally intractable. 
For instance, the expression of $\Pi$ for a Pitman--Yor process found in \cite{sangalli2006} involve summations over all possible partitions of $\{1, \ldots, n\}$.
The only notable exceptions are the case of the Dirichlet process \citep{Fer(73)} and the normalized inverse-Gaussian process \citep[NIGP,][]{Lij(05)}. Of course, in the case of the Dirichlet process prior, MCMC would be superfluous as posterior inference is available in closed form.
\cite{Dol(21)} employed the NIGP for the frequency recovery problem specifically for the availability of the finite-dimensional distributions and computed a closed-form expression for the single-bucket posterior. However, as noted in \cite{Dol(23)}, the NIGP is not flexible enough to model different power-law behaviors.
Furthermore, having posterior samples from $(P(\mathbb S_1), \ldots, P(\mathbb S_J))$ is not sufficient to answer the frequency recovery problem, as $f_{X_{n+1}}$ depends also on $P(\{X_{n+1}\})$.

A second approach is to introduce $X_1, \ldots, X_n$ as latent variables and consider the model
\begin{equation}\label{eq:model2}
    \begin{aligned}
        C_j &= \sum_{n = 1}^J I[h(X_n) = j], \quad j=1, \ldots, J  \\
        X_i \mid P & \simiid P, \quad i=1, \ldots, n \\
        P & \sim \mathscr P
    \end{aligned}
\end{equation}
While introducing the $X_i$'s in the MCMC state allows to compute $f_{X_{n+1}}$, inference under model \eqref{eq:model2} does not appear to be cheaper than under \eqref{eq:model1}. Indeed, marginalizing out $P$ from \eqref{eq:model2}, we can sample the $X_i$'s using a generalized Chinese restaurant process. However, then each iteration of the MCMC scales as $\mathcal{O}(n k_n)$ where $k_n$ is the number of distinct symbols in the $X_i$'s, which could be of the order of $n^\alpha$ for some $\alpha \in (0, 1)$. Hence, each MCMC iteration would scale super-linearly with $n$, making this approach unfeasible.


\section{Further details on frequency recovery}\label{app:frequency-recovery}

\subsection{Proofs for Section~\ref{sec:setup}}

\subsubsection{Proof of Theorem~\ref{teo_cond_prob}}\label{proof_teo_cond_prob}
Recall that under the statistical model \eqref{eq:exchangeable_model_hash}, $\mathbf{X}_{n}$ is a random sample from the distribution $P=\sum_{s\in\mathbb{S}}p_{s}\delta_{s}$, with $p_{s}\in(0,1)$ being the probability of $s\in\mathbb{S}$, 
and the hash function $h:\mathbb{S}\rightarrow\{1,\ldots,J\}$ is fixed.
Under these assumptions, for any $s\in\mathbb{S}$
\begin{displaymath}
\text{Pr}[X_{i}=s \mid h(X_{i})=j]=\frac{p_{s}}{q_{j}}.
\end{displaymath}
That is, if $\mathbb{S}_{j}=\{s\in\mathbb{S}\text{ : }h(s)=j\}$, with $j=1,\ldots,J$, then the random variables $X_{i}$'s that are hashed into the $j$-the bucket are independent and identically distributed according to
\begin{displaymath}
P_{j}=\sum_{s\in\mathbb{S}_{j}}\frac{p_{s}}{q_{j}}\delta_{s}.
\end{displaymath}
Now, we evaluate the conditional distribution of $f_{X_{n+1}}$, given $\mathbf{C}_{J}$ 
 $h$
i.e, for $r=0,1,\ldots,c_{j}$
\begin{equation}\label{eq_cond_probab}
\text{Pr}[f_{X_{n+1}}=r \mid \mathbf{C}_{J}=\mathbf{c},h(X_{n+1})=j, h]=\frac{\text{Pr}[f_{X_{n+1}}=r,\mathbf{C}_{J}=\mathbf{c},h(X_{n+1})=j   \mid h]}{\text{Pr}[\mathbf{C}_{J}=\mathbf{c},h(X_{n+1})=j \mid h]}.
\end{equation}
We first evaluate the denominator of \eqref{eq_cond_probab}, which is trivial, and then consider the numerator. Since $\mathbf{X}_{n}$ is a random sample from $P$, the sketch $\mathbf{C}_{J}$ is distributed as a Multinomial distribution with parameter $(n,q_{1},\ldots,q_{J})$, where $q_{j}:=\text{Pr}[h(X_{i})=j]$ for $j=1,\ldots, J$. Then,
\begin{align}\label{eq_cond_probab_num}
&\text{Pr}[\mathbf{C}_{J}=\mathbf{c},h(X_{n+1})=j  \mid h]={n\choose c_{1},\ldots,c_{J}}q_{1}\cdots q_{j-1}^{c_{j-1}}q_{j}^{c_{j}+1}q_{j+1}^{c_{j}}\cdots q_{J}^{c_{J}}.
\end{align}
Now, we evaluate the numerator of \eqref{eq_cond_probab}. To such a purpose, it is useful to define the following event: $B(n,r)=\{X_{1}=\cdots=X_{r}=X_{n+1},\{X_{r+1},\ldots,X_{n}\}\cap\{X_{n+1}\}=\emptyset\}$. Then,
\begin{align*}
&\text{Pr}\left[f_{X_{n+1}}=r,\mathbf{C}_{J}=\mathbf{c},h(X_{n+1})=j  \mid h\right]\\
&\quad=\text{Pr}\left[f_{X_{n+1}}=r,\sum_{i=1}^{n}I(h(X_{i})=1)=c_{1},\ldots,\sum_{i=1}^{n}I(h(X_{i})=J)=c_{J},h(X_{n+1})=j  \mid h\right]\\
&\quad={n\choose r}\sum_{s\in\mathbb{S}}\text{Pr}\left[B(n,r),X_{n+1}=s,\sum_{i=1}^{n}I(h(X_{i})=1)=c_{1},\ldots\right.\\
&\quad\quad\left.\ldots,\sum_{i=1}^{n}I(h(X_{i})=J)=c_{J},h(X_{n+1})=j  \mid h\right]\\
&\quad={n\choose r}\sum_{s\in\mathbb{S}}\text{Pr}\left[B(n,r),X_{n+1}=s,\sum_{i=r+1}^{n}I(h(X_{i})=1)=c_{1},\ldots\right.\\
&\quad\quad\ldots\left.,\sum_{i=r+1}^{n}I(h(X_{i})=j)=c_{j}-r,\ldots\right.\\
&\quad\quad\ldots\left.,\sum_{i=r+1}^{n}I(h(X_{i})=J)=c_{J},h(X_{n+1})=j  \mid h\right].
\end{align*}
Accordingly, the distribution of $(f_{X_{n+1}},\mathbf{C}_{J},h(X_{n+1}))$ is completely determined by the distribution of $(X_{1},\ldots,X_{n},X_{n+1})$. Therefore, from the last equation, we can write the following
\begin{align*}
&\text{Pr}[f_{X_{n+1}}=r,\mathbf{C}_{J}=\mathbf{c},h(X_{n+1})=j   \mid h]\\
&\quad={n\choose r}\sum_{s\in\mathbb{S}}\text{Pr}\left[X_{1}=\cdots=X_{r}=X_{n+1}=s,X_{r+1}\neq s,\ldots,X_{n}\neq s\right]\\
&\quad\quad\times\text{Pr}\left[\sum_{i=r+1}^{n}I(h(X_{i})=1)=c_{1},\ldots,\sum_{i=r+1}^{n}I(h(X_{i})=j)=c_{j}-r,\ldots\right.\\
&\quad\quad\quad\left.\ldots,\sum_{i=r+1}^{n}I(h(X_{i})=J)=c_{J},h(X_{n+1})=j \mid  h, X_{r+1}\neq s,\ldots,X_{n}\neq s,X_{n+1}=s\right]\\
&\quad={n\choose r}\sum_{s\in\mathbb{S}}p_{s}^{r+1}(1-p_{s})^{n-r}I(s\in\mathbb{S}\text{ : }h(s)=j)\\
&\quad\quad\times\text{Pr}\left[\sum_{i=r+1}^{n}I(h(X_{i})=1)=c_{1},\ldots,\sum_{i=r+1}^{n}I(h(X_{i})=j)=c_{j}-r,\ldots\right.\\
&\quad\quad\quad\left.\ldots,\sum_{i=r+1}^{n}I(h(X_{i})=J)=c_{J}\mid  h, X_{r+1}\neq s,\ldots,X_{n}\neq s,X_{n+1}=s\right].
\end{align*}
Based on the last identity, we can write the conditional probability of $f_{X_{n+1}}$, given $\mathbf{C}_{J}$ and $h(X_{n+1})$, with respect to the distribution $P_{j}$ on $\mathbb{S}_{j}$. In particular, we can write the following:
\begin{align}\label{eq_cond_probab_den}
&\notag\text{Pr}[f_{X_{n+1}}=r,\mathbf{C}_{J}=\mathbf{c},h(X_{n+1})=j  \mid h]\\
&\notag \quad = {n\choose r}\sum_{s\in\mathbb{S}_{j}}p_{s}^{r+1}(1-p_{s})^{n-r}\\
&\notag\quad\quad\times{n-r\choose c_{1},\ldots,c_{j}-r,\ldots,c_{J}}\left(\frac{q_{1}}{1-p_{s}}\right)^{c_{1}}\cdots\left(\frac{q_{j}-p_{s}}{1-p_{s}}\right)^{c_{j}-r}\cdots\left(\frac{q_{J}}{1-p_{s}}\right)^{c_{J}}\\
&\notag\quad={n\choose r}\sum_{s\in\mathbb{S}_{j}}p_{s}^{r+1}\\
&\notag\quad\quad\times{n-r\choose c_{1},\ldots,c_{j}-r,\ldots,c_{J}}q_{1}^{c_{1}}\cdots(q_{j}-p_{s})^{c_{j}-r}\cdots q_{J}^{c_{J}}\\
&\notag\quad={n\choose r}\sum_{s\in\mathbb{S}_{j}}q_{j}^{c_{j}-r}p_{s}^{r+1}\\
&\notag\quad\quad\times{n-r\choose c_{1},\ldots,c_{j}-r,\ldots,c_{J}}q_{1}^{c_{1}}\cdots\left(1-\frac{p_{s}}{q_{j}}\right)^{c_{j}-r}\cdots q_{J}^{c_{J}}\\
&\notag\quad={n\choose r}\sum_{s\in\mathbb{S}_{j}}q_{j}^{c_{j}+1}\left(\frac{p_{s}}{q_{j}}\right)^{r+1}\\
&\notag\quad\quad\times{n-r\choose c_{1},\ldots,c_{j}-r,\ldots,c_{J}} q_{1}^{c_{1}}\cdots\left(1-\frac{p_{s}}{q_{j}}\right)^{c_{j}-r}\cdots q_{J}^{c_{J}}\\
&\notag\quad={n\choose r}\frac{{n-r\choose c_{1},\ldots,c_{j}-r,\ldots,c_{J}}}{{c_{j}\choose r}}q_{j}^{c_{j}+1}\prod_{j \in \{1, \ldots, J\} \setminus \{i\}}q_{i}^{c_{i}}\sum_{s\in\mathbb{S}_{j}}{c_{j}\choose r}\left(\frac{p_{s}}{q_{j}}\right)^{r+1}\left(1-\frac{p_{s}}{q_{j}}\right)^{c_{j}-r}\\
&\quad={n\choose c_{1},\ldots,c_{J}} q_{j}^{c_{j}+1} \prod_{j \in \{1, \ldots, J\} \setminus \{i\}} q_{i}^{c_{i}}\sum_{s\in\mathbb{S}_{j}}{c_{j}\choose r}\left(\frac{p_{s}}{q_{j}}\right)^{r+1}\left(1-\frac{p_{s}}{q_{j}}\right)^{c_{j}-r}.
\end{align}
Therefore, according to \eqref{eq_cond_probab}, which is combined with \eqref{eq_cond_probab_num} and \eqref{eq_cond_probab_den}, we obtain the following:\begin{align*}
&\text{Pr}[f_{X_{n+1}}=r \mid \mathbf{C}_{J}=\mathbf{c},h(X_{n+1})=j  \mid h]\\
&\quad=\frac{{n\choose c_{1},\ldots,c_{J}}q_{j}^{c_{j}+1}\prod_{j \in \{1, \ldots, J\} \setminus \{i\}}q_{i}^{c_{i}}\sum_{s\in\mathbb{S}_{j}}{c_{j}\choose r}\left(\frac{p_{s}}{q_{j}}\right)^{r+1}\left(1-\frac{p_{s}}{q_{j}}\right)^{c_{j}-r}}{{n\choose c_{1},\ldots,c_{J}}q_{1}\cdots q_{j-1}^{c_{j-1}}q_{j}^{c_{j}+1}q_{j+1}^{c_{j}}\cdots q_{J}^{c_{J}}}\\
&\quad=\sum_{s\in\mathbb{S}_{j}}{c_{j}\choose r}\left(\frac{p_{s}}{q_{j}}\right)^{r+1}\left(1-\frac{p_{s}}{q_{j}}\right)^{c_{j}-r}\\
&\quad={c_{j}\choose r}\sum_{s\in\mathbb{S}_{j}}\left(\frac{p_{s}}{q_{j}}\right)^{r+1}\left(1-\frac{p_{s}}{q_{j}}\right)^{c_{j}-r},
\end{align*}
which is a functional of $P_{j}=\sum_{s\in\mathbb{S}_{j}}(p_{s}/q_{j})\delta_{s}$ on $\mathbb{S}_{j}$, for $j=1,\ldots,J$. This completes the proof.

\subsection{Worst-case analysis}\label{app:frequency-recovery-worstcase}

For notation's sake, we define a slightly different class of distributions for our worst-case analysis, that allows us to take into account the number of support points of each $P_j$.
Let
\[
    \mathcal P_{J, K} : = \{P \text{ such that } P_j \text{ has at most $K$ support points for each } j=1, \ldots, J\}.
\]
Of course, $\mathcal P_L \subseteq  \mathcal P_{J, K}$.
The main technical result to prove Theorem~\ref{theorem:worst-case-informal} is the following lemma.
\begin{lemma}\label{teo:worst_case_f}
Let $K > 0$. Set $R_j(\hat f_\beta; P, h) = \E_P[(\beta_{j}C_{j} - f_{X_{n+1}} )^2 \mid h, h(X_{n+1}) = j]$ as the quadratic risk conditional to $h(X_{n+1}) = j$.
For any $P \in  \mathcal P_{J, K}$ we have
    \begin{equation}\label{eq:risk_freq_ub}
        R_j(\hat f_\beta; P, h) \leq \max_{A_{j}, B_{J}, C, D \in \Omega} \beta^2 n A_{j} + n(n-1) \left(\beta^2 - \frac{2\beta}{K}\right) B_{j} + n C + n(n-1) D,
    \end{equation}
  where the maximum is taken over the set  $\Omega: = \{(a, b, c, d) \in [0, 1]^4: \, a \geq b, a \geq c, b \geq d, c \geq d \}$, such that $A_{j}:= q_j$, $B_{j}:= q_j^2$, $C: = \sum_{s\in\mathbb{S_j}} p_s^2 / q_j$ and $D:=\sum_{s\in\mathbb{S}_j} p_s^3 / q_j$.
\end{lemma}
\begin{proof}
    As a first step, we write the conditional quadratic risk $R(f_{\beta}, P)= \E_{P}[(\beta C_{h(X_{n+1})} - f_{X_{n+1}})^2]$ as follows:
\begin{align*}
   R_j(\hat f_\beta; P, h)= \E_{P} \left[\left(\sum_{i = 1}^n (\beta_j \indicator(h(X_i) = j) - \indicator(X_i=X_{n+1})) \right)^2  \mid h, h(X_{n+1}) = j\right],
\end{align*}
where
\begin{align*}
    &\left(\sum_{i = 1}^n \beta \indicator(h(X_i)=j) - \indicator(X_i=X_{n+1}) \right)^2 \\
    & \quad = \beta^2 \sum_{i, l = 1}^n \indicator(h(X_i)=j) \indicator(h(X_l)=j)\\
    &\quad\quad - 2 \beta \sum_{i, l=1}^n \indicator(h(X_i)=j) \indicator(X_{n+1}=X_l) + \sum_{i, l = 1}^n \indicator(X_{n+1}=X_i) \indicator(X_{n+1}=X_l).
\end{align*}
Hence,
\begin{align*}
     R_j(\hat f_\beta; P, h) =& \beta^2 \sum_{i, l = 1}^n \prob[h(X_i) = h(X_l) = j\mid h] \\
        & -2\beta \sum_{i, l = 1}^n \prob[h(X_i)=j, X_{n+1} = X_l  \mid h, h(X_{n+1}) = j] \\
        &+ \sum_{i, l = 1}^n \prob[X_i = X_l = X_{n+1} \mid h(X_{n+1}) = j]
\end{align*}
Now, we consider each term separately. In particular, by separating the cases $i=l$ and $i \neq l$ we get
\begin{displaymath}
 \sum_{i, l = 1}^n \prob[h(X_i) = h(X_l) = j  \mid h ]  = n q_j + n (n-1) q_j^2,
\end{displaymath}
\begin{displaymath}
 \sum_{i, l = 1}^n \prob[h(X_i) = j, X_{n+1} = X_l  \mid h, h(X_{n+1}) = j] = n q_j\sum_{s \in \mathbb S_j} \left(\frac{p_s}{q_j}\right)^2 + n(n-1)  q_j^2 \sum_{s \in \mathbb S_j}\left(\frac{p_s}{q_j}\right)^2
\end{displaymath}
and
\begin{displaymath}
 \sum_{i, l = 1}^n \prob[X_i = X_l = X_{n+1} \mid h(X_{n+1}) = j] = n q_j \sum_{s \in \mathbb S_j}  \left(\frac{p_s}{q_j}\right)^2 + n (n-1) q_j^2 \sum_{s \in \mathbb S_j}  \left(\frac{p_s}{q_j}\right)^3,
\end{displaymath}
i.e,
\begin{align*}
    R_j(\hat f_\beta; P, h) =& \beta_j^2 \left( n q_j + n (n-1) q_j^2 \right) \\
                    &\quad  - 2 \beta \left(n q_j\sum_{s \in \mathbb S_j} \left(\frac{p_s}{q_j}\right)^2 + n(n-1)  q_j^2 \sum_{s \in \mathbb S_j}\left(\frac{p_s}{q_j}\right)^2 \right) + n (n-1) q_j^2 \sum_{s \in \mathbb S_j}  \left(\frac{p_s}{q_j}\right)^3.
\end{align*}
Now, let us focus on the set of distributions with at most $K \times J$ support points, such that each distribution $P_j$ has at most $K$ support points. Under this assumption it holds that $\sum_{s \in \mathbb S_j} (p_s/q_j)^2 \geq \frac{1}{K}$, where equality is achieved for $p_s = (K)^{-1}$.
Simple algebra completes the proof.
\end{proof}

The proof of Theorem~\ref{theorem:worst-case-informal} follows by noticing that for $K$  large enough, in such a way that $\beta^2 - \frac{2\beta}{K} > 0$, then the maximum on the right-hand side of \eqref{eq:risk_freq_ub} is obtained by setting $A_{j} = B_{j} = C = D = 1$, which entails $P \equiv P^{\ast}_j$, with $P^{\ast}_j$ being a degenerate distribution on $s^*_j \in \mathbb S_j$. 
Then
\begin{equation}\label{eq:ineq}
    R(\hat f_{\beta}; P, h) = \sum_{j=1}^J q_j  R_j(\hat f_\beta; P, h) \leq  \sum_{j=1}^J q_j  R_j(\hat f_\beta;P^{\ast}_j, h) 
\end{equation}
whose minimum is attained when $\beta_1 = \beta_2 = \ldots \beta_J = 1$.That is, the resulting estimator for $f_{X_{n+1}}$ is
\begin{displaymath}
\hat{f}_{1}=C_{h(X_{n+1})}.
\end{displaymath}

Finally, when $P = \delta_{s^*}$, the inequality in \eqref{eq:ineq} can be replaced with an equality showing the tightness of the upper bound.

\subsection{Algorithmic minimax analysis for frequency estimation}\label{app:minimax}

In the minimax framework, we look for $\beta$ and $P$ that solve the following optimization problem
\begin{equation}\label{eq:minmax}
    \inf_{\beta \in \mathbb R} \sup_{P \in \Delta_{K, J}} R(f_{\beta};P,  h),
\end{equation}
where $\Delta_{K, J}$ is the set of distributions on $\mathbb S$ such that, when restricted on each $\mathbb S_j$, $P$ gives positive mass to at most $K$ elements. The problem \eqref{eq:minmax} is convex in $\beta$ but is not concave in $P$. Then, looking for an analytic solution, or checking that a particular $(\beta, P)$ is a solution, is a non-trivial task (this might be impossible or, at least, NP-Hard). Here, we rely on numerical software in order to approximate \eqref{eq:minmax}. By using the epigraph representation of the minimax problem, we consider the constrained (equivalent) optimization
\begin{align*}
     \inf_{\beta \in \mathbb R} \ \  &Z \\
      \text{sub.} \ &Z \geq R(f_{\beta}; P, h) \\
      & P \in \Delta_{K, J},
\end{align*}
which is amenable to numerical optimization. We make use the Python package Pyomo as an interface to the IpoptSolver, which is the state of the art solver for constrained optimization. Since the problem is not concave in $P$, there will be locally-optimal solutions to \eqref{eq:minmax}. We run the algorithm several times with different starting points for $P$ and $\beta$, using the high-confidence criterion for the number of restarts \citep{Dic(14)}. We set $n=10,000$ and let $J = 10, 25, 50$, $K=10, 25, 50, 100$. For all such values, we find that the numerical solution to \eqref{eq:minmax} identifies as optimal parameters $\beta^* = 1/K$ and $P^*$ being the uniform distribution over the $K \times J$ support points. Figure~\ref{fig:minmax} summarizes these findings. 

\begin{figure}[!htb]
\centering
\includegraphics[width=\linewidth]{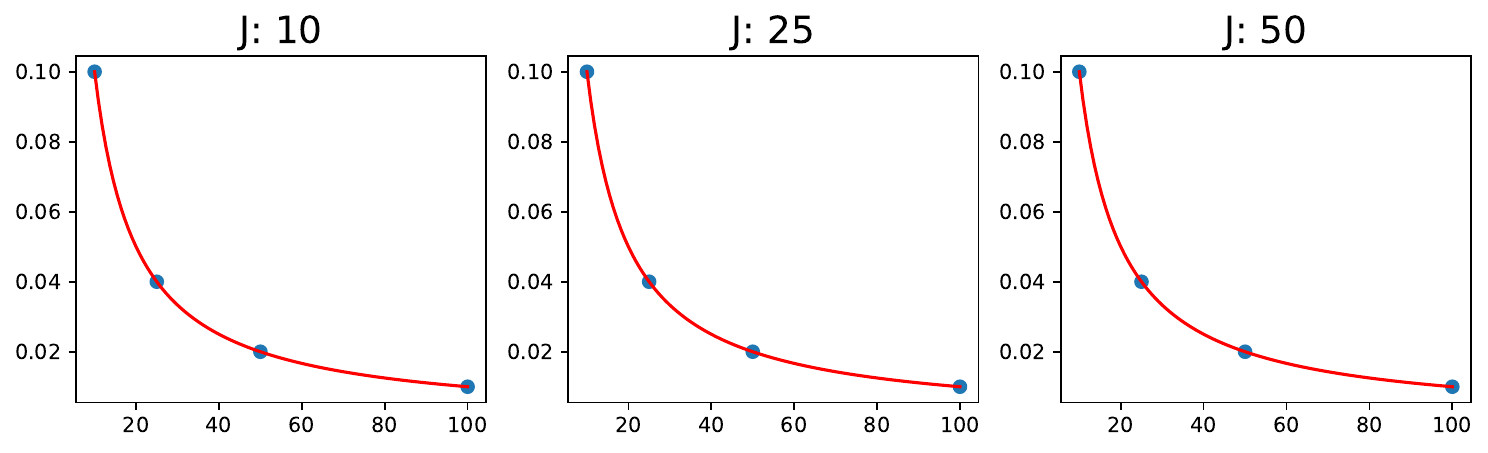}
\includegraphics[width=\linewidth]{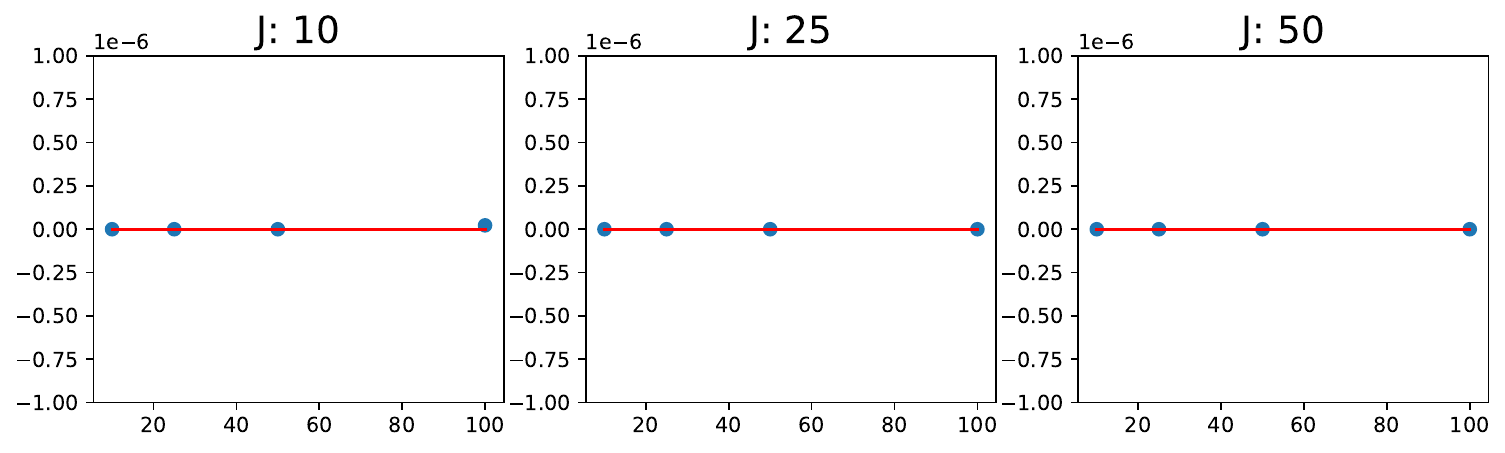}
\caption{Top row: the coefficients $\beta^*$ found numerically (blue dots) as $K$ varies on the x-axis. The red line is $1/K$. 
Bottom row: total variation distance between $P^*$ found numerically and the uniform distribution.}
\label{fig:minmax}
\end{figure} 

\FloatBarrier

\subsection{Proofs for Section~\ref{sec:smoothed}}

\subsubsection{Proof of Theorem~\ref{teo_smooth}}\label{proof_teo_smooth}

We start by recalling the definition of $\pi_j(r; P, h)$, namely
\[
    \pi_{j}(r;P, h):= {c_{j}\choose r}\sum_{s\in\mathbb{S}_{j}}\left(\frac{p_{s}}{q_{j}}\right)^{r+1}\left(1-\frac{p_{s}}{q_{j}}\right)^{c_{j}-r}.
\]
Let $P \sim \mathrm{NRM}(\theta, G_0, \rho)$ such that $P = \sum_{k \geq 1} p_{k} \delta_{s_k}$. Fix $h$ and let $P_j$ denote the renormalized restriction of $P$ to $\mathbb S_j$. It follows from the restriction property of NRMs that $P_j \sim \mathrm{NRM}(\theta G_0(\mathbb S_j), G_{0, \mathbb S_j} / G_0(\mathbb S_j), \rho)$. 
Hence, $P_j = \sum_{S \geq 1} p_{j, k} \delta_{\tilde{s}_k}$ almost surely, where $\tilde s_k \simiid  G_{0, \mathbb S_j} / G_0(\mathbb S_j)$ and $(p_{j, k})_{k \geq 1}$ are probabilities obtained by normalizing the jumps of a Poisson process on $\mathbb R_+$ as explained in Section \ref{sec:background_nrm}. Then, we have
\[
    \E_{P \sim \text{NRM}(\theta, G_0, \rho)}(\pi_{j}(r;P, h)) = \binom{c_j}{r} \E_{P_j \sim \text{NRM}(\theta G_0(\mathbb S_j),G_0 / G_0(\mathbb S_j),\rho)} \left[ \sum_{k \geq 1} p_{j, k}^{r+1} (1 - p_{j, k})^{c_j-r}  \right].
\]
In particular, the latter expectation does not depend on the atoms $\tilde{s}_k$ nor on the set $\mathbb S_j$, but only on the law of $( p_{j, k})_{k \geq 1}$.
Therefore, letting $\tilde{P}_j \sim \text{NRM}(\theta G_0(\mathbb S_j),\mathcal{U}([0, 1]),\rho)$ we further have
\[
    \E_{P \sim \text{NRM}(\theta, G_0, \rho)}(\pi_{j}(r;P, h)) = \binom{c_j}{r} \E_{\tilde{P}_j \sim \text{NRM}(\theta G_0(\mathbb S_j),\mathcal{U}([0, 1]),\rho)} \left[ \sum_{k \geq 1} p_{j, k}^{r+1} (1 - p_{j, k})^{c_j-r}  \right].
\]
The strong universality of of the hash family $\mathscr{H}_{J}$ and the independence of $\mathscr{H}_{J}$ from $P$ imply $\E_{h \sim \mathscr{H}}[G_0(\mathbb S_j)] = 1/J$ for any $j=1, \ldots, J$, so that 
$\E_{h \sim \mathscr{H}}[\tilde{P}_{j}]\sim\text{NRM} \left(\theta/J,\mathcal{U}([0, 1]),\rho\right)$, leading to
\begin{equation}\label{eq:aaaa}
     \E_{h \sim \mathscr H}\left[\E_{P \sim \text{NRM}(\theta, G_0, \rho)}(\pi_{j}(r;P, h))\right] = \binom{c_j}{r} \E_{\bar P_j \sim \text{NRM}(\theta / J,\mathcal{U}([0, 1]),\rho)} \left[ \sum_{k \geq 1} p_{j, k}^{r+1} (1 - p_{j, k})^{c_j-r}  \right].
\end{equation}
Now recall that  $\bar P_j \sim \text{NRM}(\theta/ J,\mathcal{U}([0, 1]),\rho)$ entails $\bar P_j = \mu_j / T_j$ where $\mu_j$ is a CRM with L\'evy intensity $\theta / J \ddr s \rho(\ddr x)$ with $T_{j}$ being the total mass of $\mu_{j}$, i.e. $T_{j}={\mu}_{j}([0, 1])$. 
Then, from \citet[Equation 11]{Pit(03)} the right-hand side of \eqref{eq:aaaa} is
\begin{displaymath}
{c_{j}\choose r}\int_{0}^{1}v^{r}(1-v)^{c_{j}-r}f_{V_{j}}(v)\ddr v
\end{displaymath}
where
\begin{displaymath}
f_{V_{j}}(v)=\frac{\theta}{J}v\int_{0}^{+\infty}t\rho(tv)f_{T_{j}}(t(1-v))\ddr t,
\end{displaymath}
follows from \citet[Lemma 1]{Pit(03)}, i.e. \citet[Equation 25]{Pit(03)}. This completes the proof.


\subsubsection{Proof of Theorem~\ref{teo:optimality}}\label{proof_teo_optimality}

Let $(\tilde{X}_1, \ldots, \tilde{X}_{C_j}) = \{X_i \colon h(X_i) = j\}$. We have
\begin{align*}
    \text{cMSE}(\beta) &= \E\left[ \left(\beta C_j - \sum_{i=1}^{C_j} \indicator[\tilde{X}_i = X_{n+1}] \right)^2\right].
\end{align*}
Recall that, conditional to $h(X_{n+1}) = j$, $(\tilde{X}_1, \ldots, \tilde{X}_{C_j}, X_{n+1})$ is a sample from $P_j \sim \text{NRM}(\theta/J, JG_{0, \mathbb S_j}, \rho)$. Then,
\begin{align*}
    \text{cMSE}(\beta) &= \beta^2 \E\left[C_j^2\right] - 2 \beta \E\left[C_j \sum_{i=1}^{C_j} \indicator[\tilde{X}_i = X_{n+1}] \right] \\
    & \qquad + \E\left[\sum_{i, \ell=1}^{C_j} \indicator[\tilde{X}_i = \tilde{X}_\ell = X_{n+1}]\right]
\end{align*}
focusing on the second term on the r.h.s., by the tower property of the expectation, we have
\begin{align*}
    & \E\left[C_j \sum_{i=1}^{C_j} \indicator[\tilde{X}_i = X_{n+1}] \right] \label{eq:aa}\\
    & \qquad = \E\left[ \E\left[C_j \sum_{i=1}^{C_j} \indicator[\tilde{X}_i = X_{n+1}] \mid C_j \right]\right] \\
    & \qquad = \E\left[ C_j^2 \E\left[\indicator[\tilde{X}_i = X_{n+1}]  \right]\right] \\
    & \qquad = \E[C_j^2] \eppf_j(2),
\end{align*}
where $\eppf_j$ is the exchangeable partition probability function \citep{Pit(06)} associated to $P_j$.
Therefore, the optimal $\beta$ is $\hat \beta_c = \eppf_j(2)$.

The equality between $\hat \beta_c$ and the expression implied by the smoothed estimator follows from noticing that $\eppf_j(2)$ can be written (see equation 2.25 in \cite{Pit(06)}) as
\[
    \eppf_j(2) = \int_{0}^{\infty} v f_{V_j}(v)
\]
where $f_{V_j}(v)$ is as \eqref{struct}.
Now, consider the expression of $\pi_j(r)$ and let the smoothed estimator be $\hat f^S = \sum_{r \geq 0} r \pi_j(r)$. By linearity and the tower property and the expectation yields
\begin{align*}
    \hat f^S &= \E\left[\sum_{r \geq 1} r \text{Binomial}(r; c_j, V_j) \right] \\
            &= c_j \E[V_j] = c_j \eppf(2)
\end{align*}

It is easy to see that $\hat \beta_c C_j$ is an unbiased estimator:
\[
    \E[f_{X_{n+1}} \mid h(X_{n+1}) = j] = \E\left[\sum_{i=1}^{C_j} \indicator[\tilde{X}_i = X_{n+1}]\right] = \E[C_j] \eppf(2).
\]

\subsection{Analytical details and proofs for the smoothing examples}\label{app:smoothing-examples}

\subsubsection{The distribution of $V_j$ in the Dirichlet process case}\label{proof_struct_dp}

We show here that, under the DP smoothing assumption, $V_j \sim \mathrm{Beta}(1, \theta/J)$.
The proof follows by combining the general form \eqref{struct} with the L\'evy intensity of the DP. 
In particular, if $P\sim\text{DP}(\theta,G_{0})$ then by the restriction property $P_{j}\sim\text{DP}(\theta/J,JG_{0,\mathbb{S}_{j}})$, for $j=1,\ldots,J$. That is, $P_{j}$ is a NRM obtained by normalizing a CRM $\tilde{\mu}_{j}$ on $\mathbb{S}_j$ with L\'evy intensity 
\begin{equation}\label{pk2}
\nu(\ddr x,\ddr s)=\theta G_{0,\mathbb{S}_{j}}(\ddr s)\frac{\text{e}^{-x}}{x}\ddr x.
\end{equation}
From \eqref{pk2}, by means of the L\'evy-Khintchine formula for the Laplace functional of $\mu_{j}$, one has
\begin{equation}\label{pk3}
f_{T_{j}}(t)=\frac{1}{\Gamma(\theta/J)}t^{\theta/J-1}\text{e}^{-t},
\end{equation}
\citep[Section 5.1]{Pit(03)}. Then, by combining \eqref{struct} with \eqref{pk2} and \eqref{pk3} we can write that
\begin{align*}
f_{V_{j}}(v)&=\frac{\theta}{J}v\int_{0}^{+\infty}t\frac{\text{e}^{-tv}}{tv}\frac{1}{\Gamma(\theta/J)}(t(1-v))^{\theta/J-1}\text{e}^{-t(1-v)}\ddr t\\
&=\frac{\theta/J}{\Gamma(\theta/J)}\int_{0}^{+\infty}(t(1-v))^{\theta/J-1}\text{e}^{-t}\ddr t\\
&=\frac{\theta/J}{\Gamma(\theta/J)}(1-v)^{\theta/J-1}\int_{0}^{+\infty}t^{\theta/J-1}\text{e}^{-t}\ddr t\\
&=\frac{\theta}{J}(1-v)^{\theta/J-1},
\end{align*}
which is the density function of a Beta distribution with parameter $(1,\theta/J)$. The proof is completed.

\subsubsection{Proof of Equation \eqref{eq:prop-dp}}\label{proof_prop_dp}

By Fubini's theorem we have
\begin{align*}
    \hat{f}_{X_{n+1}}^{\text{DP}} &= \E_{P \sim \text{DP}, h \sim \mathscr{H}_J}[\varepsilon_f(P,h)] \\
    &= \sum_{r=0}^{c_{j}} r \E_{P \sim \text{DP}, h \sim \mathscr{H}_J}[\pi_j(r; P,h)] \\
    &= \sum_{r=0}^{c_{j}}r\int_{0}^{1}{c_{j}\choose r}v^{r}(1-v)^{c_{j}-r}f_{V_{j}}(v)\ddr v\\
&=c_{j}\int_{0}^{1}vf_{V_{j}}(v)\ddr v\\
&=c_{j}\frac{J}{\theta+J}.
\end{align*}


\subsubsection{Proof of Theorem~\ref{dp_charact}}\label{app_dp_charact}

We start by observing that $\hat f_{X_{n+1}}^{\text{NRM}} = \E_{P \sim \text{NRM}} \left[\varepsilon_f(P)\right] = \sum_{r=0}^{c_j} r \pi_j(r)$, where $\pi_j$ is defined in Theorem~\ref{teo_smooth}, depends on the sketch $\mathbf C_J$ only through $C_{h(X_{n+1})}$ by definition.
On the other hand, the Bayesian posterior in Theorem 2.2 in \cite{Ber(23)} depends, in general, on the whole sketch. Furthermore, Theorem 2.3 in \cite{Ber(23)} characterizes the DP as the sole Poisson-Kingman (a fortiori, the only NRM) for which the posterior of $f_{X_{n+1}}$ depends on $\mathbf C$ only through $C_{h(X_{n+1})}$.
The proof is concluded by noting the equality between the posterior under the DP prior \citep{Cai(18), Dol(23)} and $\hat f^{DP}_{X_{n+1}}$.

\subsubsection{Proof of Equation \eqref{struct_ngg}}\label{proof_struct_ngg}

The proof follows by combining the general form \eqref{struct} with the L\'evy intensity of the NGGP. 
If $P\sim\text{NGGP}(\theta,G_{0},\alpha,\tau)$ then by the restriction property $P_{j}\sim\text{NGGP}(\theta/J,JG_{0,\mathbb{S}_{j}},\alpha,\tau)$, for $j=1,\ldots,J$. That is, $P_{j}$ is a NRM obtained by normalizing a CRM $\tilde{\mu}_{j}$ on $\mathbb{S}_j$ with L\'evy intensity 
\begin{equation}\label{pk21}
\nu(\ddr x,\,\ddr s)=\theta G_{0,\mathbb{S}_{j}}(\ddr s)\frac{1}{\Gamma(1-\alpha)}x^{-1-\alpha}\text{e}^{-\tau x}.
\end{equation}
From \eqref{pk21}, by means of the L\'evy-Khintchine formula for the Laplace functional of $\mu_{j}$, one has
\begin{equation}\label{pk31}
f_{T_{j}}(t)=\text{e}^{\frac{(\theta/J)\tau^{\alpha}}{\alpha}}\text{e}^{-\tau t}f_{\alpha}(t),
\end{equation}
where $f_{\alpha}$ is a density function such that $\int_{0}^{+\infty}\exp\{-\lambda t\}f_{\alpha}(t)\ddr t=\exp\{-(\theta/\alpha J)\lambda^{\alpha}\}$ \citep[Section 5.4]{Pit(03)}. Then, by combining \eqref{struct} with \eqref{pk21} and \eqref{pk31} we can write that
\begin{align*}
f_{V_{j}}(v)&=\frac{\theta}{J}v\int_{0}^{+\infty}t\frac{1}{\Gamma(1-\alpha)}(tv)^{-1-\alpha}\text{e}^{-\tau tv}\text{e}^{\frac{(\theta/J)\tau^{\alpha}}{\alpha}}\text{e}^{-\tau t(1-v)}f_{\alpha}(t(1-v))\ddr t\\
&=\frac{\theta/J}{\Gamma(1-\alpha)}\text{e}^{\frac{(\theta/J)\tau^{\alpha}}{\alpha}}v^{-\alpha}\int_{0}^{+\infty}t^{-\alpha}\text{e}^{-\tau t}f_{\alpha}(t(1-v))\ddr t\\
&=\frac{\theta/J}{\Gamma(1-\alpha)}\text{e}^{\frac{(\theta/J)\tau^{\alpha}}{\alpha}}v^{-\alpha}(1-v)^{\alpha-1}\int_{0}^{+\infty}z^{-\alpha}\text{e}^{-\tau\frac{z}{1-v}}f_{\alpha}(z)\ddr z\\
&[\text{by using }z^{-\alpha}=\frac{1}{\Gamma(\alpha)}\int_{0}^{+\infty}y^{\alpha-1}\text{e}^{-yz}\ddr y]\\
&=\frac{\theta/J}{\Gamma(1-\alpha)}\text{e}^{\frac{(\theta/J)\tau^{\alpha}}{\alpha}}v^{-\alpha}(1-v)^{\alpha-1}\int_{0}^{+\infty}\left(\frac{1}{\Gamma(\alpha)}\int_{0}^{+\infty}y^{\alpha-1}\text{e}^{-yz}\ddr y\right)\text{e}^{-\tau\frac{z}{1-v}}f_{\alpha}(z)\ddr z\\
&=\frac{\theta/J}{\Gamma(\alpha)\Gamma(1-\alpha)}\text{e}^{\frac{(\theta/J)\tau^{\alpha}}{\alpha}}v^{-\alpha}(1-v)^{\alpha-1}\int_{0}^{+\infty}\int_{0}^{+\infty}y^{\alpha-1}\text{e}^{-z\left(y+\frac{\tau}{1-v}\right)}f_{\alpha}(z)\ddr z\ddr y\\
&=\frac{\theta/J}{\Gamma(\alpha)\Gamma(1-\alpha)}\text{e}^{\frac{(\theta/J)\tau^{\alpha}}{\alpha}}v^{-\alpha}(1-v)^{\alpha-1}\int_{0}^{+\infty}y^{\alpha-1}\text{e}^{-\frac{\theta/J}{\alpha}\left(y+\frac{\tau}{1-v}\right)^{\alpha}}\ddr y\\
&=\frac{\theta/J}{\Gamma(\alpha)\Gamma(1-\alpha)}\text{e}^{\frac{(\theta/J)\tau^{\alpha}}{\alpha}}v^{-\alpha}(1-v)^{\alpha-1}\int_{0}^{+\infty}y^{\alpha-1}\text{e}^{-\frac{(\theta/J)\tau^{\alpha}}{\alpha}\left(\frac{y}{\tau}+\frac{1}{1-v}\right)^{\alpha}}\ddr y\\
&[\text{by the change of variable }h=y/\tau]\\
&=\tau^{\alpha}\frac{\theta/J}{\Gamma(\alpha)\Gamma(1-\alpha)}\text{e}^{\frac{(\theta/J)\tau^{\alpha}}{\alpha}}v^{-\alpha}(1-v)^{\alpha-1}\int_{0}^{+\infty}h^{\alpha-1}\text{e}^{-\frac{(\theta/J)\tau^{\alpha}}{\alpha}\left(h+\frac{1}{1-v}\right)^{\alpha}}\ddr h\\
&[\text{by the change of variable }y=(h-hv+v)/(h-hv+1)]\\
&=\tau^{\alpha}\frac{\theta/J}{\Gamma(\alpha)\Gamma(1-\alpha)}\text{e}^{\frac{(\theta/J)\tau^{\alpha}}{\alpha}}v^{-\alpha}(1-v)^{\alpha-1}\\
&\quad\times\int_{v}^{1}\frac{1}{(1-y)^{2}}\left(\frac{y-v}{(1-v)(1-y)}\right)^{\alpha-1}\text{e}^{-\frac{(\theta/J)\tau^{\alpha}}{\alpha}\left(\frac{y-v}{(1-v)(1-y)}+\frac{1}{1-v}\right)^{\alpha}}\ddr y\\
&=\tau^{\alpha}\frac{\theta/J}{\Gamma(\alpha)\Gamma(1-\alpha)}\text{e}^{\frac{(\theta/J)\tau^{\alpha}}{\alpha}}v^{-\alpha}\\
&\quad\times\int_{v}^{1}(y-v)^{\alpha-1}(1-y)^{-\alpha-1}\text{e}^{-\frac{(\theta/J)\tau^{\alpha}}{\alpha}(1-y)^{-\alpha}}\ddr y\\
&=\tau^{\alpha}\frac{\theta/J}{\Gamma(\alpha)\Gamma(1-\alpha)}\text{e}^{\frac{(\theta/J)\tau^{\alpha}}{\alpha}}\\
&\quad\times\int_{v}^{1}\frac{1}{y}\left(\frac{v}{y}\right)^{1-\alpha-1}\left(1-\frac{v}{y}\right)^{\alpha-1}(1-y)^{-\alpha-1}\text{e}^{-\frac{(\theta/J)\tau^{\alpha}}{\alpha}(1-y)^{-\alpha}}\ddr y,
\end{align*}
which is the density function of the distribution of the product of two independent random variables: i) a Beta random variable with parameter $(1-\alpha,\alpha)$; ii) the random variable 
\begin{displaymath}
1-\left(\frac{\frac{\theta\tau^{\alpha}}{J\alpha}}{\frac{\theta\tau^{\alpha}}{J\alpha}+E}\right)^{1/\alpha},
\end{displaymath}
with $E$ being a negative Exponential random variable with parameter $1$. The proof is completed.

\subsubsection{Proof of Equation \eqref{eq:prop-nggp}}\label{proof_prop_nggp}

Applying the same reasoning of Appendix~\ref{proof_prop_dp}, we get
\begin{align*}
\hat{f}_{X_{n+1}}^{\text{\tiny{(NGGP)}}}&=\sum_{r=0}^{c_{j}}r\int_{0}^{1}{c_{j}\choose r}v^{r}(1-v)^{c_{j}-r}f_{V_{\theta,\alpha,\tau}}(v)\ddr v\\
&=c_{j}\int_{0}^{1}vf_{V_{\theta,\alpha,\tau}}(v)\ddr v\\
&=c_{j}\E\left[B_{1-\alpha,\alpha}\left(1-\left(\frac{\frac{\theta\tau^{\alpha}}{J\alpha}}{\frac{\theta\tau^{\alpha}}{J\alpha}+E}\right)^{1/\alpha}\right)\right]\\
&=c_{j}(1-\alpha)\left(1-\E\left[\left(\frac{\frac{\theta\tau^{\alpha}}{J\alpha}}{\frac{\theta\tau^{\alpha}}{J\alpha}+E}\right)^{1/\alpha}\right]\right)\\
&=c_{j}(1-\alpha)\left(1-\frac{\theta\tau^{\alpha}}{J\alpha}\text{e}^{\frac{\theta\tau^{\alpha}}{J\alpha}}E_{1/\alpha}\left(\frac{\theta\tau^{\alpha}}{J\alpha}\right)\right),
\end{align*}
where $E$ denotes the exponential integral function, defined as $E_{a}(z)=\int_{1}^{+\infty}x^{-a}\exp\{-zx\}\ddr x$. 

\subsection{Empirical estimation of smoothing parameters} \label{sec:estimation}

We turn now to the problem of estimating the smoothing parameters from the sketch $\mathbf{C}_{J}$. 
The NRM smoothing assumption requires specifying the parameter $\theta>0$ and a collection of parameters, here denoted by $\xi$, introduced by the specification of $\rho$ in the L\'evy intensity of the NRM. For instance, the specification of the L\'evy intensity of the NGGP implies that $\xi=(\alpha,\tau)$. 
Motivated by Theorem~\ref{teo:optimality}, which ensures the optimality of our smoothed estimator if $X_1, \ldots, X_n$ is a sample from an NRM, we propose to estimate such hyperparameters by maximizing the marginal likelihood of the sketch, i.e.,
\begin{align} \label{eq:marginal-likelihood}
    L(\theta, \xi; \mathbf c) = \E_{P \sim \text{NRM}(\theta, G_0, \rho)}[  \prob[\mathbf C_J = \mathbf c; P]],
\end{align}
where $\prob[\mathbf C_J = \mathbf c; P]$ is defined as
\[
\text{Pr}[\mathbf{C}_{J}=\mathbf{c}; P]={n\choose c_{1},\ldots,c_{J}}\prod_{j=1}^{J}q_{j}^{c_{j}}.
\]
An equivalent strategy, motivated from the Bayesian standpoint, was adopted in \cite{Cai(18)} and \cite{Dol(23)}.

A tractable closed-form expression of \eqref{eq:marginal-likelihood} is available in the special case of the DP smoothing assumption \citep{Cai(18)}, as detailed in Appendix~\ref{app:estimation-dp} below. 
To the best of our knowledge, it is not possible to compute a closed-form expression of \eqref{eq:marginal-likelihood} for more general NGGP smoothing assumptions  \citep{Pit(03),Lij(10)}. The only exception may be when $\alpha=1/2$, which corresponds to the NIGP. In that case, \eqref{eq:marginal-likelihood} can be computed by relying on the definition of the NIGP presented in \citet[Section 3.1]{Lij(05)}, although such an approach results in a complicated expression involving Bessel functions \citep{Dol(21)}. For these reasons, we propose two alternative strategies to estimate smoothing parameters.

The first strategy is relatively efficient from a computational perspective, but it can only be applied if one has direct access to a subset of the original (un-sketched) data. This approach is inspired by the work of \cite{Ses(22)}, which utilized a similar solution for a somewhat different purpose. The key idea is that, if one can store the first $m<n$ samples in memory before sketching the rest of the data, which is an assumption that is  sometimes realistic, then the smoothing parameter $(\theta,\xi)$ can be estimated by maximizing the marginal likelihood of the observed data, which tends to be a more tractable quantity compared to the marginal likelihood of the sketch. We refer to Appendix~\ref{sec:estimation-nggp} for further details on this approach. 

The second strategy to estimate $(\theta,\xi)$ was originally proposed by \cite{Dol(23)} and it is more widely applicable, because it only requires access to the sketched data, but it is also more computationally expensive. The key idea is to simulate synthetic data based on a smoothing distribution given the smoothing parameter $(\theta,\xi)$, apply the hash function to produce a sketch of them, and then optimize the choice of $(\theta,\xi)$ in such a way as to minimize the $1$-Wasserstein distance between the empirical distribution of the synthetic sketch and the observed sketch. This minimum-distance approach is explained more carefully in Appendix~\ref{sec:estimation-nggp}, where we provide some implementation details for the special case of NGGP smoothing assumption, and also outline some convenient computational shortcuts.

\subsubsection{The Dirichlet Process case}\label{app:estimation-dp}

From the finite-dimensional laws of the Dirichlet process, it is easy to show that the sketch $\mathbf C_J$ follows a Dirichlet-multinomial distribution with parameter $(\theta/J, \ldots, \theta/J)$.
The log-likelihood is then
\[
    \ell(\theta; \mathbf C_J) \propto \log \Gamma(\theta) - \log \Gamma(\theta + n ) - J \log \Gamma(\theta/J) + \sum_{j=1}^J \log \Gamma (\theta/J + c_j)
\]
We choose $\theta$ by maximizing $\ell(\theta; \mathbf C_J)$ via the BFGS algorithm.

\subsubsection{The Normalized Generalized Gamma Process case} \label{sec:estimation-nggp}

The finite-dimensional laws under the NGGP do not have closed-form expressions so that the likelihood of $\mathbf C_J$ is not available. We consider then two strategies. The first one is more computationally efficient but applies only when we have access to a subsample of the $X_i$'s. 
This is often possible when handling streaming data, for which we can simply decide to store the first $m$ samples in memory before hashing them. 
Such a setting was recently considered in \cite{Ses(22)}.
The second one is more general but more demanding from the computational side. 

In the first case, we estimate the parameters by maximizing the marginal likelihood of $X_1, \ldots, X_m$.  Its expression can be found in \cite{Lij(07)} but it involves a sum of $m$ terms (with alternating signs) involving incomplete Gamma functions. We found it essentially impossible to obtain a numerically stable procedure to evaluate such an expression for $m > 50$.
Instead, we focus here on the alternative expression found in Proposition 3 of \cite{Jam(09)}. Denoting by $X^*_1, \ldots, X^*_{k_m}$ the $k_m$ the unique values in $X_1, \ldots, X_m$ and by $n_{1}, \ldots, n_{k_m}$ their cardinalities, we have
\[
    \prob[X_1 \in \mathrm d x_1, \ldots, X_n \in \mathrm d x_n] = \frac{\theta^{k_m} e^\beta}{\Gamma(m)} \prod_{j=1}^{k_m} (1 - \alpha)_{(n_j - 1)} \int_{\mathbb R_+} u^{m-1}  e^{-\frac{\theta}{\alpha}(\tau + u)^\alpha} (\tau + u)^{-m + k_m \alpha}\mathrm d u
\]
where $\beta = \theta/\alpha \tau^\alpha$ and $(a)_{(b)}$ is the Pocchammer symbol.

For numerical stability reasons we work in the logarithmic scale so that the log-likelihood is
\[
     \ell(\theta; \mathbf C_J) \propto k_m \log \theta + \beta - k_m \log(1 - \alpha) \sum_{j=1}^{k_m} \log \Gamma(n_j - \alpha) + \log \left( F_{m, k_m, \theta, \alpha, \tau} \right)
\]
where $F_{m, k_m, \theta, \alpha, \tau} = \int_{\mathbb R_+} u^{m-1}  e^{-\frac{\theta}{\alpha}(\tau + u)^\alpha} (\tau + u)^{-m + k_m \alpha}\mathrm d u$. To evaluate the logarithm of the last integral we employ once again the log-sum-exp trick.

Given that the parameters $\theta$ and $\tau$ are redundant (i.e., if $\mu \sim NGGP(\theta, \alpha, \tau, G_0)$ then for any $q > 0$ $q \mu \sim NGGP(\theta q^\alpha, \alpha, \tau / q, G_0)$) we fix $\tau = 1/2$ and optimize over $\theta \in \mathbb{R_+}$ and $\alpha \in (0, 1)$ using the BFGS algorithm.

If instead, we have access only to the sketch $\mathbf C_J$, we follow the approach proposed in \cite{Dol(23)} based on a minimum-distance estimator.
Specifically, this consists in finding $(\hat \theta, \hat \alpha)$ that minimize 
\begin{equation}\label{eq:exp_wass}
    \E\left[W_1(\mathbf C_J, \tilde{\mathbf C}_J(\theta, \alpha, m) \frac{n}{m} )\right]
\end{equation}
where $W_1$ is the 1-Wasserstein distance and the expectation is with respect $\tilde{\mathbf C}_J(\theta, \alpha, m)$, which is a synthetic sketch obtained by sampling $\tilde{X}_1, \ldots, \tilde{X}_m$ from the marginal distribution of an NGGP with parameter $(\theta, \alpha, 1/2)$ and hashing them with another hash function. 
Note that the use of the Wasserstein distance is fundamental due to the arbitrariness of the labels attached to the $X_i$'s.
In practice, \eqref{eq:exp_wass} is approximated via Monte Carlo.
To sample $\tilde{X}_1, \ldots, \tilde{X}_m$, we use a generalized P\'olya Urn scheme and sample $\tilde{X}_1 \sim G_0$, $\tilde{X}_2 \mid \tilde{X}_1$, and so on.
Explicit formulas for these conditional distributions are given in Eqs. (9)-(11) of \cite{Lij(07)}. However, as discussed above in the case of the marginal distribution, these involve sums of incomplete Gamma functions that are extremely prone to numerical instability.  Instead, we use the generative scheme presented in Section 3.3 of \cite{Jam(09)}. We report it below.
\begin{itemize}
    \item Sample $\tilde{X}_1 \sim G_0$, set $X^*_1 = \tilde{X}_1$, $n_1 = 1$, $k=1$.

    \item For $i = 1, \ldots, m-1$

    \begin{enumerate}
        \item Sample a latent variable $U$ with density
        \[
            f_U(u) \propto \frac{u^{i-1}}{(u + \tau)^{i - \alpha k}} e^{-\frac{\theta}{\alpha}((u+\tau)^\alpha - \tau^\alpha)}
        \]
        \item Sample $c$ from a categorical distribution over $1, \ldots, k + 1$ such that
        \[
            \prob[c = h \mid U = u] \propto \begin{cases}
                n_h - \alpha & \text{ for } h = 1, \ldots, k \\
                \theta(u + \tau)^\alpha & \text{ for } h = k+1
            \end{cases}
        \]
        If $c = k+1$, sample $X^*_{k+1} \sim G_0$ and update $k \mapsto k+1$. Set $X_{i+1} = X^*_{c}$ and $n_k \mapsto n_k + 1$.
    \end{enumerate}
\end{itemize}
When sampling the latent variable $U$, it is more convenient to sample $V = \log U$ whose density is
\begin{equation*}
        f_V(v) \propto \frac{e^{v i}}{(e^v + \tau)^{i - \alpha k_i}} e^{-\frac{\theta}{\alpha}((e^v+\tau)^\alpha - \tau^\alpha)}
\end{equation*}
which is log-concave. We employ the adaptive rejection sampling algorithm \citep{Gil(92)} implemented in the \texttt{Julia} package \texttt{AdaptiveRejectionSampling.jl} \citep{Tec(18)}.

\section{Cardinality estimation} \label{sec:cardinality}

The results presented in Sections~\ref{sec:setup} and~\ref{sec:smoothed} can be extended to also address the distinct problem of cardinality recovery. 
To describe this problem, consider the usual sample of $\mathbb{S}$-valued data points $(x_{1},\ldots,x_{n})$, for $n\geq1$, and let $\mathbf{C}_{j}$ be a sketch of the $x_{i}$'s obtained through a random hash function $h$ from the strong universal family $\mathscr{H}_{J}$. The goal is then to estimate the number of distinct symbols in the sample, i.e., 
\begin{displaymath}
k_{n}:=|\{x_{1},\ldots,x_{n}\}|,
\end{displaymath}
where $|\cdot|$ denotes the cardinality of a set. This problem may be considered to be complementary to that of frequency estimation but, as we shall see, it can be addressed using the same data sketch. 

Cardinality recovery has already received considerable attention in diverse applications, including spam detection \citep{Bec(06)}, text mining \citep{Bro(00)}, genomics \citep{Bre(18)}, and database management \citep{Hal(16)}. A classical approach is based on the hyperloglog algorithm \citep{Fla(83),Fla(85),Fla(07)}, which may be described as an analogous of the CMS for frequency recovery. While the hyperloglog uses random hashing to sketch the data differently compared to the CMS \citep{Cor(20)}, we will show that many of the concepts and techniques developed in Sections~\ref{sec:setup} and~\ref{sec:smoothed} can be adapted quite naturally to address the cardinality recovery problem. At the same time, however, we will see that cardinality recovery raises some additional technical challenges.

\subsection{Preliminary results}\label{sec:cardinality-preliminary}

We assume $(x_{1},\ldots,x_{n})$ to be a random sample from the model defined in \eqref{eq:exchangeable_model_hash}, and we consider estimating the number $K_{n}$ of distinct symbols in $\mathbf{X}_{n}$ through the conditional expectation of $K_{n}$ given the sketch $\mathbf{C}_{J}$ and the hash function $h$, i.e.,
\[
   \bar \varepsilon_k(P, h) := \E_{P}[K_n \mid \mathbf{C}_J = \mathbf c, h]. 
\]
Our first step is to replace  $\bar \varepsilon_k(P)$ with an estimator $\varepsilon_k(P)$ which can be expressed as a linear function of $\text{Pr}[f_{X_{n+1}}=r\,|\,\mathbf{C}_{J}=\mathbf{c}, h]$. The latter probability is available from Theorem ~\ref{teo_cond_prob} by marginalizing \eqref{cond_prob} over $h_{X_{n+1}}$.

To this end, let $M_{r,n}$ be the number of distinct symbols with frequency $r$ in $\mathbf{X}_{n}$, for $r=1,\ldots,n$, such that $K_{n}=\sum_{1\leq r\leq n}M_{r,n}$ and $n=\sum_{1\leq r\leq n}rM_{r,n}$. If $\mathcal{M}_{n,J}$ is the domain (support) of the conditional distribution of $\mathbf{M}_{n}=(M_{1,n},\ldots,M_{n,n})$ given $\mathbf{C}_{J}$, then by the law of total probability, for any $r \in \{0,1,\ldots,c_{j}\}$,
\begin{equation}\label{fund_ide}
\begin{aligned}
\text{Pr}[f_{X_{n+1}}=r\,|\,\mathbf{C}_{J}=\mathbf{c}, h] & =\sum_{\mathbf{m}\in\mathcal{M}_{n,J}} \text{Pr}[f_{X_{n+1}}=r\,|\,\mathbf{M}_{n}=\mathbf{m}, \mathbf{C}_{J}=\mathbf{c}, h] \text{Pr}[\mathbf{M}_{n}=\mathbf{m}\,|\,\mathbf{C}_{J}=\mathbf{c}, h] \\
& =\sum_{\mathbf{m}\in\mathcal{M}_{n,J}} \text{Pr}[f_{X_{n+1}}=r\,|\,\mathbf{M}_{n}=\mathbf{m}] \text{Pr}[\mathbf{M}_{n}=\mathbf{m}\,|\,\mathbf{C}_{J}=\mathbf{c}, h],
\end{aligned}
\end{equation}
where the second equality follows from the fact that $\mathbf{M}_{n}$ is a sufficient statistics for $f_{X_{n+1}}$, see, e.g., Equations (11)-(13) in \cite{Ber(23)}.
Write $\mathfrak{p}_{r,n}:=\text{Pr}[f_{X_{n+1}}=r\,|\,\mathbf{M}_{n}=\mathbf{m}]$ for the conditional probability that $X_{n+1}$ is a symbol with frequency $r$ in $\mathbf{X}_{n}$. This is known as the $r$-order coverage of $\mathbf{X}_{n}$ \citep{Goo(53)}.
To obtain $\varepsilon_k(P, h)$, we replace $\mathfrak{p}_{r,n}$ in \eqref{fund_ide} with an estimator $\tilde{\mathfrak{p}}_{r,n} = \beta_r m_{r+1}$ for some $\beta > 0$. Then we have
\[
\text{Pr}[f_{X_{n+1}}=r\,|\,\mathbf{C}_{J}=\mathbf{c}, h] \approx \sum_{\mathbf{m}\in\mathcal{M}_{n,J}} \beta_r m_{r+1} \text{Pr}[\mathbf{M}_{n}=\mathbf{m}\,|\,\mathbf{C}_{J}=\mathbf{c}, h].
\]
Now observe that the conditional expectation of $M_{r+1, n}$ given $\mathbf C_j$ equals
\[
    \E_P[M_{r+1, n} \mid \mathbf C_J = \mathbf c, h] = \sum_{\mathbf{m}\in\mathcal{M}_{n,J}} m_{r+1} \text{Pr}[\mathbf{M}_{n}=\mathbf{m}\,|\,\mathbf{C}_{J}=\mathbf{c}, h],
\]
which suggests the following approximation 
\[
    \E_P[M_{r+1, n} \mid \mathbf C_J = \mathbf c, h] \approx \frac{1}{\beta_r} \text{Pr}[f_{X_{n+1}}=r\,|\,\mathbf{C}_{J}=\mathbf{c}, h].
\]
Since $K_n = \sum_{r = 1}^n M_{r, n}$, we can exploit the relation above to obtain an estimator for $\bar \varepsilon_k(P)$ as follows
\begin{equation}\label{fund_ide1}
    \begin{aligned}
    \bar \varepsilon_k(P, h) = \E_{P}[K_{n}\,|\,\mathbf{C}_{J} = \mathbf{c}, h] &= \sum_{r=0}^{n-1}\E_{P}[M_{r+1,n}\,|\,\mathbf{C}_{J}=\mathbf{c}, h] \\
    & \approx \sum_{r=0}^{n-1} \frac{1}{\beta_r} \text{Pr}[f_{X_{n+1}}=r\,|\,\mathbf{C}_{J}=\mathbf{c}, h] =: \varepsilon_k(P, h).
\end{aligned}
\end{equation}



The next proposition gives $\varepsilon_{k}(P, h)$ in \eqref{fund_ide1} by making use of the Good-Turing estimate of $\mathfrak{p}_{r,n}$ \citep{Goo(53)}, which arguably is the sole nonparametric estimate of $\mathfrak{p}_{r,n}$ to be a linear function of $m_{r+1}$.

\begin{proposition}\label{teo_cond_dist}
For $n\geq1$, let $(x_{1},\ldots,x_{n})$ be a sample modeled according to \eqref{eq:exchangeable_model_hash}, and let $\mathbf{C}_{J}=\mathbf{c}$ be the corresponding sketch. If $\hat{\mathfrak{p}}_{r,n}$ is the Good-Turing estimate of  $\mathfrak{p}_{r,n}$, i.e.,
\begin{displaymath}
\hat{\mathfrak{p}}_{r,n}=(r+1)\frac{m_{r+1}}{n},
\end{displaymath}
then
\begin{equation}\label{estim_k}
\varepsilon_{k}(P, h)
= n\sum_{j=1}^{J}q_{j}\sum_{r=0}^{c_{j}}\frac{1}{r+1}\pi_{j}(r;P)
= n\sum_{j=1}^{J}\frac{q_{j}}{c_{j}+1}\sum_{s\in\mathbb{S}_{j}}\left(1-\left(1-\frac{p_{s}}{q_{j}}\right)^{c_{j}+1}\right).
\end{equation}
\end{proposition}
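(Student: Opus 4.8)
The plan is to combine the two representations of $\varepsilon_{k}(P)$ from \eqref{fund_ide} and \eqref{fund_ide1}. First I would start from \eqref{fund_ide1}, write $\varepsilon_k(P) = \sum_{r=0}^{n-1} \E_P[M_{r+1,n}\,|\,\mathbf{C}_J=\mathbf{c}]$, and use the fact that the Good-Turing estimate $\hat{\mathfrak{p}}_{r,n} = (r+1) m_{r+1}/n$ is linear in $m_{r+1}$. Substituting this estimate into \eqref{fund_ide} and solving for $\E_P[M_{r+1,n}\,|\,\mathbf{C}_J=\mathbf{c}]$ in terms of $\text{Pr}[f_{X_{n+1}}=r\,|\,\mathbf{C}_J=\mathbf{c}]$ gives $\E_P[M_{r+1,n}\,|\,\mathbf{C}_J=\mathbf{c}] = \frac{n}{r+1}\,\text{Pr}[f_{X_{n+1}}=r\,|\,\mathbf{C}_J=\mathbf{c}]$, so that
\begin{displaymath}
\varepsilon_k(P) = n\sum_{r=0}^{n-1} \frac{1}{r+1}\,\text{Pr}[f_{X_{n+1}}=r\,|\,\mathbf{C}_J=\mathbf{c}].
\end{displaymath}
This is exactly the ``further details'' promised in the reference to Appendix \ref{proof_fund_ide1}, so I would either cite that or reproduce the one-line inversion.

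Next I would express $\text{Pr}[f_{X_{n+1}}=r\,|\,\mathbf{C}_J=\mathbf{c}]$ by marginalizing \eqref{cond_prob} over the bucket $h(X_{n+1})$. Since $\text{Pr}[h(X_{n+1})=j\,|\,\mathbf{C}_J=\mathbf{c}] = q_j$ (by strong universality and independence of $h$ from $\mathbf{X}_n$, as used throughout Section \ref{sec:setup}), the law of total probability gives $\text{Pr}[f_{X_{n+1}}=r\,|\,\mathbf{C}_J=\mathbf{c}] = \sum_{j=1}^{J} q_j\, \pi_j(r;P)\,I[r \le c_j]$. Plugging this into the display above and swapping the order of summation yields the first equality in \eqref{estim_k}:
\begin{displaymath}
\varepsilon_k(P) = n\sum_{j=1}^{J} q_j \sum_{r=0}^{c_j} \frac{1}{r+1}\,\pi_j(r;P).
\end{displaymath}
The upper limit becomes $c_j$ because $\pi_j(r;P)=0$ for $r>c_j$ by Theorem \ref{teo_cond_prob}.

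Finally, for the second equality I would substitute the closed form $\pi_j(r;P) = \binom{c_j}{r}\sum_{s\in\mathbb{S}_j}(p_s/q_j)^{r+1}(1-p_s/q_j)^{c_j-r}$ from \eqref{cond_prob}, interchange the sums over $r$ and $s\in\mathbb{S}_j$, and evaluate the inner sum $\sum_{r=0}^{c_j} \frac{1}{r+1}\binom{c_j}{r} a^r (1-a)^{c_j-r}$ for $a = p_s/q_j$. Using $\frac{1}{r+1}\binom{c_j}{r} = \frac{1}{c_j+1}\binom{c_j+1}{r+1}$ and shifting the index, this inner sum equals $\frac{1}{(c_j+1)a}\sum_{t=1}^{c_j+1}\binom{c_j+1}{t} a^t (1-a)^{c_j+1-t} = \frac{1}{(c_j+1)a}\bigl(1-(1-a)^{c_j+1}\bigr)$ by the binomial theorem. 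Multiplying back by the leftover factor $a = p_s/q_j$ from $(p_s/q_j)^{r+1}$ cancels the $a$ in the denominator, leaving $\frac{1}{c_j+1}\bigl(1-(1-p_s/q_j)^{c_j+1}\bigr)$, and summing over $s\in\mathbb{S}_j$ gives the stated expression. The only mildly delicate point is the combinatorial identity and index shift in this last step; everything else is bookkeeping. I expect the main obstacle — really the only one requiring care — to be justifying the inversion of \eqref{fund_ide} (i.e., that using a linear estimate of $\mathfrak{p}_{r,n}$ makes $\varepsilon_k(P)$ genuinely linear in the $\text{Pr}[f_{X_{n+1}}=r\,|\,\mathbf{C}_J=\mathbf{c}]$), which I would handle by the explicit argument above rather than appealing vaguely to linearity.
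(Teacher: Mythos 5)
Your proposal is correct and follows essentially the same route as the paper's proof: invert \eqref{fund_ide} via the linearity of the Good--Turing estimate to get $\E_{P}[M_{r+1,n}\mid\mathbf{C}_{J}=\mathbf{c}]=\tfrac{n}{r+1}\Pr[f_{X_{n+1}}=r\mid\mathbf{C}_{J}=\mathbf{c}]$, marginalize Theorem \ref{teo_cond_prob} over $h(X_{n+1})$ to introduce the $q_{j}\pi_{j}(r;P)$ terms, and finish with the identity $\tfrac{1}{r+1}\binom{c_{j}}{r}=\tfrac{1}{c_{j}+1}\binom{c_{j}+1}{r+1}$ and the binomial theorem. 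The only difference is that you spell out the final binomial summation explicitly, which the paper dispatches as ``a summation of Binomial probabilities.''
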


See Appendix~\ref{proof_teo_cond_dist} for the proof of Proposition~\ref{teo_cond_dist}. 
Similarly to $\varepsilon_{f}(P, h)$ in the context of frequency recovery, $\varepsilon_{k}(P, h)$ is a functional of the unknown distribution $P_{j}$ on $\mathbb{S}_{j}:=\{s\in\mathbb{S}\text{ : }h(s)=j\}$, induced from the distribution $P$ through random hashing. However, a notable difference between $\varepsilon_{f}(P, h)$ and $\varepsilon_{k}(P, h)$ is that the latter involves all $C_{j}$'s, and it also depends on the unknown probability $q_{j}:=\text{Pr}[h(X_{i})=j \mid h]$, which must be estimated from $\mathbf{C}_{J}$. 

An intuitive plug-in estimate of $q_{j}$ may be obtained by exploiting the fact that $\mathbf{C}_{J}$ is distributed as a Multinomial distribution with parameter $(n,q_{1},\ldots,q_{J})$. Then, for each $j \in [J]$, the maximum likelihood estimator of $q_{j}$ is  
\begin{displaymath}
\hat{q}_{j}=\frac{c_{j}}{n}.
\end{displaymath} 
Therefore, we propose to approximate \eqref{estim_k} by replacing $q_{j}$ with the estimate $\hat{q}_{j}$. This will form the cornerstone of our approach of smoothed cardinality estimation, presented in Appendix~\ref{app:cardinality-smoothed}. 

\subsection{Worst-case analysis for cardinality recovery} \label{app:cardinality-recovery-worst}

A worst-case analysis is more challenging in the context of estimating $K_n$, compared the frequency recovery problem presented in Section~\ref{sec:np-estim}, because the expression of $\varepsilon_{k}(P, h)$ in \eqref{estim_k} does not depend linearly on the counts of a single hash bucket, unlike that for $\varepsilon_{f}(P, h)$ in \eqref{cond_prob_mean}.
Therefore, we find it useful to take a slightly different approach in this section. Noting that $K_n = \sum_{r=1}^n M_{r, n}$, it is intuitive to consider the problem of nonparametric estimation of $M_{r,n}$---the number of distinct symbols with frequency $r$.

To construct a class of estimators for $M_{r,n}$, we will rely on the following approximation which can be obtained along similar lines of Proposition~\ref{teo_cond_dist},
\begin{equation}\label{gen_prop}
    \mathbb E_{P}[M_{r+1, n}\,|\,\mathbf{C}_{J}=\mathbf{c}, h] \approx \frac{n}{r+1} \sum_{j=1}^J q_j \frac{r+1}{c_j + 1} \mathbb E_{P_j}[M_{r+1, c_j +1} \mid h].
\end{equation}

Intuitively, the sole information available from the sketch $\mathbf{C}_{J}$ about $\E_{P_j}[M_{r+1, c_j +1} \mid h]$ is that $M_{r+1, c_j +1} \leq (c_j + 1) / (r+1)$.
Therefore, it is natural to consider the class of linear estimators $\hat{M}_{r+1, c_j +1} := \beta_j (c_j + 1) / (r+1)$, with parameter $\beta_j \in [0,1]$.
This gives rise to the following class of estimators for $M_{r+1,n}$:
\begin{equation}\label{eq:m_est}
    \hat M_{\mathbf \beta} = \frac{n}{r+1} \sum_{j=1}^J q_j \beta_j = \frac{n}{r+1}\langle \bm{q}, \bm{\beta} \rangle,
\end{equation}
where $\langle \bm{q}, \bm{\beta} \rangle$ is the scalar product between $\mathbf{q}=(q_{1},\ldots,q_{J})$ and the $\boldsymbol{\beta}=(\beta_{1},\ldots,\beta_{J})$. 

The next theorem provides an upper bound for the quadratic risk $R(\hat M_{\mathbf \beta}; P, h) = \mathbb E_P[(\hat M_{\mathbf \beta} - M_{r+1, n})^2 \mid h]$ associated with the class of estimators defined above, under some relatively weak assumptions on the scalar product $\langle \bm{q}, \bm{\beta} \rangle$.
\begin{theorem}\label{teo:worst_case_m}
For any fixed $r\geq0$, assume that $\langle \bm{q}, \bm{\beta} \rangle < (r+1) / 2n $. Then, $R(\hat M_{\mathbf \beta}; P, h) \leq U(\bm{q}, \bm{\beta})$, where
\begin{align} \label{eq:cardinality-worst-U}
    U(\bm{q}, \bm{\beta})&:= \left(\frac{n}{r+1}\right)^2 \langle \bm{q}, \bm{\beta} \rangle^2+  \left(1 -  2 \frac{n}{r+1} \langle \bm{q}, \bm{\beta} \rangle \right) A_{r,n}+B_{r,n},
\end{align}
for
\begin{displaymath}
A_{r,n}:=\binom{n}{r+1}  \left(\frac{r}{n-1}\right)^r \left(1 - \frac{r}{n-1}\right)^{n - r - 1}
\end{displaymath}
and
\begin{displaymath}
B_{r,n}:=\binom{n}{r+1 \ r+1}\left(\frac{r}{n - 2}\right)^{2r} \left( 1 -  \frac{2r}{n - 2}\right)^{n - 2r - 2}.
\end{displaymath}
\end{theorem}
\begin{proof}
    As a first step, we write the quadratic risk associated to the estimator \eqref{eq:m_est}. In particular, we write
\begin{align}\label{worst_case_m}
\begin{split}
	& \E_{P} \left[  \left(\hat M_{r+1, n} - M_{r+1, n}\right)^2 \mid h\right]\\
	& \quad = \left(\frac{n}{r+1}\right)^2 \langle \bm{q}, \bm{\beta} \rangle^2 - 2 \frac{n}{r+1} \langle \bm{q}, \bm{\beta} \rangle \binom{n}{r+1} \sum_{s \in \mathbb S} p_s^{r+1} (1 - p_s)^{n - r - 1} \\
	& \quad\quad +\binom{n}{r+1} \sum_{s \in \mathbb S} p_s^{r+1} (1 - p_s)^{n - r - 1}\\
	&\quad\quad +  \binom{n}{r+1 \ r+1} \sum_{s \neq v \in \mathbb S} p_s^{r+1} p_v^{r+1}(1 - p_s - p_v)^{n - 2r - 2},
      \end{split}
\end{align}
where $ \binom{n}{x \ y}  = \frac{n!}{x! y! (n - x - y)!}$. Now, the second term and the third term on the right-hand side of \eqref{worst_case_m} can be handled by using Proposition 3 in \cite{Pai(22c)}. In particular, we have
\begin{equation}\label{eq:ub1}
	\sum_{s \in \mathbb S} p_s^{r+1} (1 - p_s)^{n - r - 1}  \leq \max_{q \in [0, 1]} q^{r} (1 - q)^{n - r } = \left(\frac{r}{n-1}\right)^r \left(1 - \frac{r}{n-1}\right)^{n - r - 1}.
\end{equation}
As far as the last term of \eqref{worst_case_m} concerned, from Proposition 2 in \cite{Pai(22c)} we can write that
\begin{equation}\label{eq:ub2}
	 \sum_{s \neq v \in \mathbb S} p_s^{r+1} p_v^{r+1}(1 - p_s - p_v)^{n - 2r - 2} \leq \max_{q_1, q_2 \in \Delta_2} q_1^{r} q_2^{r}(1 - q_1 - q_2)^{n - 2r - 2} =: \psi(q_1, q_2),
\end{equation}
where $\Delta_2$ is the two-dimensional simplex $\{x, y \in [0, 1]^2 \text{ s.t. } x + y \leq 1\}$. Also, for any $q_1, q_2 \in \Delta_2$,
\begin{displaymath}
r \log q_1 + r \log q_2 = 2r \left( \frac{\log q_1 + \log q_2}{2} \right) \leq 2r \log \left( \frac{q_1 + q_2}{2} \right)
\end{displaymath}
such that
\begin{displaymath}
	\psi(q_1, q_2) \leq \psi\left(\frac{q_1 + q_2}{2}, \frac{q_1 + q_2}{2} \right),
\end{displaymath}
i.e.,
\begin{displaymath}
	\max_{q_1, q_2 \in \Delta_2} q_1^{r} q_2^{r}(1 - q_1 - q_2)^{n - 2r}  = \max_{q \in [0, 0.5]} q^{2r} (1 - 2q)^{n - 2r - 2} = \left(\frac{r}{n - 2}\right)^{2r} \left( 1 -  \frac{2r}{n - 2}\right)^{n - 2r - 2 }.
\end{displaymath}
Now, if put together the above findings, under the assumption that $\langle \bm{q}, \bm{\beta} \rangle < (r+1) / 2n $, we have
\begin{align*}
	& \E_{P} \left[  \left(\hat M_{r+1, n} - M_{r+1, n}\right)^2 \mid h \right]\\
	& \quad\leq  \left(\frac{n}{r+1}\right)^2 \langle \bm{q}, \bm{\beta} \rangle^2\\
	& \quad\quad+ \left(1 -  2 \frac{n}{r+1} \langle \bm{\pi}, \bm{\beta} \rangle \right) \binom{n}{r+1}  \left(\frac{r}{n-1}\right)^r \left(1 - \frac{r}{n-1}\right)^{n - r - 1} \\
	& \quad\quad+ \left(\frac{r}{n - 2}\right)^{2r} \left( 1 -  \frac{2r}{n - 2}\right)^{n - 2r - 2},
\end{align*}
which completes the proof.
\end{proof}
Note that the assumption $\langle \bm{q}, \bm{\beta} \rangle < (r+1) / 2n $ relates  $\bm{\beta}$ to the distribution $P$ through the probabilities $q_j$, and it is easy to check.

Since the only unknown quantity in $U(\bm{q}, \bm{\beta})$ is $\bm{q}$, it makes sense to substitute the maximum likelihood estimator $\hat q_j = c_j / n$ and then optimize the risk with respect to $\bm{\beta}$.
This leads to the conclusion that the worst-case optimal $\bm{\beta}$ is any point in $[0, 1]^J$ satisfying
\begin{displaymath}
	\sum_{j=1}^J c_j \beta_j = n \binom{n-1}{r} \left( \frac{r}{n-1}\right)^r \left( 1 - \frac{r}{n-1} \right)^{n- r }.
\end{displaymath}
Therefore, this optimization problem unfortunately does not admit a unique solution.  
Plugging any of the worst-case optimal $\bm{\beta}$ in \eqref{eq:m_est}, we can check from Eq. (2) in \cite{Gne(07)} that the estimator $\hat M_\beta$ coincides with the expected value of the number of symbols with frequency $r$ in a sample of size $n$ from a uniform distribution over $n / r$ symbols.
Therefore, as in the case of the frequency estimation problem, the worst-case estimator is rather unsatisfactory and motivates the use of smoothed estimators also for the cardinality recovery.

Alternatively, instead of substituting the maximum likelihood estimator for $q_j$, we can proceed in a ``min-max'' fashion and study the following optimization problem
\begin{align*}
    \min_{\bm{\beta} \in [0, 1]^J} \max_{\bm{q} \in \Delta_J} U(\bm{q}, \bm{\beta}),
\end{align*}
where $U(\bm{q}, \bm{\beta})$ is the upper bound for the quadratic risk defined in \eqref{eq:cardinality-worst-U} and the $\Delta_J$ is the $J$-th dimensional simplex. 
By proceeding similarly to Appendix~\ref{app:minimax}, we assume that $P$ has at most $KJ$ support points and that the induced distribution $P_{j}$ has at most $K$ points, for any $j \in [J]$. 
From the discussion above, it should not be surprising that the estimator obtained is not unique and not interesting as well.
Indeed, we performed an extensive numerical study (not reported here in the interest of brevity) which confirmed that the solution to minimax problem is $q_j = 1/J$, for each $j \in [J]$, while the optimal value of $\bm{\beta}$ is not unique.

We conclude this section by emphasizing that a limitation of this worst-case analysis is that the upper bound $U(\bm{q}, \bm{\beta})$ for the quadratic risk that is not tight, in contrast with the analysis of the frequency estimation problem from Section~\ref{sec:np-estim}.
However, the analysis presented in this section can still be informative. In fact, if $P$ is the uniform distribution over $n / r$ support points such that $q_j = 1/J$, then $|R(\hat M_{\beta}, P) - U(\bm{q}, \beta)| \leq K$ where $K$ is a constant that does not depend on $\bm{\beta}$. 
Therefore, also in the case of the cardinality recovery problem, we can see that fully nonparametric estimation is challenging and the estimators we obtain using a worst-case framework are essentially trivial. This conclusion suggests that additional modelling assumptions are needed to obtain satisfactory estimators and motivates our framework for smoothed estimation.

\subsection{Smoothed cardinality estimation} \label{app:cardinality-smoothed}

Our approach of smoothed estimation is motivated by the general difficulty of obtaining nontrivial cardinality estimates in a (fully) nonparametric setting, namely without any assumption on $P$. This is demonstrated in Appendix~\ref{app:cardinality-recovery-worst} by presenting worst-case analysis of this problem analogous to those previously discussed in Section~\ref{sec:np-estim} in the context of frequency estimation. Following in the footsteps of Section~\ref{sec:smoothed}, we take the results of Theorem~\ref{teo_smooth} as a starting point to derive estimators of $K_{n}$ as the expected value of $\varepsilon_{k}(P, h)$ with respect to the smoothing distribution in the class of NRMs. 
We focus on the DP and the NGGP. By applying \eqref{mass_dp}, the next proposition gives an explicit and simple estimator of $K_{n}$ under the DP smoothing assumption.

\begin{proposition}\label{prop_dp_cardinality}
For $n\geq1$, let $(x_{1},\ldots,x_{n})$ be a sample modeled according to \eqref{eq:exchangeable_model_hash}, and let $\mathbf{C}_{J}=\mathbf{c}$ be the corresponding sketch. If $P\sim\text{DP}(\theta,G_{0})$, for some $\theta$, it follows that
\begin{align} \label{eq:K-hat-dp}
  \hat{K}_{n}^{\text{DP}}
  := \E_{P\sim\text{DP}(\theta,G_{0}), h \sim \mathscr{H}_J}[\varepsilon_{k}(P, h)]
  = n\frac{\theta}{J}\sum_{j=1}^{J}\frac{q_{j}}{1+c_{j}}\left(\psi\left(1+c_{j}+\frac{\theta}{J}\right)-\psi\left(\frac{\theta}{J}\right)\right),
\end{align}
where $\psi$ is the digamma function, defined as the derivative of the logarithm of the Gamma function.
\end{proposition}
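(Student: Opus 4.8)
The plan is to start from the formula for $\varepsilon_{k}(P)$ given in Proposition \ref{teo_cond_dist}, namely
\begin{displaymath}
\varepsilon_{k}(P)= n\sum_{j=1}^{J}q_{j}\sum_{r=0}^{c_{j}}\frac{1}{r+1}\pi_{j}(r;P),
\end{displaymath}
and take its expectation with respect to the DP smoothing distribution $P\sim\text{DP}(\theta,G_{0})$. Since $n$, $J$ and the $q_{j}$'s (which depend only on the hash function, not on $P$) are constants under this expectation, and since $c_{j}$ is held fixed in the conditioning, linearity of expectation reduces the computation to evaluating $\E_{P\sim\text{DP}(\theta,G_{0})}[\pi_{j}(r;P)]=\pi_{j}(r)$ for each $j$ and $r$. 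By Equation \eqref{mass_dp}, this quantity is already known in closed form: $\pi_{j}(r)={c_{j}\choose r}\frac{\theta}{J}\frac{\Gamma(r+1)\Gamma(\theta/J+c_{j}-r)}{\Gamma(\theta/J+c_{j}+1)}$, the Beta-Binomial mass with parameters $(c_{j},1,\theta/J)$.

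The core of the proof is then the single sum $S_{j}:=\sum_{r=0}^{c_{j}}\frac{1}{r+1}\pi_{j}(r)$. I would substitute the Beta-Binomial expression, pull the factor $\frac{\theta}{J}\big/\Gamma(\theta/J+c_{j}+1)$ out of the sum, and rewrite $\frac{1}{r+1}{c_{j}\choose r}\Gamma(r+1)=\frac{c_{j}!}{(r+1)(c_{j}-r)!}=\frac{1}{c_{j}+1}\cdot\frac{(c_{j}+1)!}{(r+1)!\,(c_{j}-r)!}\cdot r!$; more cleanly, one observes that $\frac{1}{r+1}\binom{c_j}{r}=\frac{1}{c_j+1}\binom{c_j+1}{r+1}$, so that after reindexing with $\ell=r+1$ the sum becomes $\frac{1}{c_{j}+1}\sum_{\ell=1}^{c_{j}+1}\binom{c_{j}+1}{\ell}\Gamma(\ell)\Gamma(\theta/J+c_{j}+1-\ell)$. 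Using $\binom{c_j+1}{\ell}\Gamma(\ell)=\frac{(c_j+1)!}{\ell\,(c_j+1-\ell)!}$ and recognizing a telescoping/Beta-integral structure, the sum collapses. The cleanest route is the Beta-integral representation $\pi_{j}(r)=\int_0^1\binom{c_j}{r}v^r(1-v)^{c_j-r}f_{V_\theta}(v)\,\ddr v$ with $f_{V_\theta}$ the $\text{Beta}(1,\theta/J)$ density: then $S_j=\int_0^1\big(\sum_{r=0}^{c_j}\frac{1}{r+1}\binom{c_j}{r}v^r(1-v)^{c_j-r}\big)f_{V_\theta}(v)\,\ddr v=\int_0^1\frac{1-(1-v)^{c_j+1}}{(c_j+1)v}f_{V_\theta}(v)\,\ddr v$, where I used the elementary identity $\sum_{r=0}^{c_j}\frac{1}{r+1}\binom{c_j}{r}v^r(1-v)^{c_j-r}=\frac{1-(1-v)^{c_j+1}}{(c_j+1)v}$ (which follows from $\int_0^v$ of the binomial theorem, or directly from $\binom{c_j}{r}/(r+1)=\binom{c_j+1}{r+1}/(c_j+1)$). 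Plugging $f_{V_\theta}(v)=\frac{\theta}{J}(1-v)^{\theta/J-1}$ gives $S_j=\frac{\theta/J}{c_j+1}\int_0^1\frac{(1-(1-v)^{c_j+1})(1-v)^{\theta/J-1}}{v}\,\ddr v$, and the substitution $u=1-v$ turns this into $\frac{\theta/J}{c_j+1}\int_0^1\frac{(1-u^{c_j+1})u^{\theta/J-1}}{1-u}\,\ddr u$. This last integral is a standard representation of a difference of digamma functions: $\int_0^1\frac{u^{a-1}-u^{a+N-1}}{1-u}\,\ddr u=\psi(a+N)-\psi(a)$ for $a>0$ and integer $N\ge 0$, here with $a=\theta/J$ and $N=c_j+1$, yielding $S_j=\frac{\theta/J}{c_j+1}\big(\psi(\theta/J+c_j+1)-\psi(\theta/J)\big)$.

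Multiplying through by $n q_j$ and summing over $j$ gives exactly \eqref{eq:K-hat-dp}:
\begin{displaymath}
\hat{K}_{n}^{\text{\tiny{(DP)}}}=n\sum_{j=1}^{J}q_{j}S_{j}=n\frac{\theta}{J}\sum_{j=1}^{J}\frac{q_{j}}{1+c_{j}}\left(\psi\left(1+c_{j}+\frac{\theta}{J}\right)-\psi\left(\frac{\theta}{J}\right)\right).
\end{displaymath}

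I expect the only real obstacle to be bookkeeping with the Gamma/digamma identities: specifically, justifying the interchange of the finite sum with the Beta integral (trivial, since the sum is finite), and recalling the integral formula $\int_0^1(u^{a-1}-u^{a+N-1})/(1-u)\,\ddr u=\psi(a+N)-\psi(a)$, which can be proved by expanding $(1-u^{N})/(1-u)=\sum_{k=0}^{N-1}u^{k}$ and integrating term by term against $u^{a-1}$ to get the partial-sum telescoping $\sum_{k=0}^{N-1}\frac{1}{a+k}=\psi(a+N)-\psi(a)$. Alternatively one can bypass the integral entirely by summing the Beta-Binomial masses against $1/(r+1)$ directly via the $\binom{c_j}{r}/(r+1)=\binom{c_j+1}{r+1}/(c_j+1)$ reindexing and recognizing the resulting hypergeometric-type sum; I would present whichever of the two is shorter. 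Either way the argument is a routine computation once Theorem \ref{teo_smooth}, Equation \eqref{mass_dp}, and Proposition \ref{teo_cond_dist} are in hand, so no conceptual difficulty arises.
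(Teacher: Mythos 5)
Your proposal follows essentially the same route as the paper's proof: starting from the expression in Proposition \ref{teo_cond_dist}, replacing $\pi_{j}(r;P)$ by its DP-smoothed Beta-Binomial form written as a mixture of binomials over the $\text{Beta}(1,\theta/J)$ density, using $\sum_{r=0}^{c_j}\tfrac{1}{r+1}\binom{c_j}{r}v^r(1-v)^{c_j-r}=\tfrac{1-(1-v)^{c_j+1}}{(c_j+1)v}$, and evaluating the resulting integral as a digamma difference. The only difference is that you spell out the final integral evaluation (via $u=1-v$ and the identity $\int_0^1\frac{u^{a-1}-u^{a+N-1}}{1-u}\,\ddr u=\psi(a+N)-\psi(a)$), which the paper states without detail, so the argument is correct and matches the paper's.
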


See Appendix~\ref{proof_prop_dp_cardinality} for a proof of Proposition~\ref{prop_dp_cardinality}. The smoothed cardinality estimator $\hat{K}_{n}^{\text{DP}}$ takes an explicit form that can be evaluated easily, besides from the issue that it still depends on the unknown probabilities $q_{j}$'s.  As anticipated above, we rely on a plug-in approach and substitute $q_j$ with $\hat{q}_{j}=c_{j}/n$. 

Finally, by applying \eqref{mass_ngg}, the next proposition gives an explicit and simple estimator of $K_{n}$ under the NGGP smoothing assumption.
\begin{proposition}\label{prop_ngg_cardinality}
For $n\geq1$, let $(x_{1},\ldots,x_{n})$ be a sample modeled according to \eqref{eq:exchangeable_model_hash}, and let $\mathbf{C}_{J}=\mathbf{c}$ be the corresponding sketch. If $P\sim\text{NGGP}(\theta,G_{0},\alpha,\tau)$, for some $(\theta,\alpha,\tau)$ it follows that
\begin{align}  \label{eq:prop-nggp-card}
\hat{K}_{n}^{\text{\tiny{(NGGP)}}}
  & := \E_{P\sim\text{NGGP}(\theta,G_{0},\alpha,\tau), h \sim \mathscr{H}_J}[\varepsilon_{k}(P, h)] \\
  &= n\sum_{j=1}^{J}\frac{q_{j}}{1+c_{j}}\int_{0}^{1}\frac{1-(1-v)^{c_{j}+1}}{v}f_{V_{\theta,\alpha,\tau}}(v)(\ddr v),
\end{align}
where $f_{V_{\theta,\alpha,\tau}}$ is the probability density function of the random variable $V_{\theta,\alpha,\tau}$ in \eqref{struct_ngg}.
\end{proposition}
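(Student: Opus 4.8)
The plan is to derive \eqref{eq:prop-nggp-card} by combining Proposition~\ref{teo_cond_dist}, which writes $\varepsilon_{k}(P)$ as a linear combination of the conditional masses $\pi_{j}(r;P)$, with Theorem~\ref{teo_smooth}, which supplies their NGGP-average $\pi_{j}(r)=\E_{P\sim\text{NGGP}(\theta,G_{0},\alpha,\tau)}[\pi_{j}(r;P)]$ in the mixed-Binomial form \eqref{mass_ngg}. So the whole argument is: move the expectation over $P$ inside the (finite) sums defining $\varepsilon_{k}(P)$, plug in \eqref{mass_ngg}, swap the remaining finite sum with the integral, and evaluate a single elementary polynomial identity.

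Concretely, by Proposition~\ref{teo_cond_dist} with the Good--Turing plug-in for $\mathfrak{p}_{r,n}$,
\begin{displaymath}
\varepsilon_{k}(P)=n\sum_{j=1}^{J}q_{j}\sum_{r=0}^{c_{j}}\frac{1}{r+1}\pi_{j}(r;P).
\end{displaymath}
The sums over $j$ and $r$ are finite and each $\pi_{j}(r;P)\in[0,1]$, so I may interchange $\E_{P\sim\text{NGGP}(\theta,G_{0},\alpha,\tau)}$ with both sums and substitute $\pi_{j}(r)=\int_{0}^{1}{c_{j}\choose r}v^{r}(1-v)^{c_{j}-r}f_{V_{\theta,\alpha,\tau}}(v)\,\ddr v$ from \eqref{mass_ngg}. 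Exchanging the finite $r$-sum with the integral (legitimate since $f_{V_{\theta,\alpha,\tau}}$ is a probability density and the summand is bounded), the statement reduces to evaluating
\begin{displaymath}
g_{c_{j}}(v):=\sum_{r=0}^{c_{j}}\frac{1}{r+1}{c_{j}\choose r}v^{r}(1-v)^{c_{j}-r}.
\end{displaymath}

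For this last step I would use the elementary identity $\frac{1}{r+1}{c_{j}\choose r}=\frac{1}{c_{j}+1}{c_{j}+1\choose r+1}$: factoring out $\frac{1}{(c_{j}+1)v}$ and re-indexing by $s=r+1$ produces the partial binomial sum $\sum_{s=1}^{c_{j}+1}{c_{j}+1\choose s}v^{s}(1-v)^{c_{j}+1-s}$, which completes to the full expansion of $(v+(1-v))^{c_{j}+1}=1$ upon adding and subtracting the missing $s=0$ term $(1-v)^{c_{j}+1}$. Hence $g_{c_{j}}(v)=\frac{1-(1-v)^{c_{j}+1}}{(c_{j}+1)v}$, which stays bounded as $v\to 0$ (with limit $c_{j}+1$), so the integral in \eqref{eq:prop-nggp-card} is finite; substituting $g_{c_{j}}$ back into the expression for $\E_{P}[\varepsilon_{k}(P)]$ gives exactly \eqref{eq:prop-nggp-card}. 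This is precisely the combinatorial manipulation already used for the second equality in \eqref{estim_k}, with the sum over $s\in\mathbb{S}_{j}$ of powers of $p_{s}/q_{j}$ replaced by the $f_{V_{\theta,\alpha,\tau}}$-average of powers of $v$, as prescribed by Theorem~\ref{teo_smooth}.

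The argument is essentially routine: the only points requiring a word of justification are the two interchanges (expectation with the finite sums, and the finite sum with the $v$-integral), both immediate by finiteness and boundedness, so I do not anticipate any genuine obstacle. As a consistency check, specializing to $\alpha=0$, $\tau=1$, where $V_{\theta,0,1}\stackrel{\text{d}}{=}B_{1,\theta/J}$, and carrying out $\int_{0}^{1}\frac{1-(1-v)^{c_{j}+1}}{v}f_{B_{1,\theta/J}}(v)\,\ddr v$ should reproduce the digamma expression of Proposition~\ref{prop_dp_cardinality}.
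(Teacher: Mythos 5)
Your proposal is correct and follows essentially the same route as the paper: start from the expression $\varepsilon_{k}(P)=n\sum_{j}q_{j}\sum_{r=0}^{c_{j}}\frac{1}{r+1}\pi_{j}(r;P)$ of Proposition \ref{teo_cond_dist}, replace $\pi_{j}(r;P)$ by its NGGP-average \eqref{mass_ngg} from Theorem \ref{teo_smooth}, and evaluate the inner sum via the binomial identity $\sum_{r=0}^{c_{j}}\frac{1}{r+1}\binom{c_{j}}{r}v^{r}(1-v)^{c_{j}-r}=\frac{1-(1-v)^{c_{j}+1}}{(c_{j}+1)v}$, exactly as in the paper's Appendix \ref{proof_prop_ngg_cardinality}. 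Your added remarks on the interchanges of expectation, sum, and integral and the $\alpha=0,\tau=1$ consistency check are fine but not needed beyond what the paper does.
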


See Appendix~\ref{proof_prop_ngg_cardinality} for a proof of Proposition~\ref{prop_ngg_cardinality}. Similar to the estimator obtained in Proposition~\ref{prop_dp_cardinality}, the smoothed cardinality estimator $\hat{K}_{n}^{\text{\tiny{(NGGP)}}}$ in \eqref{eq:prop-nggp-card} also involves the unknown probabilities $q_{j}$'s, which we propose to replace in practice with their the maximum likelihood estimates $\hat{q}_{j}=c_{j}/n$. Note that the expression of the estimator in  Proposition~\ref{prop_ngg_cardinality} is slightly more involved compared to that in Proposition~\ref{prop_dp_cardinality} because it entails an integral with respect to the probability density function of the random variable $V_{\theta,\alpha,\tau}$ in \eqref{struct_ngg}. However, this integral is not difficult to evaluate in practice because it can be approximated via standard Monte Carlo methods. Indeed from \eqref{struct_ngg} it straightforward to generate independent samples for $V_{\theta,\alpha,\tau}$. Note that we will simulate these values only once and reuse them for all the $J$ integrals in \eqref{eq:prop-nggp-card}, in order to reduce the variance of the estimator.

\subsection{Proofs for Appendix~\ref{sec:cardinality-preliminary}}

\subsubsection{Proof of Proposition~\ref{teo_cond_dist}}\label{proof_teo_cond_dist}

Along lines similar to the proof of Equation \eqref{fund_ide1}, we replace the coverage probability $\mathfrak{p}_{n,r}$ in \eqref{fund_ide} with the Good-Turing estimate $\hat{\mathfrak{p}}_{n,r}$. In particular, for any $r=0,1,\ldots,c_{j}$ we can write
\begin{displaymath}
\text{Pr}[f_{X_{n+1}}=r \mid \mathbf{C}_{J}=\mathbf{c}, h] = \sum_{\mathbf{m}\in\mathcal{M}_{n,J}}\frac{r+1}{n} m_{r+1}\text{Pr}[\mathbf{M}_{n}=\mathbf{m} \mid \mathbf{C}_{J}=\mathbf{c}, h],
\end{displaymath}
i.e.,
\begin{align*}
  \E_{P}[M_{r+1,n} \mid \mathbf{C}_{J}=\mathbf{c}, h]
  &=\sum_{\mathbf{m}\in\mathcal{M}_{n,J}} m_{r+1}\text{Pr}[\mathbf{M}_{n}=\mathbf{m} \mid \mathbf{C}_{J}=\mathbf{c}, h]\\
  &=\frac{n}{r+1}\text{Pr}[f_{X_{n+1}}=r \mid \mathbf{C}_{J}=\mathbf{c}, h].
\end{align*}
According to this expression, we can directly apply Theorem~\ref{teo_cond_prob}, from which if follows that 
\begin{align*}
\text{Pr}[f_{X_{n+1}}=r \mid \mathbf{C}_{J}=\mathbf{c}, h] &= \sum_{j=1}^{J}\text{Pr}[f_{X_{n+1}}=r \mid \mathbf{C}_{J}=\mathbf{c},h(X_{n+1})=j, h]\text{Pr}[h(X_{n+1})=j \mid h]\\
&=\sum_{j=1}^{J}q_{j}\pi_{j}(r;P, h)
\end{align*}
i.e., $\E_{P}[M_{r+1,n} \mid \mathbf{C}_{J}=\mathbf{c}, h]=\frac{n}{r+1}\sum_{1\leq j\leq J}q_{j}\pi_{j}(r;P, h)$. Now, since $K_{n}=\sum_{1\leq r\leq n}M_{r,n}$,
\begin{align*}
\E_{P}[K_{n} \mid \mathbf{C}_{J}=\mathbf{c}, h]&=\sum_{r=0}^{n-1}\E_{P}[M_{r+1,n} \mid \mathbf{C}_{J}=\mathbf{c}, h]\\
&=\sum_{r=0}^{n-1}\frac{n}{r+1}\sum_{j=1}^{J}q_{j}\pi_{j}(r;P, h)\\
&=\sum_{r=0}^{n-1}\frac{n}{r+1}\sum_{j=1}^{J}q_{j}{c_{j}\choose r}\sum_{s\in\mathbb{S}_{j}}\left(\frac{p_{s}}{q_{j}}\right)^{r+1}\left(1-\frac{p_{s}}{q_{j}}\right)^{c_{j}-r}\\
&=n\sum_{j=1}^{J}q_{j}\sum_{s\in\mathbb{S}_{j}}\sum_{r=0}^{n-1}\frac{1}{r+1}{c_{j}\choose r}\left(\frac{p_{s}}{q_{j}}\right)^{r+1}\left(1-\frac{p_{s}}{q_{j}}\right)^{c_{j}-r}\\
&=n\sum_{j=1}^{J}q_{j}\sum_{s\in\mathbb{S}_{j}}\sum_{r=0}^{c_{j}}\frac{1}{r+1}{c_{j}\choose r}\left(\frac{p_{s}}{q_{j}}\right)^{r+1}\left(1-\frac{p_{s}}{q_{j}}\right)^{c_{j}-r}\\
&=n\sum_{j=1}^{J}\sum_{s\in\mathbb{S}_{j}}\frac{q_{j}}{c_{j}+1}\left(1-\left(1-\frac{p_{s}}{q_{j}}\right)^{c_{j}+1}\right)
\end{align*}
with the last identity following by a summation of Binomial probabilities. The proof is completed.


\subsection{Proofs for Appendix~\ref{app:cardinality-smoothed}}

For the following proofs, it is useful to recall the expression of $\varepsilon_{k}(P)$ in terms of the $\pi_{j}(r,P)$'s. From \eqref{cond_prob_mean}:
\begin{equation}\label{pk6}
\varepsilon_{k}(P, h)=\sum_{j=1}^{J}q_{j}\sum_{r=0}^{c_{j}}\frac{n}{r+1}\pi_{j}(r,P, h)
\end{equation}

\subsubsection{Proof of Proposition~\ref{prop_dp_cardinality}}\label{proof_prop_dp_cardinality}

The estimator of $K_{n}$ in \eqref{eq:K-hat-dp} follows from \eqref{pk6} by replacing the probabilities $\pi_{j}(r,P, h)$'s with the estimated probabilities $\pi_{j}(r)$ in \eqref{mass_dp}. That is,
\begin{align*}
\hat{K}_{n}^{\text{DP}}&=\sum_{j=1}^{J}q_{j}\sum_{r=0}^{c_{j}}\frac{n}{r+1}\int_{0}^{1}{c_{j}\choose r}v^{r}(1-v)^{c_{j}-r}f_{V_{\theta}}(v)\ddr v\\
&=n\sum_{j=1}^{J}\frac{q_{j}}{c_{j}+1}\int_{0}^{1}\frac{1-(1-v)^{c_{j}+1}}{v}f_{V_{\theta}}(v)\ddr v\\
&=n\sum_{j=1}^{J}\frac{q_{j}}{c_{j}+1}\left(\frac{\theta}{J}\psi\left(1+c_{j}+\frac{\theta}{J}\right)-\frac{\theta}{J}\psi\left(\frac{\theta}{J}\right)\right),
\end{align*}
where $\psi$ denotes the digamma function, defined as $\psi(z)=\frac{\ddr}{\ddr z}\log\Gamma(z)$.

\subsubsection{Proof of Proposition~\ref{prop_ngg_cardinality}}\label{proof_prop_ngg_cardinality}

The estimator of $K_{n}$ follows from \eqref{pk6} by replacing the probabilities $\pi_{j}(r,P, h)$'s with the estimated probabilities $\pi_{j}(r)$ in \eqref{mass_ngg}. That is,
\begin{align*}
\hat{K}_{n}^{\text{\tiny{(NGGP)}}}&=\sum_{j=1}^{J}q_{j}\sum_{r=0}^{c_{j}}\frac{n}{r+1}\int_{0}^{1}{c_{j}\choose r}v^{r}(1-v)^{c_{j}-r}f_{V_{\theta,\alpha,\tau}}(v)\ddr v\\
&=n\sum_{j=1}^{J}\frac{q_{j}}{c_{j}+1}\int_{0}^{1}\frac{1-(1-v)^{c_{j}+1}}{v}f_{V_{\theta,\alpha,\tau}}(v)\ddr v,
\end{align*}
and this can not be simplified further by solving the integral in $v$.

\section{Details on sketches with multiple hash functions}\label{app:multihash}

\subsection{Preliminary results}

Under model \eqref{eq:exchangeable_model_multhash}, we consider the problem of estimating the empirical frequency $f_{X_{n+1}}$ from the sketch $\mathbf{C}_{M,J}$. In particular, following the approach developed in Section~\ref{sec:setup} for $M=1$, this problem requires to compute the conditional distribution of $f_{X_{n+1}}$ given $\mathbf{C}_{M,D}$ and the buckets $\mathbf{h}_{M}(X_{n+1}):=(h_{1}(X_{n+1}),\ldots,h_{M}(X_{n+1}))$ into which $X_{n+1}$ is hashed.

It is possible to derive explicit, though cumbersome, expressions for the distribution of $\mathbf{C}_{M,J}$ and for the conditional distribution of $f_{X_{n+1}}$ given $\mathbf{C}_{M,D}$ and $\mathbf{h}_{M}(X_{n+1})$,
as shown in Appendix~\ref{distr_mult} below.  
Here, we present a simplified version of these distributions under the following two additional assumptions.
\begin{itemize}
\item[A1)] Independence from random hashing. For any $(j_{1},\ldots,j_{M})\in\{1,\ldots,J\}^{M\times n}$,
\begin{displaymath}
\text{Pr}[h_{1}(X_{i})=j_{1},\ldots,h_{M}(X_{i})=j_{M}]=\prod_{l=1}^{M}\text{Pr}[h_{l}(X_{i})=j_{l}].
\end{displaymath}
\item[A2)] Identity in distribution with respect to hashing. For any pair of indices $i_{1}\neq i_{2}$,
\begin{displaymath}
\text{Pr}[h_{l}(X_{i_{1}})=j_{l}]=\text{Pr}[h_{l}(X_{i_{2}})=j_{l}].
\end{displaymath}
\end{itemize}
Under the model in \eqref{eq:exchangeable_model_multhash}, assuming that Assumptions A1) and A2) also hold, we can write
\begin{equation}\label{multinom_multiple}
\text{Pr}[\mathbf{C}_{M,J}=\mathbf{c} \mid \mathbf{h}_{M}]=\mathscr{C}(n,\mathbf{c})\prod_{l=1}^{M}\prod_{j=1}^{J}q_{l,j}^{c_{l,j}},
\end{equation}
where the $q_{l,j}$'s are probabilities defined as $q_{l,j}: =\text{Pr}[h_{X_{i}}=j]$, for any $l\in[M]$ and $j\in[J]$, and $\mathscr{C}(n,\mathbf{c})$ is a (combinatorial) coefficient, defined in Appendix~\ref{appendix_multinom_multiple}, that reduces to the Multinomial coefficient if $M=1$. 
See Appendix~\ref{appendix_multinom_multiple} for the proof of Equation \eqref{multinom_multiple}. 

\begin{theorem}\label{teo_cond_prob_multiple}
For $n\geq1$ let $(x_{1},\ldots,x_{n})$ be a random sample modeled according to \eqref{eq:exchangeable_model_multhash}, and $\mathbf{C}_{M,J}=\mathbf{c}$ the corresponding sketch. Assume that Assumptions A1) and A2) also hold. If $\mathbb{S}_{j_{l}}:=\{s\in\mathbb{S}\text{ : }h_{l}(s)=j_{l}\}$, for any $j_{l}\in [J]$ and $l \in [M]$, then for any $r \in \{0,1,\ldots,n\}$,
\begin{align}\label{cond_prob_mult}
\pi_{\mathbf{j}}(r;P, \mathbf{h}_{M})&:=\prob[f_{X_{n+1}}=r \mid \mathbf{C}_{M,J}=\mathbf{c},\mathbf{h}_{M}(X_{n+1})=\mathbf{j}, \mathbf{h}_{M}]\\
&\notag={n\choose r}\frac{\mathscr{C}(n-r,\mathbf{c})}{\mathscr{C}(n,\mathbf{c})}\prod_{l=1}^{M}q_{l,j_{l}}^{\left(\frac{1}{M}-1\right)(r+1)}\sum_{s\in\mathbb{S}_{j_{l}}}\left(\frac{p_{s}}{q_{l,j_{l}}}\right)^{\frac{r+1}{M}}\left(1-\frac{p_{s}}{q_{l,j_{l}}}\right)^{c_{l,j_{l}}-r}I(r\leq c_{l,j_{l}}).
\end{align}
\end{theorem}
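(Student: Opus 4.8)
The plan is to mimic the proof of Theorem~\ref{teo_cond_prob} for the single-hash case, but now working with the joint conditioning event $\{\mathbf{C}_{M,J}=\mathbf{c},\ \mathbf{h}_{M}(X_{n+1})=\mathbf{j}\}$. First I would write the conditional probability as a ratio,
\begin{displaymath}
\pi_{\mathbf{j}}(r,P)=\frac{\prob[f_{X_{n+1}}=r,\ \mathbf{C}_{M,J}=\mathbf{c},\ \mathbf{h}_{M}(X_{n+1})=\mathbf{j}]}{\prob[\mathbf{C}_{M,J}=\mathbf{c},\ \mathbf{h}_{M}(X_{n+1})=\mathbf{j}]},
\end{displaymath}
where the denominator factors as $\prob[\mathbf{h}_{M}(X_{n+1})=\mathbf{j}]\,\prob[\mathbf{C}_{M,J}=\mathbf{c}]$ by independence of $X_{n+1}$ and its hash images from $\mathbf{X}_n$ (using Assumption~A1), and the sketch probability $\prob[\mathbf{C}_{M,J}=\mathbf{c}]=\mathscr{C}(n,\mathbf{c})\prod_{l,j}q_{l,j}^{c_{l,j}}$ is given by \eqref{multinom_multiple}. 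Here $\prob[\mathbf{h}_{M}(X_{n+1})=\mathbf{j}]=\prod_{l=1}^M q_{l,j_l}$ again by A1. The numerator is the quantity that requires work.

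For the numerator, I would condition on the symbol value $X_{n+1}=s$. Given $X_{n+1}=s$, the event $\{\mathbf{h}_M(X_{n+1})=\mathbf{j}\}$ forces $s\in\bigcap_{l}\mathbb{S}_{j_l}$, and the event $\{f_{X_{n+1}}=r\}$ says exactly $r$ of $X_1,\ldots,X_n$ equal $s$. Conditioning further on which $r$ of the $n$ samples equal $s$ — there are $\binom{n}{r}$ choices, each of probability $p_s^r$ — the remaining $n-r$ samples are i.i.d.\ from $P$ restricted to avoid $s$ (probability $(1-p_s)^{n-r}$ for the values, then hashed), and their contribution to the sketch is $\mathbf{c}$ minus the $r$ copies of $s$ that land deterministically in bucket $j_l$ of row $l$. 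That leftover configuration has the same multivariate form as \eqref{multinom_multiple} but with $n-r$ samples and bucket counts $c_{l,j}$ decremented by $r$ in position $j=j_l$; its normalizing constant is $\mathscr{C}(n-r,\mathbf{c})$ (with the implicit convention that this vanishes unless $r\le c_{l,j_l}$ for every $l$, which is encoded by the indicators $I(r\le c_{l,j_l})$). Summing over which samples coincide with $s$, summing the $p_s$-dependent factors over $s\in\bigcap_l\mathbb{S}_{j_l}$, and then carefully collecting the powers of $q_{l,j_l}$ that appear when one passes from the ``bare'' probabilities $p_s^{r}(1-p_s)^{n-r}$ and the row-$l$ sketch weights to the normalized forms $(p_s/q_{l,j_l})^{(r+1)/M}(1-p_s/q_{l,j_l})^{c_{l,j_l}-r}$ produces the stated expression; the extra factor $q_{l,j_l}^{(1/M-1)(r+1)}$ is precisely the bookkeeping correction coming from splitting one power of $p_s$ and one of $(1-p_s)$ symmetrically across the $M$ hash rows rather than assigning them to a single row as in the $M=1$ case.

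The main obstacle is this last bookkeeping step: unlike the single-hash case, the symbol $s$ appears simultaneously in one bucket of every row, so the factor $p_s^{r+1}$ (the extra $+1$ coming from $X_{n+1}=s$ itself) and $(1-p_s)$ must be distributed across the $M$ rows in a way that makes each row's contribution look like a restricted NRM/Binomial weight $(p_s/q_{l,j_l})^{(r+1)/M}$; verifying that the product over $l$ of these rescaled terms, times the compensating $q_{l,j_l}$ powers, exactly reproduces the joint probability requires matching exponents of $p_s$, of $(1-p_s)$, and of each $q_{l,j_l}$ on both sides. I would do this by writing the numerator first in the ``natural'' un-normalized form and only at the end inserting $1=\prod_l q_{l,j_l}^{(r+1)/M}/\prod_l q_{l,j_l}^{(r+1)/M}$ to convert to the normalized presentation, which isolates the correction exponent as $(r+1)/M-(r+1)=(1/M-1)(r+1)$. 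Consistency with Theorem~\ref{teo_cond_prob}: setting $M=1$ collapses the correction factor to $1$, $\mathscr{C}$ to the multinomial coefficient, and $\binom{n}{r}\mathscr{C}(n-r,\mathbf{c})/\mathscr{C}(n,\mathbf{c})$ to $\binom{c_j}{r}$, recovering \eqref{cond_prob}; this special-case check is the natural sanity test for the exponent bookkeeping. A final remark for the proof: all steps above use only Assumptions~A1 and A2 (A2 guarantees the $q_{l,j}$ do not depend on the sample index, so the restricted sketch truly has the same parametric form with $n-r$ trials), together with the strong universality already built into \eqref{eq:exchangeable_model_multhash}.
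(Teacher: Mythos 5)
Your overall architecture is the same as the paper's: you write $\pi_{\mathbf{j}}(r,P)$ as a ratio, factorize the denominator using \eqref{multinom_multiple} together with $\prob[\mathbf{h}_{M}(X_{n+1})=\mathbf{j}]=\prod_{l}q_{l,j_{l}}$, and evaluate the numerator by conditioning on the query symbol $s$ and on which $r$ of the $n$ samples coincide with it, so that the remaining $n-r$ samples produce a decremented sketch carrying the coefficient $\mathscr{C}(n-r,\mathbf{c})$ and the indicators $I(r\leq c_{l,j_{l}})$; the paper does exactly this in its Appendix on multiple hashing (general formula \eqref{final_mult_gen}, then specialization under A1--A2 as in \eqref{num_mult_simpl}). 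Your exponent bookkeeping is also the right one: the factor $q_{l,j_{l}}^{(1/M-1)(r+1)}$ does arise from splitting the single power $p_{s}^{r+1}$ evenly across the $M$ rows and compensating with powers of $q_{l,j_{l}}$, and your $M=1$ sanity check matches the paper's remark that Theorem \ref{teo_cond_prob_multiple} collapses to Theorem \ref{teo_cond_prob}.

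There is, however, one genuine gap. Carried out as you describe, your computation terminates with a \emph{single} sum over $s\in\bigcap_{l}\mathbb{S}_{j_{l}}$ of a product over $l$ of the rescaled per-row factors, i.e. an expression of the form
\begin{displaymath}
{n\choose r}\frac{\mathscr{C}(n-r,\mathbf{c})}{\mathscr{C}(n,\mathbf{c})}\sum_{s\in\cap_{l}\mathbb{S}_{j_{l}}}\prod_{l=1}^{M}q_{l,j_{l}}^{\left(\frac{1}{M}-1\right)(r+1)}\left(\frac{p_{s}}{q_{l,j_{l}}}\right)^{\frac{r+1}{M}}\left(1-\frac{p_{s}}{q_{l,j_{l}}}\right)^{c_{l,j_{l}}-r},
\end{displaymath}
whereas the statement \eqref{cond_prob_mult} is a \emph{product over $l$ of separate sums}, each ranging over its own $\mathbb{S}_{j_{l}}$. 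Inserting $1=\prod_{l}q_{l,j_{l}}^{(r+1)/M}/\prod_{l}q_{l,j_{l}}^{(r+1)/M}$ only rewrites the summand; it does not convert a sum of products over the intersection into a product of sums over the individual bucket preimages, and in general these two objects are not equal. This decoupling is precisely the last (and most delicate) step of the paper's proof, where $\sum_{s\in\mathbb{S}_{\mathbf{j}}}$ is replaced by iterated sums over $\mathbb{S}_{j_{1}},\ldots,\mathbb{S}_{j_{M}}$ and the expression is then factorized row by row, with $p_{s}^{(r+1)(1-M)}$ distributed across the factors. Your plan never addresses this passage, so as written it proves a different (non-factorized) formula rather than \eqref{cond_prob_mult}; to complete the argument you would need to justify, or at least explicitly invoke, the step that decouples the sum over $\bigcap_{l}\mathbb{S}_{j_{l}}$ into the product over $l$ appearing in the theorem.
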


See Appendix~\ref{app_teo_cond_prob_multiple} for the proof of Theorem~\ref{teo_cond_prob_multiple}. If $M=1$, then Theorem~\ref{teo_cond_prob_multiple} reduces to Theorem~\ref{teo_cond_prob}, as $\mathscr{C}(n,\mathbf{c})$ reduces to the Multinomial coefficient. From here, an intuitive approach to estimate the empirical frequency $f_{X_{n+1}}$ from the sketch $\mathbf{C}_{M,J}$ is to follow in the footsteps of Sections~\ref{sec:setup} and~\ref{sec:smoothed}, considering the means of the distribution \eqref{cond_prob_mult}, i.e.,
\begin{displaymath}
\varepsilon_{f}(P, \mathbf{h}_{M}):=\sum_{r=0}^{n} r \pi_{\mathbf{j}}(r; \mathbf{h}_{M}P, \mathbf{h}_{M}).
\end{displaymath}
However, the expression in \eqref{cond_prob_mult} anticipates that the resulting estimator of $f_{X_{n+1}}$ is not practical because it depends on the combinatorial coefficients $\mathscr{C}(n-r,\mathbf{c})$, for $r \in \{0,1,\ldots,n\}$, which are computationally expensive to evaluate if $n-r$ is large. 

\subsection{Distributions under the model \eqref{eq:exchangeable_model_multhash}}\label{distr_mult}

We determine the distribution of $\mathbf{C}_{M,J}$ and the conditional distribution of $f_{X_{n+1}}$, given $\mathbf{C}_{M,J}$ and $\mathbf{h}(X_{n+1})$. Note that $\mathbf{h}(X_{n+1})\in\{1,\ldots,J\}^{M}$, such that for any  $\mathbf{j}\in\{1,\ldots,J\}^{M}$, we set
\begin{equation}\label{probab_sketched}
q_{\mathbf{j}}=\text{Pr}[\mathbf{h}(X_{n+1})=\mathbf{j} \mid \mathbf{h}_{M}].
\end{equation}
The distribution of $\mathbf{C}_{M,J}$ is a generalization of the Multinomial distribution. In particular, because $\mathbf{X}_{n}$ is a random sample from the discrete distribution $P$ on $\mathbb{S}$, then we can write,
\begin{align}\label{new_for}
&\text{Pr}[\mathbf{C}_{M,J}=\mathbf{c} \mid \mathbf{h}_{M}] =\sum_{(j_{1}(1),\ldots,j_{M}(1),\ldots,(j_{1}(n),\ldots,j_{M}(n))\in S_{n,\mathbf{c}}}\prod_{i=1}^{n}q_{\mathbf{j}(i)}
\end{align}
where
\begin{align*}
&S_{n,\mathbf{c}}=\left\{\{1,\ldots,J\}^{M\times n}\text{ : }\sum_{i=1}^{n}I(j_{1}(i)=1)=c_{1,1},\ldots,\sum_{i=1}^{n}I(j_{M}(i)=J)=c_{M,J}\right\}.
\end{align*}
Now, we consider the conditional distribution of $f_{X_{n+1}}$, given $\mathbf{C}_{M,J}$ and $\mathbf{h}(X_{n+1})$, that is
\begin{multline}\label{estimate_freq_mul}
\text{Pr}[f_{X_{n+1}}=r \mid  \mathbf{C}_{M,J}=\mathbf{c},\mathbf{h}(X_{n+1})=\mathbf{j}, \mathbf{h}_{M}] = \\
\qquad \frac{\text{Pr}[f_{X_{n+1}}=r,\mathbf{C}_{M,J}=\mathbf{c},\mathbf{h}(X_{n+1})=\mathbf{j} \mid \mathbf{h}_{M}] }{ \text{Pr}[\mathbf{C}_{M,J}=\mathbf{c},\mathbf{h}(X_{n+1})=\mathbf{j} \mid \mathbf{h}_{M}]}
\end{multline}
We compute \eqref{estimate_freq_mul} by evaluating (separately) the numerator and the denominator. We start by evaluating the denominator of the conditional probability in \eqref{estimate_freq_mul}. From \eqref{new_for}, we write
\begin{align}\label{denom_multiple}
&\text{Pr}[\mathbf{C}_{M,J}=\mathbf{c},\mathbf{h}(X_{n+1})=\mathbf{j} \mid \mathbf{h}_{M}]\\
&\notag\quad=q_{\mathbf{j}}\text{Pr}[\mathbf{C}_{M,J}=\mathbf{c} \mid \mathbf{h}_{M}]\\
&\notag\quad=q_{\mathbf{j}}\sum_{(j_{1}(1),\ldots,j_{m}(1),\ldots,(j_{1}(n),\ldots,j_{m}(n))\in S_{n,\mathbf{c}}}\prod_{i=1}^{n}q_{\mathbf{j}(i)}
\end{align}
Now, we evaluate the numerator of the conditional probability in \eqref{estimate_freq_mul}. Let us define the event $B(n,r)=\{X_{1}=\cdots=X_{r}=X_{n+1},\{X_{r+1},\ldots,X_{n}\}\cap\{X_{n+1}\}=\emptyset\}$. Then, we write
\begin{align*}
&\text{Pr}[f_{X_{n+1}}=r,\mathbf{C}_{M,J}=\mathbf{c},\mathbf{h}(X_{n+1})=\mathbf{j} \mid \mathbf{h}_{M}]\\
&\quad=\text{Pr}\left[f_{X_{n+1}}=r,\sum_{i=1}^{n}I(h_{1}(X_{i})=1)=c_{1,1},\ldots,\sum_{i=1}^{n}I(h_{1}(X_{i})=J)=c_{1,J},\ldots\right.\\
&\quad\quad\ldots,\left.\sum_{i=1}^{n}I(h_{l}(X_{i})=1)=c_{l,1},\ldots,\sum_{i=1}^{n}I(h_{l}(X_{i})=J)=c_{l,J},\ldots\right.\\
&\quad\quad\quad\ldots,\left.\sum_{i=1}^{n}I(h_{M}(X_{i})=1)=c_{M,1},\ldots,\sum_{i=1}^{n}I(h_{M}(X_{i})=J)=c_{M,J},\mathbf{h}(X_{n+1})=\mathbf{j} \mid \mathbf{h}_{M}\right]\\
&\quad={n\choose r}\sum_{s\in\mathbb{S}}\text{Pr}\left[B(n,r),X_{n+1}=s,\sum_{i=1}^{n}I(h_{1}(X_{i})=1)=c_{1,1},\ldots,\sum_{i=1}^{n}I(h_{1}(X_{i})=J)=c_{1,J},\ldots\right.\\
&\quad\quad\ldots,\left.\sum_{i=1}^{n}I(h_{l}(X_{i})=1)=c_{l,1},\ldots,\sum_{i=1}^{n}I(h_{l}(X_{i})=J)=c_{l,J},\ldots\right.\\
&\quad\quad\ldots,\left.\sum_{i=1}^{n}I(h_{M}(X_{i})=1)=c_{M,1},\ldots,\sum_{i=1}^{n}I(h_{M}(X_{i})=J)=c_{M,J},\mathbf{h}(X_{n+1})=\mathbf{j} \mid \mathbf{h}_{M}\right]\\
&\quad={n\choose r}\sum_{s\in\mathbb{S}}\text{Pr}\left[B(n,r),X_{n+1}=s,\sum_{i=r+1}^{n}I(h_{1}(X_{i})=1)=c_{1,1},\ldots\right.\\
&\quad\quad\ldots,\left.\sum_{i=r+1}^{n}I(h_{1}(X_{i})=j_{1})=c_{1,j_{1}}-r,\ldots,\sum_{i=r+1}^{n}I(h_{1}(X_{i})=J)=c_{1,J},\ldots\right.\\
&\quad\quad\ldots,\left.\sum_{i=r+1}^{n}I(h_{l}(X_{i})=1)=c_{l,1},\ldots\right.\\
&\quad\quad\ldots,\left.\sum_{i=r+1}^{n}I(h_{l}(X_{i})=j_{l})=c_{1,j_{l}}-r,\ldots,\sum_{i=r+1}^{n}I(h_{l}(X_{i})=J)=c_{l,J},\ldots\right.\\
&\quad\quad\ldots,\left.\sum_{i=r+1}^{n}I(h_{M}(X_{i})=1)=c_{M,1},\ldots\right.\\
&\quad\quad\ldots,\left.,\sum_{i=r+1}^{n}I(h_{M}(X_{i})=j_{M})=c_{1,j_{M}}-r,\ldots,\sum_{i=r+1}^{n}I(h_{M}(X_{i})=J)=c_{M,J},\mathbf{h}(X_{n+1})=\mathbf{j} \mid \mathbf{h}_{M}\right].
\end{align*}
Accordingly, the distribution of $(f_{X_{n+1}},\mathbf{C}_{M,J},\mathbf{h}(X_{n+1}))$ is completely determined by the distribution of $(X_{1},\ldots,X_{n},X_{n+1})$. Therefore, from the previous equation, we can write that
\begin{align*}
&\text{Pr}[f_{X_{n+1}}=r,\mathbf{C}_{M,J}=\mathbf{c},\mathbf{h}(X_{n+1})=\mathbf{j} \mid \mathbf{h}_{M}]\\
&\quad={n\choose r}\sum_{s\in\mathbb{S}}\text{Pr}\left[X_{1}=\cdots=X_{r}=X_{n+1}=s,X_{r+1}\neq s,\ldots,X_{n}\neq s\right]\\
&\quad\quad\times\text{Pr}\left[\sum_{i=r+1}^{n}I(h_{1}(X_{i})=1)=c_{1,1},\ldots,\sum_{i=r+1}^{n}I(h_{1}(X_{i})=j_{1})=c_{1,j_{1}}-r,\ldots\right.\\
&\quad\quad\quad\left.\ldots,\sum_{i=r+1}^{n}I(h_{1}(X_{i})=J)=c_{1,J},\ldots\right.\\
&\quad\quad\quad\ldots,\left.\sum_{i=r+1}^{n}I(h_{l}(X_{i})=1)=c_{l,1},\ldots,\sum_{i=r+1}^{n}I(h_{l}(X_{i})=j_{l})=c_{1,j_{l}}-r,\ldots\right.\\
&\quad\quad\quad\left.\ldots,\sum_{i=r+1}^{n}I(h_{l}(X_{i})=J)=c_{l,J},\ldots\right.\\
&\quad\quad\quad\ldots,\left.\sum_{i=r+1}^{n}I(h_{M}(X_{i})=1)=c_{M,1},\ldots,\sum_{i=r+1}^{n}I(h_{M}(X_{i})=j_{M})=c_{1,j_{M}}-r,\ldots\right.\\
&\quad\quad\quad\left.\ldots,\sum_{i=r+1}^{n}I(h_{M}(X_{i})=J)=c_{M,J},\mathbf{h}(X_{n+1})=\mathbf{j}\mid X_{r+1}\neq s,\ldots,X_{n}\neq s,X_{n+1}=s \mid \mathbf{h}_{M}\right]\\
&\quad={n\choose r}\sum_{s\in\mathbb{S}}p_{s}^{r+1}(1-p_{s})^{n-r}I(s\in\mathbb{S}\text{ : }(\mathbf{h}(s)=\mathbf{j}))\\
&\quad\quad\times\text{Pr}\left[\sum_{i=r+1}^{n}I(h_{1}(X_{i})=1)=c_{1,1},\ldots,\sum_{i=r+1}^{n}I(h_{1}(X_{i})=j_{1})=c_{1,j_{1}}-r,\ldots\right.\\
&\quad\quad\quad\left.\ldots,\sum_{i=r+1}^{n}I(h_{1}(X_{i})=J)=c_{1,J},\ldots\right.\\
&\quad\quad\quad\ldots,\left.\sum_{i=r+1}^{n}I(h_{l}(X_{i})=1)=c_{l,1},\ldots,\sum_{i=r+1}^{n}I(h_{l}(X_{i})=j_{l})=c_{1,j_{l}}-r,\ldots\right.\\
&\quad\quad\quad\left.\ldots,\sum_{i=r+1}^{n}I(h_{l}(X_{i})=J)=c_{l,J},\ldots\right.\\
&\quad\quad\quad\ldots,\left.\sum_{i=r+1}^{n}I(h_{M}(X_{i})=1)=c_{M,1},\ldots,\sum_{i=r+1}^{n}I(h_{M}(X_{i})=j_{M})=c_{1,j_{M}}-r,\ldots\right.\\
&\quad\quad\quad\left.\ldots,\sum_{i=r+1}^{n}I(h_{M}(X_{i})=J)=c_{M,J}\mid X_{r+1}\neq s,\ldots,X_{n}\neq s,X_{n+1}=s, \mathbf{h}_{M}\right].
\end{align*}
Based on the last identity, we can define the set $\mathbb{S}_{\mathbf{j}}=\{s\in\mathbb{S}\text{ : } (\mathbf{h}(s)=\mathbf{j})\}$, and then we write
\begin{align}\label{num_multiple}
&\text{Pr}[f_{X_{n+1}}=r,\mathbf{C}_{M,J}=\mathbf{c},\mathbf{h}(X_{n+1})=\mathbf{j} \mid \mathbf{h}_{M}]\\
&\notag\quad={n\choose r}\sum_{s\in\mathbb{S}_{\mathbf{j}}}p_{s}^{r+1}(1-p_{s})^{n-r}\\
&\notag\quad\quad\times\sum_{(j_{1}(1),\ldots,j_{M}(1),\ldots,(j_{1}(n-r),\ldots,j_{M}(n-r))\in S_{n-r,\mathbf{c}}}\prod_{i=1}^{n-r}\frac{(q_{\mathbf{j}(i)}-p_{s}I(\mathbf{j}=\mathbf{j}(i)))}{(1-p_{s})}\\
&\notag\quad={n\choose r}\sum_{s\in\mathbb{S}_{\mathbf{j}}}p_{s}^{r+1}\sum_{(j_{1}(1),\ldots,j_{M}(1),\ldots,(j_{1}(n-r),\ldots,j_{M}(n-r))\in S_{n-r,\mathbf{c}}}\prod_{i=1}^{n-r}(q_{\mathbf{j}(i)}-p_{s}I(\mathbf{j}=\mathbf{j}(i))),
\end{align}
where
\begin{align*}
S_{n-r,\mathbf{c}}=\left\{\{1,\ldots,J\}^{M\times n}\text{ : }\sum_{i=r+1}^{n}I(j_{1}(i)=1)=c_{1,1},\ldots,\sum_{i=r+1}^{n}I(j_{M}(i)=J)=c_{M,J}\right\}.
\end{align*}
Therefore, according to \eqref{estimate_freq_mul}, which is combined with \eqref{denom_multiple} with and \eqref{num_multiple}, we obtain the following:
\begin{align}\label{final_mult_gen}
&\text{Pr}[f_{X_{n+1}}=r \mid \mathbf{C}_{M,J}=\mathbf{c},\mathbf{h}(X_{n+1})=\mathbf{j}, \mathbf{h}_{M}]\\
&\notag\quad={n\choose r}\sum_{s\in\mathbb{S}_{\mathbf{j}}}p_{s}^{r+1}\frac{\sum_{(j_{1}(1),\ldots,j_{M}(1),\ldots,(j_{1}(n-r),\ldots,j_{M}(n-r))\in S_{n-r,\mathbf{c}}}\prod_{i=1}^{n-r}(q_{\mathbf{j}(i)}-p_{s}I(\mathbf{j}=\mathbf{j}(i)))}{q_{\mathbf{j}}\sum_{(j_{1}(1),\ldots,j_{M}(1),\ldots,(j_{1}(n),\ldots,j_{M}(n))\in S_{n,\mathbf{c}}}\prod_{i=1}^{n}q_{\mathbf{j}(i)}}.
\end{align}

\subsection{Proof of Equation \eqref{multinom_multiple}}\label{appendix_multinom_multiple}

The proof relies on a direct application of the assumptions A1) and A2) to \eqref{new_for}. In particular, used these assumptions, we can factorizes the probabilities $q_{\mathbf{j}(i)}$'s, and then we write
\begin{align*}
\prod_{i=1}^{n}q_{\mathbf{j}(i)}&=q_{1,j_{1}}(1)\cdots q_{M,j_{M}}(1)\cdots q_{1,j_{1}}(n)\cdots q_{M,j_{M}}(n)
=\prod_{l=1}^{M}q_{l,1}^{c_{l,1}}\cdots q_{l,J}^{c_{l,J}},
\end{align*}
i.e.,
\begin{align}\label{new_for1}
\text{Pr}[\mathbf{C}_{M,J}=\mathbf{c}]&=\sum_{(j_{1}(1),\ldots,j_{M}(1),\ldots,(j_{1}(n),\ldots,j_{M}(n))\in S_{n,\mathbf{c}}}\prod_{i=1}^{n}q_{\mathbf{j}(i)}\\
&\notag\quad=\left(\sum_{(j_{1}(1),\ldots,j_{M}(1),\ldots,(j_{1}(n),\ldots,j_{M}(n))\in S_{n,\mathbf{c}}}1\right)\prod_{l=1}^{M}q_{l,1}^{c_{l,1}}\cdots q_{l,J}^{c_{l,J}},
\end{align}
where 
\begin{displaymath}
\mathscr{C}(n,\mathbf{c})=\left(\sum_{(j_{1}(1),\ldots,j_{M}(1),\ldots,(j_{1}(n),\ldots,j_{;}(n))\in S_{n,\mathbf{c}}}1\right)
\end{displaymath}
is a (combinatorial) coefficient, depending on $n$, $M$, $J$ and the $c_{l,j}$'s. This completes the proof.

\subsection{Proof of Theorem~\ref{teo_cond_prob_multiple}}\label{app_teo_cond_prob_multiple}

The proof relies on a direct application of the assumptions A1) and A2) to \eqref{final_mult_gen}. In particular, used these assumptions, we start by rewriting the denominator of \eqref{final_mult_gen} as follows:
\begin{align}\label{den_mult_simpl}
&\text{Pr}[\mathbf{C}_{M,J}=\mathbf{c},\mathbf{h}(X_{n+1})=\mathbf{j} \mid \mathbf{h}_{M}]\\
&\notag\quad=q_{1,j_{1}}\cdots q_{M,j_{M}}\\
&\notag\quad\quad\times\mathscr{C}(n,\mathbf{c})\prod_{l=1}^{M}q_{l,1}^{c_{l,1}}\cdots q_{l,J}^{c_{l,J}}\\
&\notag\quad=\mathscr{C}(n,\mathbf{c})\prod_{l=1}^{M}q_{l,1}^{c_{l,1}+I(j_{l}=1)}\cdots q_{l,J}^{c_{l,J}+I(j_{l}=J)}\\
&\notag\quad=\mathscr{C}(n,\mathbf{c}) q_{1,j_{1}}^{c_{1,j_{1}}+1}\cdots q_{M,j_{M}}^{c_{M,j_{M}}+1}\prod_{l=1}^{M}\prod_{1\leq t\neq j_{l}\leq J}q_{l,t}^{c_{l,t}}.
\end{align}
Along similar lines, under the assumptions A1) and A2) we rewrite the denominator of \eqref{final_mult_gen} as follows:
\begin{align}\label{num_mult_simpl}
&\text{Pr}[f_{X_{n+1}}=r,\mathbf{C}_{M,J}=\mathbf{c},\mathbf{h}(X_{n+1})=\mathbf{j} \mid \mathbf{h}_{M}]\\
&\notag\quad={n\choose r}\sum_{s\in\mathbb{S}_{\mathbf{j}}}p_{s}^{r+1}\\
&\notag\quad\quad\times\mathscr{C}(n-r,\mathbf{c})\\
&\notag\quad\quad\quad\times\prod_{l=1}^{M}(q_{l,1}-p_{s}I(j_{l}=1))^{c_{l,1}-rI(j_{l}=1)}\cdots (q_{l,J}-p_{s}I(j_{l}=J))^{c_{l,J}-rI(j_{l}=J)}\\
&\notag\quad={n\choose r}\sum_{s\in\mathbb{S}_{\mathbf{j}}}p_{s}^{r+1}\\
&\notag\quad\quad\times\mathscr{C}(n-r,\mathbf{c})\\
&\notag\quad\quad\quad\times\prod_{l=1}^{M}q_{l,1}^{c_{l,1}-rI(j_{l}=1)}\cdots q_{l,J}^{c_{l,J}-rI(j_{l}=J)}\\
&\notag\quad\quad\quad\times\left(1-\frac{p_{s}}{q_{l,1}}I(j_{l}=1)\right)^{c_{l,1}-rI(j_{j}=1)}\cdots\left(1-\frac{p_{s}}{q_{l,J}}I(j_{l}=J)\right)^{c_{l,J}-rI(j_{l}=J)}\\
&\quad={n\choose r}\sum_{s\in\mathbb{S}_{\mathbf{j}}}p_{s}^{r+1}\\
&\quad\quad\times\mathscr{C}(n-r,\mathbf{c})\\
&\quad\quad\quad\times q_{1,j_{1}}^{c_{1,j_{1}}-r}\left(1-\frac{p_{s}}{q_{1,j_{1}}}\right)^{c_{1,j_{1}}-r}\cdots q_{M,j_{M}}^{c_{M,j_{M}}-r}\left(1-\frac{p_{s}}{q_{M,j_{M}}}\right)^{c_{M,j_{M}}-r}\prod_{l=1}^{M}\prod_{1\leq t\neq j_{l}\leq J}q_{l,t}^{c_{l,t}}.
\end{align}
Finally, by combining \eqref{den_mult_simpl} and \eqref{den_mult_simpl}, the conditional probability \eqref{final_mult_gen} can be written as
\begin{align*}
&\text{Pr}[f_{X_{n+1}}=r \mid \mathbf{C}_{M,J}=\mathbf{c},\mathbf{h}(X_{n+1})=\mathbf{j}, \mathbf{h}_{M}]\\
&\quad={n\choose r}\frac{\mathscr{C}(n-r,\mathbf{c})}{\mathscr{C}(n,\mathbf{c})}\\
&\quad\quad\times\sum_{s\in\mathbb{S}_{\mathbf{j}}}p_{s}^{r+1}\frac{q_{1,j_{1}}^{c_{1,j_{1}}-r}\left(1-\frac{p_{s}}{q_{1,j_{1}}}\right)^{c_{1,j_{1}}-r}\cdots q_{M,j_{M}}^{c_{M,j_{M}}-r}\left(1-\frac{p_{s}}{q_{M,j_{M}}}\right)^{c_{M,j_{M}}-r}\prod_{l=1}^{M}\prod_{1\leq t\neq j_{l}\leq J}q_{l,t}^{c_{l,t}}}{q_{1,j_{1}}^{c_{1,j_{1}}+1}\cdots q_{M,j_{M}}^{c_{M,j_{M}}+1}\prod_{l=1}^{M}\prod_{1\leq t\neq j_{l}\leq J}q_{l,t}^{c_{l,t}}}\\
&\quad={n\choose r}\frac{\mathscr{C}(n-r,\mathbf{c})}{\mathscr{C}(n,\mathbf{c})}\\
&\quad\quad\times\sum_{s\in\mathbb{S}_{\mathbf{j}}}p_{s}^{(r+1)(1-M)}\left(\frac{p_{s}}{q_{1,j_{1}}}\right)^{r+1}\left(1-\frac{p_{s}}{q_{1,j_{1}}}\right)^{c_{1,j_{1}}-r}\cdots \left(\frac{p_{s}}{q_{M,j_{M}}}\right)^{r+1}\left(1-\frac{p_{s}}{q_{M,j_{M}}}\right)^{c_{M,j_{M}}-r}\\
&\quad={n\choose r}\frac{\mathscr{C}(n-r,\mathbf{c})}{\mathscr{C}(n,\mathbf{c})}\\
&\quad\quad\times\sum_{s\in\mathbb{S}_{j_{1}}}\cdots\sum_{s\in\mathbb{S}_{j_{M}}}p_{s}^{(r+1)(1-M)}\left(\frac{p_{s}}{q_{1,j_{1}}}\right)^{r+1}\left(1-\frac{p_{s}}{q_{1,j_{1}}}\right)^{c_{1,j_{1}}-r}\cdots \left(\frac{p_{s}}{q_{M,j_{M}}}\right)^{r+1}\left(1-\frac{p_{s}}{q_{M,j_{M}}}\right)^{c_{M,j_{M}}-r}\\
&\quad={n\choose r}\frac{\mathscr{C}(n-r,\mathbf{c})}{\mathscr{C}(n,\mathbf{c})}\\
&\quad\quad\times\sum_{s\in\mathbb{S}_{j_{1}}}p_{s}^{\left(\frac{1}{M}-1\right)(r+1)}\left(\frac{p_{s}}{q_{1,j_{1}}}\right)^{r+1}\left(1-\frac{p_{s}}{q_{1,j_{1}}}\right)^{c_{1,j_{1}}-r}\cdots\\
&\quad\quad\cdots\times \sum_{s\in\mathbb{S}_{j_{M}}}p_{s}^{\left(\frac{1}{M}-1\right)(r+1)}\left(\frac{p_{s}}{q_{M,j_{M}}}\right)^{r+1}\left(1-\frac{p_{s}}{q_{M,j_{M}}}\right)^{c_{M,j_{M}}-r}\\
&\quad={n\choose r}\frac{\mathscr{C}(n-r,\mathbf{c})}{\mathscr{C}(n,\mathbf{c})}q_{1,j_{1}}^{\left(\frac{1}{M}-1\right)(r+1)}\cdots q_{M,j_{M}}^{\left(\frac{1}{M}-1\right)(r+1)}\\
&\quad\quad\times\sum_{s\in\mathbb{S}_{j_{1}}}\left(\frac{p_{s}}{q_{1,j_{1}}}\right)^{\frac{r+1}{m}}\left(1-\frac{p_{s}}{q_{1,j_{1}}}\right)^{c_{1,j_{1}}-r}\cdots \sum_{s\in\mathbb{S}_{j_{M}}}\left(\frac{p_{s}}{q_{M,j_{M}}}\right)^{\frac{r+1}{M}}\left(1-\frac{p_{s}}{q_{M,j_{M}}}\right)^{c_{M,j_{M}}-r},
\end{align*}
which completes the proof.

\section{Further Simulations}

\subsection{Cardinality recovery}

We consider the same setup as Section~\ref{sec:ex_singlehash}. As the evaluation metric, for the cardinality estimator we report the absolute error between the true number of distinct symbols and the estimated one.

We now consider the cardinality estimation problem, assuming the same data generating processes as above.
We set $n \in \{100, 1000, 10000, 100000\}$. The data are hashed by a random hash function of with $J=128$.
Figure~\ref{fig:card_singlehash} reports the true and estimated cardinalities for a simulated setting, while Tables~\ref{tab:card_singlehash_py} and~\ref{tab:card_singlehash_zipf} report the errors achieved by the different smoothing distributions, averaging across 50 independent repetitions. 
The NGGP smoothing generally outperforms the DP when the discount parameter of the Pitman-Yor process is greater than zero. Conversely, the DP yields slightly better performances when the discount is equal to zero. 
Similarly, the NGGP performs better when the parameter of the Zipf distribution is closer to one (i.e., heavier tails of the distribution), while the DP gives better estimates if $c=2.2$ .

\begin{figure}[!htb]
    \centering
    \includegraphics[width=\linewidth]{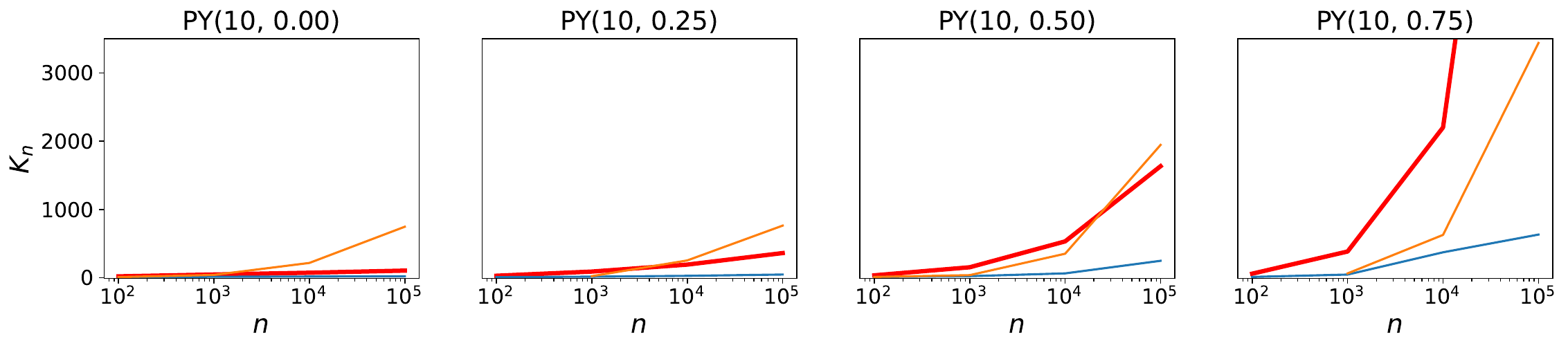}
    \includegraphics[width=\linewidth]{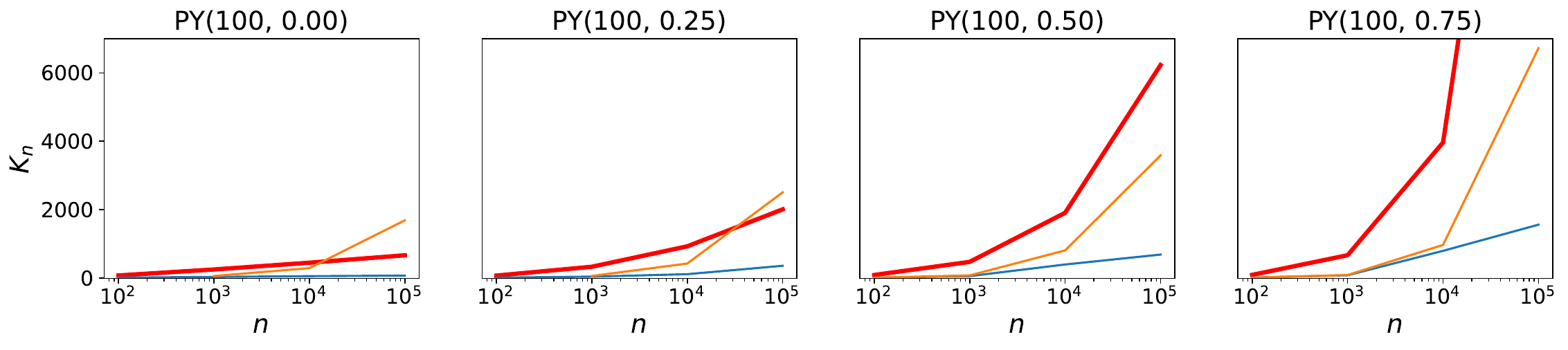}
    \includegraphics[width=\linewidth]{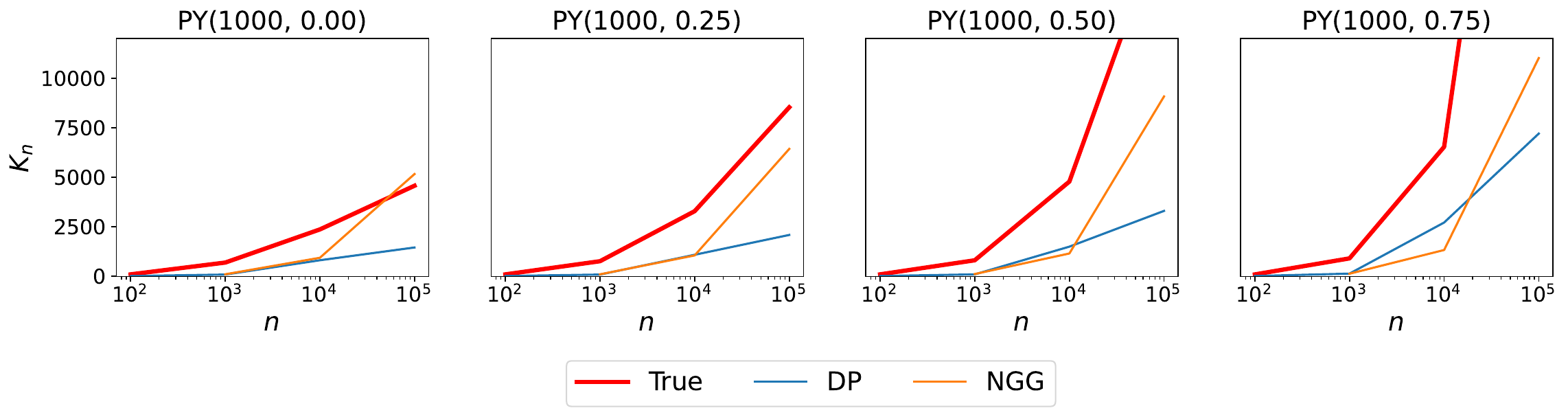}
    \caption{True and estimated cardinalities for the simulation setup in Section~\ref{sec:ex_singlehash}.
    }
    \label{fig:card_singlehash}
\end{figure}

\subsection{The role on $J$ with a single hash function}\label{sec:app_role_J}
We now assess the effect of the parameter $J$, that is the width of the hash function, letting $J = 10, 100, 1000, 10000$ and simulating $n=100,000$ observations either from $\mathrm{PYP}(100, 0.25)$ or from $\mathrm{PYP}(100, 0.75)$. Observe that in the first case, the expected numbers of distinct symbols in the sample is approximately 2200, so that when $J=10000$ we are likely occupying more memory than saving the original dataset in memory.
Figure~\ref{fig:jeffect_freq} shows the MAEs for the frequency recovery problem as a function of $J$, averaged over 50 independent replications. For low frequency tokens, the MAE converges to zero as $J$ increases, while for high frequency tokens (empirical frequency larger than 16) this does not happen even in the case of $J=10,000$. 
Figure~\ref{fig:jeffect_card} shows the absolute error between the true number of distinct symbols and the estimated one, as a function of $J$, averaged over 50 independent replications.
For both the frequency and cardinality estimation problems, the NGGP smoothing  usually provides better estimates, with the notably exception of $J=10,000$ and data generated from a Pitman-Yor process with parameters $(100, 0.75)$, where the DP smoothing has a lower MAE for the frequency estimation problem and absolute error for the cardinality estimation problem.

\begin{figure}[!htb]
    \centering
    \includegraphics[width=\linewidth]{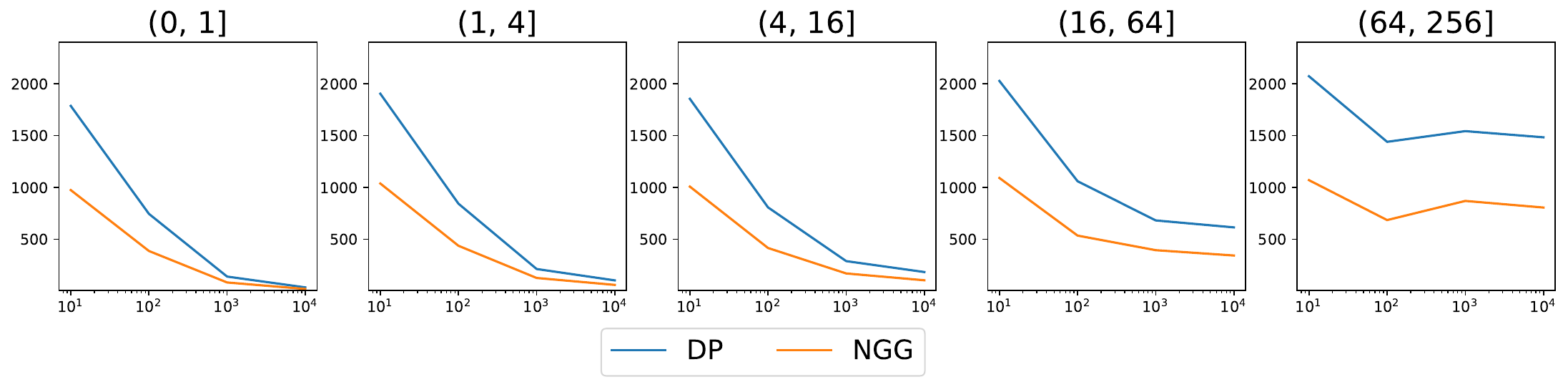}
    \includegraphics[width=\linewidth]{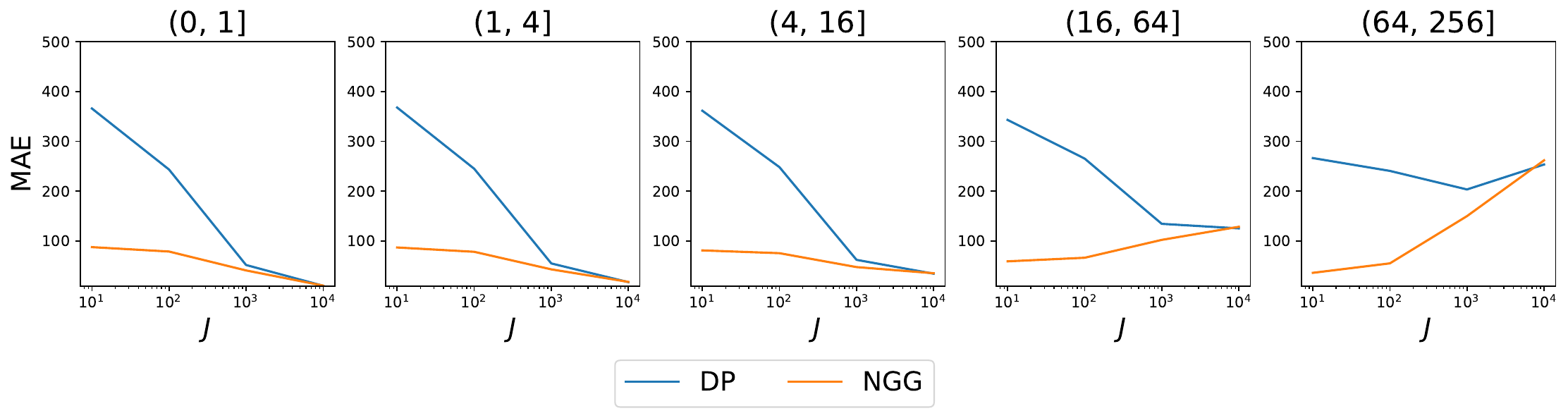}
    \caption{MAEs for the frequency estimators with DP and NGG smoothing. Data generated from a Pitman-Yor process with parameters $(100, 0.25)$ (top row) and $(100, 0.75)$ (bottom row). The width of the hash $J$ varies across the $x$-axis.}
    \label{fig:jeffect_freq}
\end{figure}

\begin{figure}[!htb]
    \centering
    \includegraphics[width=0.45\linewidth]{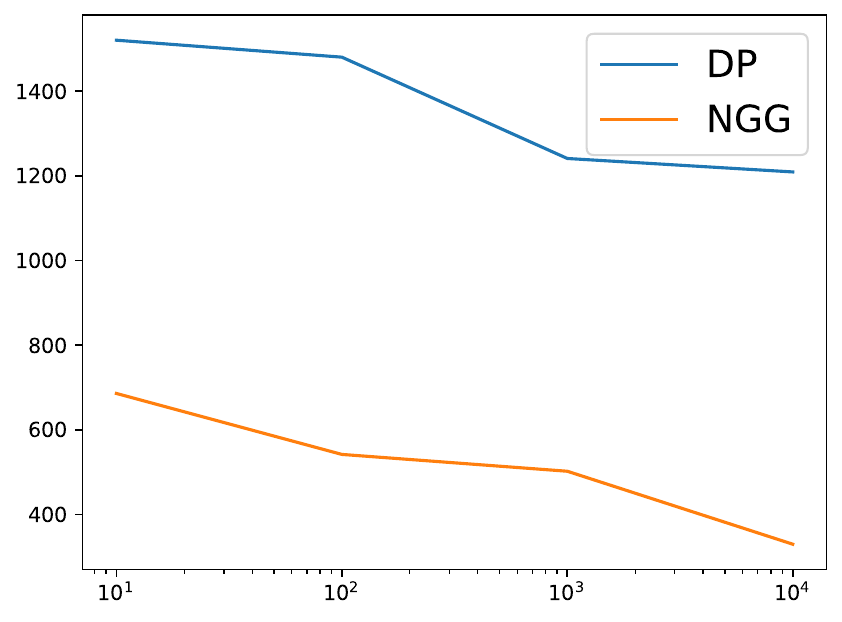}%
    \includegraphics[width=0.45\linewidth]{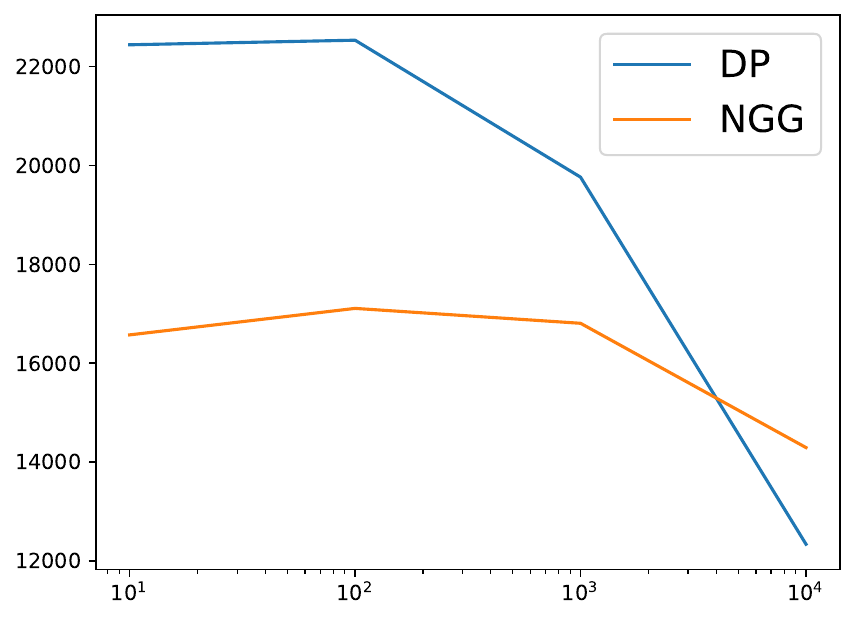}
    \caption{
    Absolute difference between the true number of distinct symbols and estimated cardinality. Data generated from a Pitman-Yor process with parameters $(100, 0.25)$ (left plot) and $(100, 0.75)$ (right plot). The width of the hash $J$ varies across the $x$-axis.}
    \label{fig:jeffect_card}
\end{figure}

\FloatBarrier
\subsection{Further Tables and Plots}

\begin{table}[]
\centering
\begin{tabular}{rrlrrrrrr}
\toprule
  $\theta$ &  $\alpha$ & Model &  (0, 1] &  (1, 4] &  (4, 16] &  (16, 64] &  (64, 256] &  (256, Inf] \\
\midrule
   1 &   0.00 &    DP &   24.85 &    4.46 &     6.95 &      0.32 &      10.60 &      111.91 \\
   1 &   0.00 &   NGG &   17.97 &    3.50 &     4.40 &      8.46 &      35.84 &     4644.27 \\ \hline
   1 &   0.25 &    DP &  103.12 &  141.27 &   400.13 &    853.32 &    1749.00 &     1746.67 \\
   1 &   0.25 &   NGG &   58.87 &   74.15 &   216.87 &    478.36 &     907.83 &     4225.67 \\ \hline
   1 &   0.50 &    DP &  550.62 &  536.02 &   619.38 &    990.39 &    1402.58 &     2724.20 \\
   1 &   0.50 &   NGG &  294.82 &  287.98 &   325.94 &    510.39 &     690.28 &     3418.29 \\ \hline
   1 &   0.75 &    DP &  507.53 &  512.51 &   541.02 &    669.31 &     849.22 &     2111.93 \\
   1 &   0.75 &   NGG &  240.19 &  242.07 &   270.94 &    306.70 &     375.30 &     1631.42 \\ \hline
  10 &   0.00 &    DP &   45.61 &  198.41 &   915.70 &   2828.01 &    4471.81 &     9842.68 \\
  10 &   0.00 &   NGG &   23.00 &  100.83 &   467.43 &   1439.54 &    2231.31 &     5096.32 \\ \hline
  10 &   0.25 &    DP &  497.14 &  882.69 &   902.39 &   1282.83 &    3177.91 &     7977.46 \\
  10 &   0.25 &   NGG &  259.17 &  459.68 &   466.99 &    653.48 &    1607.32 &     3921.58 \\ \hline
  10 &   0.50 &    DP &  599.22 &  626.34 &   724.14 &    994.69 &    1758.74 &     4236.60 \\
  10 &   0.50 &   NGG &  359.08 &  372.31 &   427.47 &    567.20 &    1003.07 &     2444.15 \\ \hline
  10 &   0.75 &    DP &  458.63 &  468.30 &   486.77 &    643.85 &     860.62 &     1431.65 \\
  10 &   0.75 &   NGG &  207.78 &  216.11 &   228.90 &    280.41 &     353.82 &      825.32 \\ \hline
 100 &   0.00 &    DP &  746.32 &  821.35 &  1042.69 &   1479.36 &    2587.94 &     4133.92 \\
 100 &   0.00 &   NGG &  378.60 &  409.30 &   518.76 &    729.82 &    1246.50 &     1807.36 \\ \hline
 100 &   0.25 &    DP &  607.72 &  634.07 &   718.03 &   1043.62 &    1612.92 &     2528.17 \\
 100 &   0.25 &   NGG &  317.57 &  332.77 &   374.05 &    535.96 &     802.53 &     1112.01 \\ \hline
 100 &   0.50 &    DP &  453.52 &  459.22 &   514.10 &    637.53 &     900.43 &     1299.67 \\
 100 &   0.50 &   NGG &  219.62 &  225.15 &   250.25 &    297.07 &     389.18 &      442.91 \\ \hline
 100 &   0.75 &    DP &  243.98 &  248.98 &   253.64 &    285.24 &     305.41 &      285.03 \\
 100 &   0.75 &   NGG &   79.33 &   79.10 &    77.36 &     70.62 &      69.61 &      326.54 \\ \hline
1000 &   0.00 &    DP &  291.86 &  296.77 &   315.70 &    352.96 &     411.05 &      312.08 \\
1000 &   0.00 &   NGG &  100.93 &  101.93 &   104.21 &    106.81 &      95.60 &      138.32 \\ \hline
1000 &   0.25 &    DP &  199.55 &  200.98 &   204.88 &    218.42 &     202.47 &      109.08 \\
1000 &   0.25 &   NGG &   83.90 &   83.46 &    81.73 &     75.22 &      52.85 &      169.91 \\ \hline
1000 &   0.50 &    DP &  110.68 &  110.57 &   108.81 &     96.99 &      59.97 &      131.87 \\
1000 &   0.50 &   NGG &   66.85 &   65.94 &    61.48 &     47.25 &      36.83 &      204.19 \\ \hline
1000 &   0.75 &    DP &   43.18 &   41.91 &    37.18 &     20.56 &      48.94 &      247.70 \\
1000 &   0.75 &   NGG &   53.69 &   52.59 &    48.07 &     30.42 &      40.62 &      201.60 \\ \hline
\bottomrule
\end{tabular}
\caption{Mean Absolute Errors for the frequency recovery simulation setup in Section~\ref{sec:ex_singlehash} for data generated from a PYP.}
    \label{tab:freq_singlehash_py}
\end{table}

\begin{table}[]
\centering
\begin{tabular}{rrlrrrr}
\toprule
  $\theta$ &  $\alpha$ & Model &   $n=100$ &   $n=1,000$ &   $n=10,000$ &   $n=10,0000$ \\
\midrule
   1.00 &   0.00 &    DP &  1.07 &   0.97 &    0.86 &     1.84 \\
   1.00 &   0.00 &   NGG &  0.96 &  28.61 &   80.25 &    43.36 \\ \hline
   1.00 &   0.25 &    DP &  2.47 &  17.01 &   35.29 &    67.68 \\
   1.00 &   0.25 &   NGG &  2.19 &  20.72 &  189.59 &   610.90 \\ \hline
   1.00 &   0.50 &    DP & 12.12 &  43.50 &  160.44 &   506.53 \\
   1.00 &   0.50 &   NGG &  2.79 &  22.26 &   60.13 &   348.61 \\ \hline
   1.00 &   0.75 &    DP & 38.78 & 263.71 & 1433.17 &  9158.79 \\
   1.00 &   0.75 &   NGG & 28.46 & 249.30 & 1204.08 &  6608.67 \\ \hline
  10 &   0.00 &    DP & 13.06 &  33.21 &   56.90 &    75.58 \\
  10 &   0.00 &   NGG & 11.56 &   7.87 &  184.96 &   798.94 \\ \hline
  10 &   0.25 &    DP & 27.71 &  83.23 &  185.06 &   349.92 \\
  10 &   0.25 &   NGG & 28.05 &  71.80 &   89.69 &   908.58 \\ \hline
  10 &   0.50 &    DP & 27.64 & 130.49 &  445.35 &  1392.30 \\
  10 &   0.50 &   NGG & 24.96 & 115.52 &  165.33 &   264.86 \\ \hline
  10 &   0.75 &    DP & 52.26 & 346.10 & 1847.56 & 12044.61 \\
  10 &   0.75 &   NGG & 51.86 & 330.95 & 1645.83 &  8337.21 \\ \hline
 100 &   0.00 &    DP & 58.00 & 215.85 &  408.23 &   623.73 \\
 100 &   0.00 &   NGG & 59.20 & 192.26 &  161.97 &  1280.29 \\ \hline
 100 &   0.25 &    DP & 65.52 & 290.13 &  783.18 &  1564.87 \\
 100 &   0.25 &   NGG & 60.32 & 273.48 &  455.92 &   565.01 \\ \hline
 100 &   0.50 &    DP & 69.44 & 415.64 & 1456.00 &  5638.79 \\
 100 &   0.50 &   NGG & 70.75 & 393.27 & 1105.10 &  1970.70 \\ \hline
 100 &   0.75 &    DP & 79.37 & 589.92 & 3210.12 & 21654.22 \\
 100 &   0.75 &   NGG & 80.19 & 575.14 & 3049.16 & 16436.74 \\ \hline
1000 &   0.00 &    DP & 84.46 & 609.12 & 1642.72 &  3144.14 \\
1000 &   0.00 &   NGG & 87.86 & 588.69 & 1273.71 &  1285.24 \\ \hline
1000 &   0.25 &    DP & 82.48 & 665.54 & 2215.45 &  6546.01 \\
1000 &   0.25 &   NGG & 85.82 & 652.16 & 2137.90 &  1894.59 \\ \hline
1000 &   0.50 &    DP & 84.28 & 726.75 & 2997.70 & 14396.68 \\
1000 &   0.50 &   NGG & 87.34 & 711.04 & 3428.72 &  9739.97 \\ \hline
1000 &   0.75 &    DP & 84.91 & 803.67 & 3969.91 & 32892.45 \\
1000 &   0.75 &   NGG & 88.58 & 790.90 & 5456.84 & 30052.63 \\ \hline
\bottomrule
\end{tabular}
\caption{Mean Absolute Errors for the cardinality recovery simulation setup in Section~\ref{sec:ex_singlehash} for data generated from a PYP.}
    \label{tab:card_singlehash_py}
\end{table}

\begin{table}[]
    \centering
\begin{tabular}{rlrrrr}
\toprule
 Params & Model &   100 &   1000 &   10000 &  100000 \\
\midrule
   1.30 &    DP & 36.14 & 244.64 & 1377.48 & 9203.69 \\
   1.30 &   NGG &   NaN & 236.01 & 1193.71 & 6623.06 \\ \hline
   1.60 &    DP & 16.81 &  90.53 &  404.33 & 1645.03 \\
   1.60 &   NGG &   NaN &  55.57 &  141.33 &  295.76 \\ \hline
   1.90 &    DP &  7.49 &  40.03 &  152.04 &  540.63 \\
   1.90 &   NGG &  4.34 &   5.68 &   87.70 &  455.90 \\ \hline
   2.20 &    DP &  3.34 &  20.99 &   71.63 &  224.75 \\
   2.20 &   NGG &  3.96 &  15.86 &  133.26 &  553.39 \\
\bottomrule
\end{tabular}
    \caption{Mean Absolute Errors for the frequency recovery simulation setup in Section~\ref{sec:ex_singlehash} for data generated from a Zipf distribution.}
    \label{tab:card_singlehash_zipf}
\end{table}

\begin{table}
\centering
\resizebox{\textwidth}{!}{
\begin{tabular}{rrrllrrrrrr}
\toprule
  $\theta$ &  $\alpha$ &  J & Model &  Rule &  (0, 1] &  (1, 4] &  (4, 16] &  (16, 64] &  (64, 256] &  (256, $\infty$] \\
\midrule
  10.00 &   0.25 & 50 &   CMS &   CMS &   45.04 &  101.22 &   375.80 &    980.86 &    3277.48 &    14945.69 \\
  10.00 &   0.25 & 50 & D-CMS & D-CMS &   45.04 &  101.22 &   375.80 &    980.86 &    3277.48 &    14945.69 \\
  10.00 &   0.25 & 50 &    DP &   MIN &   18.67 &   43.76 &   129.42 &    387.30 &    1001.78 &     4944.18 \\
  10.00 &   0.25 & 50 &    DP &   PoE &   41.25 &   97.07 &   343.93 &   1083.98 &    3320.18 &    14995.98 \\
  10.00 &   0.25 & 50 &   NGG &   MIN &    0.88 &    2.37 &    14.11 &     56.02 &     202.01 &     3096.35 \\
  10.00 &   0.25 & 50 &   NGG &   PoE &    1.00 &    2.69 &     8.96 &     34.12 &     131.95 &     3664.00 \\ \midrule
  10.00 &   0.75 & 50 &   CMS &   CMS & 1111.81 & 1128.69 &  1199.58 &   1437.59 &    2185.28 &     5240.75 \\
  10.00 &   0.75 & 50 & D-CMS & D-CMS &  112.42 &  130.65 &   184.77 &    376.77 &    1046.46 &     4196.57 \\
  10.00 &   0.75 & 50 &    DP &   MIN &   92.51 &   91.44 &    88.56 &     81.05 &      62.34 &     1074.53 \\
  10.00 &   0.75 & 50 &    DP &   PoE &  288.11 &  288.47 &   293.61 &    326.79 &     351.62 &      863.49 \\
  10.00 &   0.75 & 50 &   NGG &   MIN &    2.37 &    1.08 &     4.71 &     27.41 &     117.34 &     1364.42 \\
  10.00 &   0.75 & 50 &   NGG &   PoE &    1.00 &    2.53 &     8.23 &     31.20 &     121.97 &     1373.37 \\  \midrule
 100.00 &   0.25 & 50 &   CMS &   CMS &  137.17 &  190.65 &   375.47 &    949.30 &    2218.66 &     5681.97 \\
 100.00 &   0.25 & 50 & D-CMS & D-CMS &   70.92 &  123.32 &   294.83 &    855.65 &    2127.67 &     5597.44 \\
 100.00 &   0.25 & 50 &    DP &   MIN &   42.96 &   47.17 &    64.57 &    113.54 &     170.26 &      572.81 \\
 100.00 &   0.25 & 50 &    DP &   PoE &  117.72 &  151.72 &   293.01 &    847.90 &    2064.75 &     5675.91 \\
 100.00 &   0.25 & 50 &   NGG &   MIN &    4.66 &    3.59 &     4.37 &     21.03 &     114.48 &     1024.91 \\
 100.00 &   0.25 & 50 &   NGG &   PoE &    1.00 &    2.70 &     8.91 &     33.19 &     133.37 &     1047.04 \\  \midrule
 100.00 &   0.75 & 50 &   CMS &   CMS & 2222.70 & 2226.40 &  2274.01 &   2384.39 &    2838.38 &     3886.42 \\
 100.00 &   0.75 & 50 & D-CMS & D-CMS &  184.81 &  191.58 &   218.39 &    324.20 &     668.01 &     1628.11 \\
 100.00 &   0.75 & 50 &    DP &   MIN &   46.13 &   44.77 &    39.63 &     20.12 &      66.97 &      666.53 \\
 100.00 &   0.75 & 50 &    DP &   PoE &   59.71 &   58.40 &    53.39 &     32.62 &      53.26 &      642.00 \\
 100.00 &   0.75 & 50 &   NGG &   MIN &    1.72 &    0.71 &     5.42 &     28.42 &     119.23 &      750.57 \\
 100.00 &   0.75 & 50 &   NGG &   PoE &    0.98 &    2.51 &     8.15 &     31.11 &     120.75 &      751.21 \\  \midrule
1000.00 &   0.25 & 50 &   CMS &   CMS & 2299.76 & 2306.34 &  2345.44 &   2469.56 &    2812.57 &     3336.17 \\
1000.00 &   0.25 & 50 & D-CMS & D-CMS &  307.78 &  314.26 &   336.40 &    420.87 &     657.49 &     1017.79 \\
1000.00 &   0.25 & 50 &    DP &   MIN &   29.56 &   27.96 &    22.06 &     10.86 &      87.93 &      370.95 \\
1000.00 &   0.25 & 50 &    DP &   PoE &   34.40 &   32.82 &    26.96 &     12.19 &      81.95 &      364.12 \\
1000.00 &   0.25 & 50 &   NGG &   MIN &    2.78 &    1.21 &     5.08 &     29.05 &     119.90 &      407.21 \\
1000.00 &   0.25 & 50 &   NGG &   PoE &    1.00 &    2.69 &     8.84 &     33.00 &     124.25 &      412.00 \\  \midrule
1000.00 &   0.75 & 50 &   CMS &   CMS & 3817.12 & 3819.58 &  3829.78 &   3863.39 &    3967.53 &     4134.95 \\
1000.00 &   0.75 & 50 & D-CMS & D-CMS &  181.38 &  183.33 &   187.24 &    202.90 &     253.27 &      330.54 \\
1000.00 &   0.75 & 50 &    DP &   MIN &    4.88 &    3.35 &     2.74 &     24.88 &     108.61 &      378.49 \\
1000.00 &   0.75 & 50 &    DP &   PoE &    5.04 &    3.51 &     2.68 &     24.70 &     108.43 &      378.32 \\
1000.00 &   0.75 & 50 &   NGG &   MIN &    0.72 &    0.82 &     6.47 &     29.03 &     112.81 &      383.24 \\
1000.00 &   0.75 & 50 &   NGG &   PoE &    0.56 &    1.11 &     6.79 &     29.30 &     113.44 &      384.03 \\
\bottomrule
\end{tabular}}
\caption{Mean Absolute Errors for the frequency recovery simulation setup in Section~\ref{sec:ex_multihash} for data generated from a PYP, $J=50$, $M=20$}
    \label{tab:freq_multihash_J50}
\end{table}

\begin{table}
\centering
\resizebox{\textwidth}{!}{
\begin{tabular}{rrrllrrrrrr}
\toprule
  $\theta$ &  $\alpha$ &   J & Model &  Rule &  (0, 1] &  (1, 4] &  (4, 16] &  (16, 64] &  (64, 256] &  (256, Inf] \\
\midrule
  10.00 &   0.25 & 100 &   CMS &   CMS &   45.04 &  101.22 &   375.80 &    980.86 &    3277.48 &    14945.69 \\
  10.00 &   0.25 & 100 & D-CMS & D-CMS &   45.04 &  101.22 &   375.80 &    980.86 &    3277.48 &    14945.69 \\
  10.00 &   0.25 & 100 &    DP &   MIN &   29.61 &   67.93 &   230.57 &    723.42 &    2106.58 &     9701.60 \\
  10.00 &   0.25 & 100 &    DP &   PoE &   41.25 &   97.07 &   343.94 &   1084.00 &    3320.19 &    14995.98 \\
  10.00 &   0.25 & 100 &   NGG &   MIN &    1.65 &    4.26 &    28.21 &    132.02 &     365.16 &     2957.30 \\
  10.00 &   0.25 & 100 &   NGG &   PoE &    0.98 &    2.64 &     8.87 &     34.07 &     131.93 &     3663.73 \\ \midrule
  10.00 &   0.75 & 100 &   CMS &   CMS &  519.01 &  549.33 &   610.85 &    833.56 &    1555.86 &     4662.20 \\
  10.00 &   0.75 & 100 & D-CMS & D-CMS &   69.26 &   85.94 &   143.43 &    346.10 &    1036.09 &     4193.80 \\
  10.00 &   0.75 & 100 &    DP &   MIN &   94.84 &   95.10 &    97.49 &    110.57 &     128.34 &      890.61 \\
  10.00 &   0.75 & 100 &    DP &   PoE &  300.24 &  304.49 &   341.90 &    535.87 &     925.71 &     2605.57 \\
  10.00 &   0.75 & 100 &   NGG &   MIN &    6.31 &    4.91 &     3.06 &     22.41 &     108.97 &     1351.34 \\
  10.00 &   0.75 & 100 &   NGG &   PoE &    0.95 &    2.47 &     8.14 &     31.15 &     121.71 &     1372.77 \\ \midrule
 100.00 &   0.25 & 100 &   CMS &   CMS &   69.51 &  122.83 &   308.23 &    875.87 &    2164.50 &     5625.14 \\
 100.00 &   0.25 & 100 & D-CMS & D-CMS &   45.13 &   96.05 &   277.16 &    838.09 &    2130.47 &     5593.26 \\
 100.00 &   0.25 & 100 &    DP &   MIN &   30.92 &   42.51 &    82.06 &    213.45 &     444.65 &      884.52 \\
 100.00 &   0.25 & 100 &    DP &   PoE &   69.51 &  114.89 &   260.00 &    832.20 &    2084.65 &     5708.11 \\
 100.00 &   0.25 & 100 &   NGG &   MIN &    4.95 &    5.60 &     9.07 &     20.30 &      80.13 &      989.03 \\
 100.00 &   0.25 & 100 &   NGG &   PoE &    1.00 &    2.69 &     8.90 &     33.10 &     132.93 &     1047.06 \\ \midrule
 100.00 &   0.75 & 100 &   CMS &   CMS & 1039.88 & 1050.45 &  1089.99 &   1216.56 &    1650.26 &     2720.45 \\
 100.00 &   0.75 & 100 & D-CMS & D-CMS &  113.12 &  121.60 &   152.18 &    264.04 &     653.78 &     1656.27 \\
 100.00 &   0.75 & 100 &    DP &   MIN &   65.74 &   64.68 &    60.93 &     44.66 &      49.33 &      582.24 \\
 100.00 &   0.75 & 100 &    DP &   PoE &   97.97 &   97.15 &    94.41 &     81.48 &      51.73 &      509.93 \\
 100.00 &   0.75 & 100 &   NGG &   MIN &    4.98 &    3.49 &     2.93 &     24.93 &     114.06 &      740.63 \\
 100.00 &   0.75 & 100 &   NGG &   PoE &    0.65 &    1.80 &     6.90 &     30.13 &     120.91 &      753.23 \\ \midrule
1000.00 &   0.25 & 100 &   CMS &   CMS & 1021.71 & 1033.41 &  1074.24 &   1202.27 &    1555.70 &     2069.55 \\
1000.00 &   0.25 & 100 & D-CMS & D-CMS &  185.76 &  194.24 &   221.06 &    318.56 &     599.81 &     1023.25 \\
1000.00 &   0.25 & 100 &    DP &   MIN &   51.33 &   50.03 &    45.27 &     27.36 &      55.54 &      315.90 \\
1000.00 &   0.25 & 100 &    DP &   PoE &   67.91 &   66.75 &    62.44 &     44.85 &      44.24 &      285.51 \\
1000.00 &   0.25 & 100 &   NGG &   MIN &    7.58 &    5.95 &     3.10 &     23.10 &     111.75 &      396.16 \\
1000.00 &   0.25 & 100 &   NGG &   PoE &    0.51 &    2.12 &     8.28 &     32.41 &     123.41 &      412.46 \\ \midrule
1000.00 &   0.75 & 100 &   CMS &   CMS & 1835.93 & 1839.43 &  1847.09 &   1882.85 &    1987.37 &     2142.51 \\
1000.00 &   0.75 & 100 & D-CMS & D-CMS &  127.82 &  129.18 &   136.15 &    156.19 &     218.52 &      315.23 \\
1000.00 &   0.75 & 100 &    DP &   MIN &   10.34 &    8.83 &     4.05 &     19.23 &     102.13 &      370.61 \\
1000.00 &   0.75 & 100 &    DP &   PoE &   10.89 &    9.37 &     4.42 &     18.66 &     101.46 &      369.95 \\
1000.00 &   0.75 & 100 &   NGG &   MIN &    2.77 &    1.32 &     4.32 &     26.75 &     110.44 &      379.83 \\
1000.00 &   0.75 & 100 &   NGG &   PoE &    1.76 &    0.73 &     5.41 &     27.90 &     111.53 &      380.32 \\
\bottomrule
\end{tabular}}
\caption{Mean Absolute Errors for the frequency recovery simulation setup in Section~\ref{sec:ex_multihash} for data generated from a PYP, $J=100$, $M=10$}
    \label{tab:freq_multihash_J100}
\end{table}

\begin{table}
\centering
\resizebox{\textwidth}{!}{
\begin{tabular}{rrcllrrrrrr}
\toprule
  $\theta$ &  $\alpha$ &   J & Model & Rule &  (0, 1] &  (1, 4] &  (4, 16] &  (16, 64] &  (64, 256] &  (256, Inf] \\ \midrule
10.00 &   0.25 & 500 &   CMS &   CMS &   50.17 &  106.09 &   388.14 &    980.86 &    3277.48 &    14946.15 \\
  10.00 &   0.25 & 500 & D-CMS & D-CMS &   50.17 &  106.09 &   388.14 &    980.86 &    3277.48 &    14946.15 \\
  10.00 &   0.25 & 500 &    DP &   MIN &   42.01 &  115.40 &   336.56 &   1062.52 &    3252.49 &    14706.44 \\
  10.00 &   0.25 & 500 &    DP &   PoE &   42.97 &  117.29 &   343.85 &   1083.06 &    3318.38 &    14995.29 \\
  10.00 &   0.25 & 500 &   NGG &   MIN &   12.56 &   34.93 &    72.10 &    197.38 &     382.21 &     2904.90 \\
  10.00 &   0.25 & 500 &   NGG &   PoE &   18.99 &   41.73 &   121.37 &    397.09 &    1177.96 &     5152.19 \\ \midrule
  10.00 &   0.75 & 500 &   CMS &   CMS &  137.37 &  152.45 &   214.06 &    441.80 &    1163.42 &     4300.87 \\
  10.00 &   0.75 & 500 & D-CMS & D-CMS &   62.34 &   79.28 &   136.09 &    355.28 &    1043.25 &     4198.52 \\
  10.00 &   0.75 & 500 &    DP &   MIN &   73.30 &   80.06 &   104.81 &    217.93 &     552.55 &     1601.72 \\
  10.00 &   0.75 & 500 &    DP &   PoE &   91.74 &  102.16 &   142.72 &    332.15 &     914.40 &     2548.65 \\
  10.00 &   0.75 & 500 &   NGG &   MIN &   27.88 &   28.37 &    28.18 &     31.67 &      60.22 &     1291.65 \\
  10.00 &   0.75 & 500 &   NGG &   PoE &   23.91 &   23.39 &    22.14 &     24.53 &      58.23 &     1297.18 \\ \midrule
 100.00 &   0.25 & 500 &   CMS &   CMS &   73.42 &  120.80 &   300.88 &    866.07 &    2190.34 &     5664.67 \\
 100.00 &   0.25 & 500 & D-CMS & D-CMS &   72.17 &  118.63 &   296.93 &    858.18 &    2180.84 &     5655.48 \\
 100.00 &   0.25 & 500 &    DP &   MIN &   57.82 &   97.25 &   209.62 &    623.53 &    1497.13 &     4094.27 \\
 100.00 &   0.25 & 500 &    DP &   PoE &   69.36 &  119.48 &   263.75 &    796.70 &    1991.72 &     5668.14 \\
 100.00 &   0.25 & 500 &   NGG &   MIN &   17.45 &   30.75 &    65.36 &    150.61 &     231.42 &      656.71 \\
 100.00 &   0.25 & 500 &   NGG &   PoE &   12.61 &   24.48 &    60.86 &    165.16 &     305.86 &      839.24 \\ \midrule
 100.00 &   0.75 & 500 &   CMS &   CMS &  238.09 &  247.23 &   284.89 &    413.02 &     833.72 &     1885.43 \\
 100.00 &   0.75 & 500 & D-CMS & D-CMS &   90.76 &   98.44 &   129.36 &    249.29 &     641.29 &     1645.40 \\
 100.00 &   0.75 & 500 &    DP &   MIN &   86.31 &   88.46 &    96.26 &    117.89 &     200.68 &      431.21 \\
 100.00 &   0.75 & 500 &    DP &   PoE &  105.62 &  109.40 &   121.73 &    158.32 &     292.72 &      564.81 \\
 100.00 &   0.75 & 500 &   NGG &   MIN &   27.46 &   26.62 &    22.96 &     16.65 &      71.01 &      698.03 \\
 100.00 &   0.75 & 500 &   NGG &   PoE &   23.95 &   22.96 &    19.27 &     15.21 &      69.13 &      686.43 \\ \midrule
1000.00 &   0.25 & 500 &   CMS &   CMS &  244.19 &  255.58 &   288.69 &    420.10 &     777.75 &     1304.57 \\
1000.00 &   0.25 & 500 & D-CMS & D-CMS &  139.81 &  151.42 &   179.73 &    290.44 &     598.47 &     1054.99 \\
1000.00 &   0.25 & 500 &    DP &   MIN &   88.68 &   91.06 &    98.22 &    123.28 &     200.41 &      225.23 \\
1000.00 &   0.25 & 500 &    DP &   PoE &  107.09 &  110.62 &   122.70 &    163.23 &     284.70 &      385.02 \\
1000.00 &   0.25 & 500 &   NGG &   MIN &   36.12 &   35.03 &    32.41 &     25.51 &      62.97 &      339.46 \\
1000.00 &   0.25 & 500 &   NGG &   PoE &   33.08 &   32.34 &    28.43 &     21.29 &      60.65 &      336.40 \\ \midrule
1000.00 &   0.75 & 500 &   CMS &   CMS &  383.04 &  385.61 &   395.40 &    429.81 &     534.39 &      697.19 \\
1000.00 &   0.75 & 500 & D-CMS & D-CMS &   91.46 &   93.15 &    99.46 &    123.74 &     194.53 &      308.34 \\
1000.00 &   0.75 & 500 &    DP &   MIN &   36.51 &   35.31 &    30.98 &     18.00 &      58.01 &      284.73 \\
1000.00 &   0.75 & 500 &    DP &   PoE &   39.57 &   38.41 &    34.21 &     20.52 &      54.63 &      276.94 \\
1000.00 &   0.75 & 500 &   NGG &   MIN &   19.84 &   18.61 &    13.55 &     11.70 &      81.73 &      333.80 \\
1000.00 &   0.75 & 500 &   NGG &   PoE &   19.83 &   18.47 &    13.56 &     11.60 &      82.21 &      334.98 \\
\bottomrule
\end{tabular}}
\caption{Mean Absolute Errors for the frequency recovery simulation setup in Section~\ref{sec:ex_multihash} for data generated from a PYP, $J=500$, $M=2$}
    \label{tab:freq_multihash_J500}
\end{table}

\begin{table}
\centering
\resizebox{\textwidth}{!}{
\begin{tabular}{rrcllrrrrrr}
\toprule
   $\theta$ & $\alpha$ &    J & Model & Rule &  (0, 1] &  (1, 4] &  (4, 16] &  (16, 64] &  (64, 256] &  (256, Inf] \\
\midrule
  10.00 &   0.25 & 1000 &   CMS &   CMS &   121.71 &   210.71 &   466.97 &   1079.21 &    3495.76 &    15017.40 \\
  10.00 &   0.25 & 1000 & D-CMS & D-CMS & 59759.36 & 59570.16 & 59066.38 &  58949.93 &   55848.52 &    45342.42 \\
  10.00 &   0.25 & 1000 &    DP &   MIN &    59.11 &   161.54 &   409.40 &   1124.58 &    4036.21 &    14960.60 \\
  10.00 &   0.25 & 1000 &    DP &   PoE &    59.11 &   161.54 &   409.40 &   1124.58 &    4036.21 &    14960.60 \\
  10.00 &   0.25 & 1000 &   NGG &   MIN &    25.89 &    51.16 &   102.85 &    248.94 &     367.17 &     3041.74 \\
  10.00 &   0.25 & 1000 &   NGG &   PoE &    29.20 &    79.56 &   201.12 &    598.89 &    2086.05 &     7815.07 \\ \midrule
  10.00 &   0.75 & 1000 &   CMS &   CMS &   258.63 &   280.17 &   330.02 &    550.53 &    1230.85 &     4429.62 \\
  10.00 &   0.75 & 1000 & D-CMS & D-CMS & 32310.89 & 32287.77 & 32165.91 &  31910.38 &   31223.17 &    27253.27 \\
  10.00 &   0.75 & 1000 &    DP &   MIN &   168.66 &   165.42 &   219.89 &    360.93 &     882.13 &     2259.36 \\
  10.00 &   0.75 & 1000 &    DP &   PoE &   168.66 &   165.42 &   219.89 &    360.93 &     882.13 &     2259.36 \\
  10.00 &   0.75 & 1000 &   NGG &   MIN &    43.88 &    45.46 &    52.77 &     67.10 &      89.61 &     1229.09 \\
  10.00 &   0.75 & 1000 &   NGG &   PoE &    57.51 &    59.91 &    75.62 &    109.00 &     190.46 &     1023.42 \\ \midrule
 100.00 &   0.25 & 1000 &   CMS &   CMS &   271.30 &   324.67 &   478.94 &   1040.59 &    2376.88 &     5792.76 \\
 100.00 &   0.25 & 1000 & D-CMS & D-CMS & 20357.61 & 20233.24 & 20181.52 &  19594.20 &   18156.52 &    13960.61 \\
 100.00 &   0.25 & 1000 &    DP &   MIN &   254.66 &   311.50 &   392.33 &    932.73 &    1995.06 &     5383.13 \\
 100.00 &   0.25 & 1000 &    DP &   PoE &   254.66 &   311.50 &   392.33 &    932.73 &    1995.06 &     5383.13 \\
 100.00 &   0.25 & 1000 &   NGG &   MIN &    49.61 &    62.68 &   105.96 &    218.71 &     291.27 &      636.25 \\
 100.00 &   0.25 & 1000 &   NGG &   PoE &   153.42 &   182.34 &   231.03 &    520.43 &    1139.86 &     2548.97 \\ \midrule
 100.00 &   0.75 & 1000 &   CMS &   CMS &   256.53 &   265.52 &   299.19 &    426.94 &     863.24 &     1934.90 \\
 100.00 &   0.75 & 1000 & D-CMS & D-CMS &  8640.83 &  8629.88 &  8595.19 &   8459.97 &    8001.45 &     6977.91 \\
 100.00 &   0.75 & 1000 &    DP &   MIN &   131.45 &   135.05 &   151.14 &    210.71 &     372.15 &      699.64 \\
 100.00 &   0.75 & 1000 &    DP &   PoE &   131.45 &   135.05 &   151.14 &    210.71 &     372.15 &      699.64 \\
 100.00 &   0.75 & 1000 &   NGG &   MIN &    43.88 &    44.33 &    42.68 &     41.78 &      67.05 &      665.09 \\
 100.00 &   0.75 & 1000 &   NGG &   PoE &    48.03 &    48.57 &    48.45 &     49.38 &      60.60 &      596.49 \\ \midrule
1000.00 &   0.25 & 1000 &   CMS &   CMS &   256.59 &   265.99 &   299.63 &    428.07 &     784.99 &     1316.24 \\
1000.00 &   0.25 & 1000 & D-CMS & D-CMS &  3460.71 &  3452.05 &  3414.99 &   3278.11 &    2915.43 &     2387.37 \\
1000.00 &   0.25 & 1000 &    DP &   MIN &   139.89 &   144.02 &   161.81 &    220.43 &     394.43 &      586.98 \\
1000.00 &   0.25 & 1000 &    DP &   PoE &   139.89 &   144.02 &   161.81 &    220.43 &     394.43 &      586.98 \\
1000.00 &   0.25 & 1000 &   NGG &   MIN &    57.35 &    56.84 &    56.09 &     58.93 &      73.57 &      292.81 \\
1000.00 &   0.25 & 1000 &   NGG &   PoE &    66.98 &    68.09 &    69.05 &     80.35 &      99.14 &      164.05 \\ \midrule
1000.00 &   0.75 & 1000 &   CMS &   CMS &   251.62 &   254.17 &   264.29 &    299.62 &     399.55 &      572.01 \\
1000.00 &   0.75 & 1000 & D-CMS & D-CMS &  1572.53 &  1570.56 &  1562.70 &   1526.27 &    1430.70 &     1209.89 \\
1000.00 &   0.75 & 1000 &    DP &   MIN &    55.26 &    54.67 &    51.81 &     44.84 &      52.95 &      178.27 \\
1000.00 &   0.75 & 1000 &    DP &   PoE &    55.26 &    54.67 &    51.81 &     44.84 &      52.95 &      178.27 \\
1000.00 &   0.75 & 1000 &   NGG &   MIN &    39.16 &    38.57 &    34.60 &     25.54 &      50.92 &      269.60 \\
1000.00 &   0.75 & 1000 &   NGG &   PoE &    39.36 &    38.79 &    34.99 &     26.26 &      50.79 &      256.25 \\
\bottomrule
\end{tabular}}
\caption{Mean Absolute Errors for the frequency recovery simulation setup in Section~\ref{sec:ex_multihash} for data generated from a PYP, $J=1000$, $M=1$}
    \label{tab:freq_multihash_J1000}
\end{table}


\FloatBarrier

\end{document}